\DeclareMathOperator*{\argmin}{arg\,min}
\newtheorem{theorem}{Theorem}
\newtheorem{lemma}{Lemma}
\newtheorem{definition}{Definition}
\newtheorem{proposition}{Proposition}
\newtheorem{remark}{Remark}
\newtheorem{fact}{Fact}
\newcommand\norm[1]{\left\lVert#1\right\rVert}
\newcommand{\nosemic}{\renewcommand{\@endalgocfline}{\relax}}% Drop semi-colon ;
\newcommand{\dosemic}{\renewcommand{\@endalgocfline}{\algocf@endline}}% Reinstate semi-colon ;
\let\oldnl\nl% Store \nl in \oldnl
\newcommand{\nonl}{\renewcommand{\nl}{\let\nl\oldnl}}% Remove line number for one line
\newcommand\semiHuge{\fontsize{20.72}{27.38}\selectfont}
\begin{document}

% 	\title{Power-Aware Stochastic Drone/UAV-assisted Surveillance: Designing Cost-Effective Random Walks on Weighted Graphs with Return Constraints}
\title{\semiHuge{Energy-Aware Stochastic UAV-Assisted Surveillance}}

	\author{Seyyedali Hosseinalipour,~\IEEEmembership{Member,~IEEE,} Ali Rahmati, \IEEEmembership{Student~Member,~IEEE,} Do~Young~Eun,~\IEEEmembership{Senior~Member,~IEEE,} and~Huaiyu~Dai,~\IEEEmembership{Fellow,~IEEE}
	\IEEEcompsocitemizethanks{\IEEEcompsocthanksitem{S. Hosseinalipour is with School of Electrical and Computer Engineering, Purdue University, West Lafayette, IN, USA. email:\{hosseina@purdue.edu\}. This work was completed while he was with the Department
of Electrical and Computer Engineering, North Carolina State University, Raleigh,
NC, USA.}\IEEEcompsocthanksitem{A. Rahmati, D. Y. Eun, and H. Dai are with the Department
of Electrical and Computer Engineering, North Carolina State University, Raleigh,
NC, USA. e-mail: \{arahmat,dyeun,hdai}@ncsu.edu\}.
\IEEEcompsocthanksitem{This work was supported by the US National Science Foundation under grant CNS-1824518.}
}}% <-this %
	
	% The paper headers
	%\markboth{Journal of \LaTeX\ Class Files,~Vol.~14, No.~8, August~2015}%
	%{Shell \MakeLowercase{\textit{et al.}}: Bare Demo of IEEEtran.cls for IEEE Journals}

	% make the title area
	\maketitle

	% As a general rule, do not put math, special symbols or citations

	\begin{abstract}
With the ease of deployment, capabilities of evading the jammers and obscuring their existence,	unmanned aerial vehicles (UAVs) are one of the most suitable candidates to perform surveillance. There exists a body of literature in which the inspectors follow a deterministic trajectory to conduct surveillance, which results in a predictable environment for malicious  entities. Thus, introducing randomness to the surveillance is of particular interest. In this work, we propose a novel framework for stochastic UAV-assisted surveillance that i) inherently considers the battery constraints of the UAVs, ii) proposes random moving patterns modeled via random walks, and iii) adds another degree of randomness to the system via considering probabilistic inspections. We formulate the problem of interest, i.e., obtaining the energy-efficient random walk and inspection policies of the UAVs subject to probabilistic constraints on inspection criteria of the sites and battery consumption of the UAVs, which turns out to be signomial programming that is highly non-convex. To solve it, we propose a centralized and a distributed algorithm along with their performance guarantee. This work contributes to both UAV-assisted surveillance and classic random walk literature by designing random walks with random inspection policies on weighted graphs with energy limited random walkers.
	\end{abstract}
  \vspace{-1mm}
	% Note that keywords are not normally used for peerreview papers.
	\begin{IEEEkeywords}
	 Unmanned aerial vehicles (UAVs), surveillance, random walks, energy-aware design, Markov chains.
	\end{IEEEkeywords}

	\IEEEpeerreviewmaketitle

% \IEEEdisplaynontitleabstractindextext

% \IEEEpeerreviewmaketitle
	
        %%%%%%%%%%%%%%%%%%%%%%%%%%%%%%%%%%%%%%%%%%%%%%%%
    %%%%%%%%%%%%%%%%%%%%%%%%%%%%%%%%%%%%%%%%%%%%%%%%
    %%%%%%%%%%%%%%%%%%%%%%%%%%%%%%%%%%%%%%%%%%%%%%%%%
    %%%%%%%%%%%%%%%%%%%%%%%%%%%%%%%%%%%%%%%%%%%%%%%%%%
%   \vspace{-9mm}
\section{Introduction}

% {\color{red}(Haj Ali: Haji bizahmat ye paragraph e avval bezar injaa ke kollan mozoo ro tozoh bede, bege UAV khube o inaa baad bege Surveialnce chiye baad  az applicaiton e UAV bege toosh ke masalan kheili match e khubioye UAV baa surveilance. Masalan ghashang 10-12 khat bashe. Baad kollan ye subsection bashe Related work. Man khodam akhar oonjoori ke Eun gofte bood splitesh mikonam ;)) Bizahmat text e intro am kollan ye revise bokon ke jomlehaa consistent va flowsh khub bashe. MerC ;)} 
% {\color{blue}{Roo chesham, sorry for delay :)}}
  \noindent \IEEEPARstart{R}{ecently},  unmanned aerial vehicles (UAVs) have attracted lots of attention due to their low cost and flexibility of deployment~\cite{zeng2019accessing,gupta2015survey,fotouhi2019survey,hosseinalipour2020federated}. With the recent advances, UAVs are assumed as a  promising alternative for ground robots to conduct surveillance in various applications~\cite{gu2018multiple}.  As compared to surveillance conducted using ground vehicles or robots, the UAV surveillance assumes the following advantages: (i) the UAVs can move at higher speeds, which results in shorter turnaround times between the targeted sites; (ii) the UAVs can fly at higher altitudes, making them less susceptible to interference and attacks from terrestrial entities; (iii) deployment of UAVs is more flexible (requiring no preexisting roads and road construction) and appealing for hard-to-reach areas (e.g., dense forests and seas/oceans); (iv) UAVs possess high agility and can instantly change their trajectories upon request; (v) UAVs can be deployed over long distances in autonomous manner with low risk of physical accidents (e.g., road hazard and collision in case of autonomous ground vehicles).
  
  The UAV-assisted surveillance  includes collecting information, typically images or videos, about specific targets. 
    In recent literature, the UAV is
utilized as a flying camera over a given area following a predefined deterministic trajectory, which is already optimized with respect to the given network constraints. In particular, such systems consist of
several checkpoints that UAVs are required to visit in their routes to the target location~\cite{IntroDet1, IntroDet2}. 
However, having deterministic trajectories for the UAVs leads to multiple security- and privacy-related concerns. In particular, the malicious entities can  predict the exact locations of UAVs and also the visiting/inspection times of their areas of interest, using which they can achieve their goals while remaining unnoticed~\cite{beard2006decentralized  , yue2018software,leu2019survivable}. Thus, it is of high importance and interest to leverage stochastic movement in surveillance. Another key factor to consider is that UAVs' limited battery capacity does not allow them to fly for an unlimited period of time~\cite{motlagh2017uav}. Hence, the designed trajectories should be both unpredictable and energy efficient. 

   \subsection{Related Works}\label{relatedWork}
 In a large body of existing UAV-assisted literature, the trajectory/location of the UAVs are optimized subject to network constraints, where the UAVs are considered as relaying nodes forwarding data. Considering multiple static UAVs, optimal UAV locations are derived in \cite{8424236} through maximizing the data rate. We studied the optimal position planning of UAV relays considering the effect of interference in the environment \cite{hosseinalipour2019interference, hosseinalipour2019interference1}.  A UAV-assisted communication scheme is proposed in~\cite{zhang2018joint}, where the UAV trajectory, and the transmit power of both the UAV and the mobile device are obtained to minimize
the outage probability. Taking  advantage of the inherent mobility feature of the UAVs,  an adaptive interference avoidance position planning scheme is developed in~\cite{rahmati2019interference}. Furthermore, trajectory design and path planning is studied with respect to power control in multi-UAV systems~\cite{wu2018joint}, minimizing the energy for wireless transfer~(WPT)-enabled UAVs~\cite{xu2018uav}, and search and localization \cite{tisdale2009autonomous}. 
% In
%   \cite{wu2018joint},  a multi-UAV enabled wireless
% communication system is considered where multiple UAV-mounted aerial base
% stations are employed to serve a group of users on the ground.
% The 
% minimum throughput over all ground users is maximized in the downlink
% communication by optimizing the multiuser communication
% scheduling and association jointly with the UAV’s trajectory
% and power control.
% The work in
% \cite{zeng2017energy}   provided an analytical framework for minimizing
% the energy consumption of a fixed-wing UAV by determining
% the optimal trajectory of the UAV. The optimal trajectory of the UAVs is obtain assuming (UAV)-enabled wireless power transfer (WPT) system in \cite{xu2018uav} as well. In  \cite{tisdale2009autonomous}, an online
% path planning framework for cooperative search and localization
% using unmanned aerial vehicles.
Relevant works concerning the usage of UAVs in surveillance application include multi-UAV surveillance in complex urban
environments with occlusions \cite{semsch2009autonomous}, low cost vision-based indoor UAV autonomous patrolling \cite{lee2015autonomous}, and cooperative perimeter surveillance \cite{kingston2008decentralized}. Moreover, in  
 \cite{li2019energy}, an energy efficient UAV surveillance scenario is investigated, where a
 proactive eavesdropping scheme  is proposed
to facilitate the eavesdropping and jamming for
the legitimate UAV  to maximize the amount of packets eavesdropped from the suspicious UAVs' communication. 
In \cite{8754781}, the authors aim to maximize the average surveillance
 rate by optimizing the position and the jamming power of the legitimate UAV monitor. 
 %As in practice, the monitoring is difficult to know the channels of
% suspicious links, the authors only assume the channel
% from the monitor to the suspicious destination is known and
% derive the optimal jamming power.
% the authors show that how the
% patrolling task can be modeled as a reinforcement
% learning  problem, allowing continuous and
% automatic adaptation of the agents’ strategies to their
% environment. We demonstrate that an efficient
% cooperative behavior can be achieved by using RL
% methods, such as Q-Learning, to train individual agents.
% , the problem of monitoringa known set of stationary features (or locations of interest)
% in an environment. Since each feature changes over time, robots must repeatedly visit the features to update
% their observations. The goal is  to route the robots so as to minimize
% the maximum change of a feature between observations. A lower bound on the maximum change
% of a feature between visits is proposed, and  a robot control policy
% is developed which performs within a factor of two of
% the optimal.
There exists a body of literature on robotic patrolling including  a realistic model of robot motion control with velocity uncertainties~\cite{elmaliach2008realistic}, reinforcement learning to achieve efficient cooperative behavior among the agents~\cite{santana2004multi}, and monitoring the locations of interest~\cite{smith2010multi}.
In  \cite{pasqualetti2012cooperative},  the concept of refresh time
and latency of a team trajectory is introduced, where a procedure is proposed to build
a road-map to represent the topological structure of the
area to be patrolled. In \cite{sak2008probabilistic},   the problem of obtaining the visiting sequence of the nodes for multiple homogeneous agents is studied considering three different types of intruders. The majority of this literature focuses on studying the traveling salesman problem in different contexts. The main difference between our work and all the mentioned works lies in designing stochastic yet energy-efficient movement and inspection policies, which results in an unpredictable and secure surveillance design.
 Finally, a body of literature is devoted to studying different applications of random walks including community detection, spectral algorithms for independent sampling \cite{spitzer2013principles, lambiotte2014random, gkantsidis2004random}, and data gathering~\cite{huang2018cost}. 
% \vspace{-4mm}
\subsection{Novelty and Contributions}
% \vspace{-1mm}
In this paper, we propose a new model for energy-aware stochastic UAV-assisted surveillance, leveraging random walks and stochastic inspection policies. The goal is to design an optimal stochastic kernel for random walks and the inspection policies of multiple UAVs with limited battery capacities to minimize the long-term average of their energy consumption,  while addressing the constraints on i) satisfying the desired long-term inspection criteria of the sites, and ii) providing a guarantee for the UAVs to return to their base nodes for battery recharging. While ensuring that the aforementioned stochastic constraints are satisfied with high probabilities, we transform this stochastic problem into obtaining the optimal Markov chain transition matrices and the inspection policies of the UAVs. We show that the resulting problem  belongs to the category of highly non-convex signomial programming, which is in general intractable. To tackle this challenge, we first propose an optimal centralized algorithm that approximates the problem as a series of \textit{geometric programming} problems.  We demonstrate that the centralized algorithm suffers from the \textit{curse of dimensionality} upon utilizing a large number of UAVs on a map with a large number of sites. We subsequently develop a consensus-based distributed algorithm combining the dual decomposition method, the average consensus algorithm, and the gradient consensus technique.

Our solution has the following characteristics: i) it explicitly considers the limited battery capacity of the UAVs and provides a reliable stochastic surveillance, in which the UAVs can return to their bases   before their batteries run out with high probability;  ii) it gives rise to stochastic trajectories and inspection policies, which renders exact predictions by malicious entities less likely; iii) it is energy-efficient, minimizing the  long-term average energy consumption of the UAVs;  and iv) it leads to \textit{soft partitioning} of the map, different from the current map partitioning approaches (e.g.,~\cite{lin2012integrating, sastry2006instrumenting}) with disjoint islands. It forms different stochastic movement patterns for the UAVs that, in general, have partial overlaps. 
Our contributions are summarized below:

\begin{itemize}[leftmargin=*]
\vspace{-1mm}
    \item We propose a novel framework for stochastic UAV-assisted surveillance that considers \textit{random-walk movement} patterns and \textit{probabilistic inspection} patterns. Our framework explicitly considers the limited battery capacities of the UAVs and allows reliable return to their base nodes.
    \item  We formulate the  problem of joint random walk and inspection policy design (in a non-Euclidean space, i.e., a graph structure) under  probabilistic constraints as an optimization problem,  which turns out to be a highly non-convex \textit{signomial programming}  problem.
    \item  We tackle the problem by proposing a centralized algorithm based on iterative \textit{geometric programming} approximation, which exploits both the method of condensation using monomial approximations and penalty functions. We theoretically investigate the optimality of our algorithm.
    \item Given the drawbacks of the centralized algorithm, in particular  the \textit{curse of dimensionality}, we propose a \textit{distributed consensus-based algorithm}. The cornerstone of our algorithm is a combination of the dual decomposition method, the average consensus algorithm, and the gradient consensus technique. We also investigate the convergence of our distributed algorithm and demonstrate its optimality.
\end{itemize}

 It is worth noting that our proposed framework is general enough to be deployed for other applications concerned with designing energy-aware random walks, where UAV-assisted surveillance is only one use case.
More precisely, we are among the first to investigate  the design of energy-efficient random walks and inspection policies for multiple random walkers on a graph structure to satisfy the desired inspection criteria of the nodes, where each random walker has limited regenerative energy that gets renewed upon visiting its home base. The energy can also be interpreted as life-time or cost/budget in other applications such as data collection.

 The rest of this paper is structured as follows: the system model is presented in Section~\ref{sysmo}. Section~\ref{sec:markovanalysis} contains the Markov chain analysis of the stochastic movement and inspection policies and the problem formulation. The  centralized and consensus-based distributed algorithms are proposed in Section~\ref{sec:problemFormualtion}. Simulation results are presented in Section~\ref{sec:sim}. Finally, Section~\ref{sec:conc} concludes the paper and provides future directions.

% \vspace{-3.0mm}
\section{System Model}\label{sysmo}
% \vspace{-1mm}
\subsection{Network Model and Stochastic Inspections}
% \vspace{-0.75mm}
We consider designing a UAV-assisted surveillance network for a map consisting of multiple sites. Let $G=(\mathcal{V},\mathcal{E},w)$ denote the corresponding (weighted) network graph, where $\mathcal{V}=\{v_1,\cdots,v_{|\mathcal{V}|}\}$\footnote{The symbol $|.|$ denotes the cardinality of the set.} denotes the set of nodes (sites),  $\mathcal{E}$ denotes the set of edges, and $w: \mathcal{V}^2 \rightarrow \mathbb{R}^+$ is the weight function defined for nodes $v_i$ and $v_j$ as a function of the distance between the nodes and UAV parameters, which is further discussed in Appendix~\ref{Sec:costModel}.
\footnote{The weight $w(v_i, v_j)$ can also be interpreted as the cost of transition between sites $v_i$ and $v_j$.} For each node $v_i\in \mathcal{V}$, we denote by $\pi_i >0$  the required \textit{inspection criterion} on node $v_i$ and define it as the desired expected number of UAVs inspecting that site per unit time (in the steady state)\footnote{For instance, if $v_i$ desires to get inspected by $1$ UAV every $2$ time instance on average $\pi_i=0.5$. Our formulation also accommodates the scenarios where $\pi_i>1$ for more strict security purposes (e.g., when it is desired to get inspected by more than one UAV per time instant) and it does not require to normalize $\pi_i$-s. We ignore the trivial case of $\pi_i=0$ since in that case node $v_i$ needs no surveillance and can be excluded.}, and consider $\bm{\pi}= \left[\pi_1,\cdots,\pi_{|\mathcal{V}|}\right]$.

We consider a set of UAVs $\mathcal{U}=\{u_1,\cdots,u_{|\mathcal{U}|}\}$ dedicated to conduct the surveillance, and assume $|\mathcal{U}|<|\mathcal{V}|$ to avoid triviality. Each UAV is associated with a \textit{base node (home base)}, to which the UAV returns after each trip to deliver its collected data and recharge its battery. Let $v_{b^i}\in \mathcal{V}$ and $\varphi_i>0$ denote the base node and the full capacity of the battery of UAV $u_i$, respectively.  Upon arriving at a site, a UAV can perform either of the following two actions: i) inspecting the site via turning on its sensors and cameras, ii) passing the site without conducting inspection. Thus, in our model, the energy consumption of a UAV is mainly due to the following two factors: i) maintaining levitation and physical movement between the sites, and ii) turning on the mounted sensors for inspection and data collection.  For node $v_j$ and UAV $u_i$, let $I^{(i)}_j$ denote a Bernoulli random variable, where $I^{(i)}_j=1$ with probability $\iota^{(i)}_{j}\in (0,1]$ indicates turning on the sensing devices. Upon arriving at site $v_j$, UAV $u_i$ inspects the site with probability $\iota^{(i)}_{j}$ (and does not inspect with probability $1-\iota^{(i)}_{j}$), independently of all others. We define $\bm{\iota}_i=[\iota^{(i)}_{1},\cdots,\iota^{(i)}_{|\mathcal{V}|}]$ and call it as \textit{inspection policy} of UAV $u_i$.  Let $\psi^{(i)}_{j}>0$ denote the energy consumption of data collection, i.e., utilizing the camera and sensing devices, of UAV $u_i $ at node $v_j$, which can vary from one UAV to another due to heterogeneous attributes of the UAVs.
 We refer to Fig.~\ref{fig:sysmod} for illustrations. 
% \vspace{-4.5mm}
\subsection{Stochastic Movement of the UAVs}\label{sec:randomwalks}

 Let $X_i(t)\in \mathcal{V}$, $t=1, 2, \cdots$ be the position of UAV $u_i$ at time $t$. In our framework, $X_i(t)$ is taken to be a Markov chain with its \textit{transition matrix} $\bm{P}_i=[p^{(i)}_{kj}]_{1\leq k,j\leq |\mathcal{V}|}$, where $p^{(i)}_{kj}=\textrm{Pr}(X_i(t+1)=v_k|X_i(t)=v_j)$ is the probability of transitioning to node $v_k$ from node $v_j$.\footnote{The amount of time in transversal, which depends  on the actual distance between various nodes, is captured in different edge weights, representing the required energy spent for this traversal (see Appendix~\ref{Sec:costModel}). It can be verified that from the energy consumption perspective, skewing the notion of time in that manner has no effect on the analysis.}  The Markov chains of our interest are ergodic chains that admit unique stationary distributions, denoted by $\bm{\gamma}_{i}=[\gamma^{(i)}_1,\cdots,\gamma^{(i)}_{|\mathcal{V}|}]$ for UAV $u_i$ satisfying $\bm{\gamma}_{i} \bm{P}_i=\bm{\gamma}_{i}$. 
%  {\color{blue} During the surveillance, the inspection of node $v_j$ by UAV $u_i$ only occurs when both of the following conditions are satisfied: (i) the UAV $u_i$ is located at site $v_j$; (ii) the sensing devices are turned on at that particular visit for the respective UAV. }
%  We assume that random variable $I^{(i)}_j$, $\forall j$, is independent of the corresponding UAV's trajectory, i.e., $p(I^{(i)}_j=a,X_i(t)=b)=p(I^{(i)}_j=a)p(X_i(t)=b), \forall i,j,a,b,t$ (inherently $a\in \{0,1\}$).
As will be seen later, our problem is to find the optimal \textit{transition matrices} and \textit{inspection policies} of  the UAVs to minimize their long-term average expected energy consumption while satisfying the constraints on the desired inspection criteria of the sites and the battery capacities of the UAVs. Note that the resulting trajectories of the UAVs and the inspection policies are stochastic, which, considering the existence of malicious entities, are actually desired.

% Note that although we assume the transition between the nodes/sites occurs in one time instance, the total energy of the movement between the nodes is condensed in the weight of the corresponding edge. For example, if the transition between two nodes $v_i$ and $v_j$ takes $T_{ij}$ time instance and the total energy consumption of the UAV during the transition is $E_{ij}$, which is a function of $T_{ij}$ , we assume that $w(v_i,v_j)=E_{ij}$, and thus the UAV transits in one time instance between those nodes, where $w(v_i,v_j)$ is the amount of energy consumption. It can verified that from the energy consumption perspective, skewing the notion of time in that manner has no effect on the analysis. For an experienced reader in the field of Markov processes, this may resemble the so-called embedded Markov chains, (e.g., see Chapter 12 of \cite{caswell2019sensitivity}).
% \begin{remark}
% We are not concerned with the presence of the UAVs in the route between the sites; rather we are interested in the appearance of the UAVs at different sites. Thus, $\gamma^{(i)}_j$ is the probability of UAV $u_i$ entering site $v_j$, which can be obtained by counting the number of times $u_i$ enters $v_j$ divided by the number of times the UAV enters any of the map sites upon observing the system for a long time. Similarly, for node $v_j$, $\pi_j$ is the desired probability of UAVs entering to this site while intending to inspect it.
% \end{remark}

\begin{figure*}[t]
\vspace{-1mm}
    \centering\includegraphics[width=0.84\linewidth, height=0.21\linewidth]{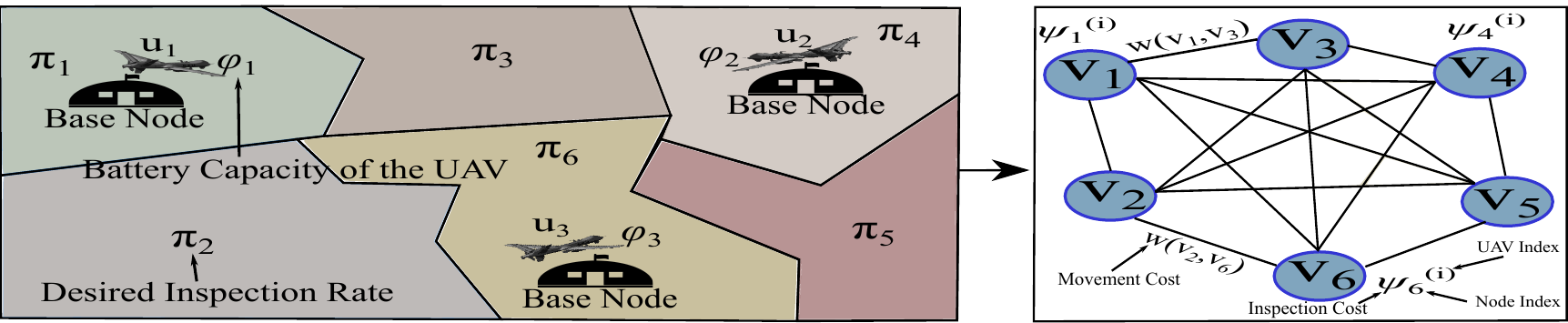}
    \caption{Geographical map of a UAV-assisted surveillance network (left) and the corresponding graph representation (right). Repetitive notations are omitted for a better readability.  By setting $w(v_i,v_j)$ of two non-adjacent nodes $v_i$ and $v_j$  large ($\to \infty$, or equivalently infinite cost) we can ensure that the resulting energy-constrained UAVs never transit between them.
   }
    \label{fig:sysmod}
     \vspace{-3mm}
\end{figure*}

    \begin{table}    
{\footnotesize
\caption{Major notations.}
    \begin{center}
\begin{tabular}{ |c |c|} 
 \hline
 Symbol& Definition\\ \hline
 $G$ & The network graph  \\ \hline
 $\mathcal{V}$ & The set of nodes/sites in the network \\ \hline
 $\mathcal{E}$ & The set of edges of the network graph \\ \hline
 $w$ & The weight function  \\ \hline
 $v_{b^i}$ & The base node of UAV $u_i\in \mathcal{U}$ \\ \hline
 $\bm{\pi}$ &The desired inspection criteria of the sites \\ \hline
 $\varphi_i$ & The full capacity of the battery of UAV $u_i$ \\ \hline
 $\mathcal{U}$ &The set of UAVs in the network\\ \hline
 $\bm{\gamma}_i$&The stationary distribution of movement of UAV $u^i\in \mathcal{U}$\\\hline
 $\bm{P}_i$&The transition matrix of UAV $u^i\in \mathcal{U}$\\ \hline
 $\psi^{(i)}_{j}$ & Energy of data collection at node $v_j\in\mathcal{V}$ for UAV $u_i\in \mathcal{U}$\\ \hline 
 $X_i(t)$& The location/position of UAV $u^i\in \mathcal{U}$ at time $t$.\\ \hline
 ${I}^{(i)}_{j}$& The Bernoulli random variable indicating the inspection\\&  of location $v_j\in\mathcal{V}$ for UAV $u^i\in \mathcal{U}$.\\ \hline
 $T^{(i)}_{+}$& A random variable corresponding to \\&the return time of the UAV $u_i$ to its base node.\\\hline
   $\bm{\iota}_{i}$ & The vector of inspection policy of UAV $u_i$ \\\hline
  $\iota^{(i)}_{j}$  & The probability of turning on the mounted sensors of $u_{i}\in \mathcal{U}$\\& for inspecting $v_j\in\mathcal{V}$; $\iota^{(i)}_{j}= p(I^{(i)}_{j}=1)$\\\hline
  $p^{(i)}_{jk}$ &  Transition probability from node $v_j\in\mathcal{V}$ \\&to node $v_k\in\mathcal{V}$ for UAV $u_{i}\in \mathcal{U}$\\\hline
  $\hat{\theta}_j$ &  Tuning  parameter  controlling  the  amount  of  violation\\& of inspection criterion for node $v_j\in\mathcal{V}$\\\hline
  $\tilde{\theta}_i$ &  Tuning  parameter  controlling  the  amount  of  violation\\& of battery capacity of UAV $u_{i}\in \mathcal{U}$\\\hline
\end{tabular}
\end{center}
}
\end{table}
\section{Probabilistic  Analysis of the Surveillance}\label{sec:markovanalysis} \label{sec:method}
% \vspace{-1mm}
\subsection{Reliable Stochastic Surveillance with Energy Constraints}\label{subsec:visitRate}
A feasible surveillance framework should fulfil the following requirements: i) the desired inspection criteria of the nodes should be satisfied; ii) the UAVs should have enough energy stored in their batteries to perform the surveillance and come back to their bases. Due to the inherent randomness in the movement of the UAVs and the inspections patterns of the sites, these constraints are of probabilistic nature. We first  state these constraints and then convert them into tractable mathematical expressions to be used in our optimization framework later on.

First, assuming the Markov chain associated with the random walks to be in its stationary regime,  satisfying the inspection criteria of the nodes can be expressed via upper bounding the probability of violation of the inspection criterion  of each node:
  \begin{equation}\label{eq:fo2}
 \textrm{Pr} \left( \sum_{i=1}^{|\mathcal{U}|} \gamma^{(i)}_j I^{(i)}_j \leq \pi_j\right) \leq \hat{\theta}_j,~\forall v_j\in \mathcal{V},
\end{equation}
 where $\hat{\theta}_j \in (0,1]$ is the tuning parameter controlling the amount of violation. Inequality \eqref{eq:fo2} implies that the rate of inspection of site $v_j$, i.e., the average number of UAVs inspecting it per unit time, is larger than $\pi_j$ with high probability (at least $1-\hat{\theta}_j$). Second, to have reliable surveillance, each UAV should be able to visit its base node for battery recharge before its battery depletion. For each UAV, we consider the random time span between its departure from and return to its base node as a \textit{surveillance cycle}. We bound the probability of exceeding the battery capacity of each UAV for each surveillance cycle:
          \begin{equation}\label{eq:for4}
          \begin{aligned}
     \textrm{Pr}\bigg(\sum_{t=1}^{T^{(i)}_{+}}w(& X_i(t), X_i(t+1))\\&+ \sum_{t=1}^{T^{(i)}_{+}}\psi^{(i)}_{X_i(t)} I^{(i)}_{X_i(t)} \geq \varphi_i \bigg)\leq \tilde{\theta}_i,~\forall u_i\in\mathcal{U},
    \end{aligned}
  \end{equation}
  where the first and the second terms inside the probability denote the total  energy consumed for the movement and for the sensing per surveillance cycle, respectively. In~\eqref{eq:for4}, $\tilde{\theta}_i \in (0,1]$ is the tuning parameter controlling the tolerable amount of violation, and $T^{(i)}_{+}$ is a random variable corresponding to the \textit{return time} of UAV $u_i$ to its base node, $T^{(i)}_{+}= \textrm{min} \{n\geq 1: X_i(n)= v_{b^i},~X_i(0)=v_{b^i}\}$.   Inequality \eqref{eq:for4} implies that the energy consumption of each UAV $u_i$ during a surveillance cycle is less than $\varphi_i$ with high probability (at least $1-\tilde{\theta}_i$).
  
 According to the strong Markov property, successive returns to a given site forms a renewal process. The energy associated with the movement of the UAV, the energy associated with using the sensing devices, or any other possible action that UAV may take during a surveillance cycle, e.g., sending and receiving data from some base stations, as the \textit{reward} during the surveillance cycle. This draws a connection between the scenario considered in this paper and the framework of the \textit{reward process} used to derive a tractable expression for~\eqref{eq:for4}.

\begin{theorem}\label{th:main}
The sufficient conditions to satisfy the probabilistic constraints given by~\eqref{eq:fo2} and \eqref{eq:for4} can be expressed based on the  stationary distribution of the Markov chains, the transition  matrices, and the inspection probabilities of the UAVs. In particular, \eqref{eq:fo2} can be transformed to:\footnote{Satisfying the inspection criteria of the nodes could also be expressed as: $  E\left[ \sum_{i=1}^{|\mathcal{U}|} \gamma^{(i)}_j I^{(i)}_j\right] \geq \pi_j,$  $\forall v_j \in \mathcal{V}$ or equivalently: $
 \sum_{i=1}^{|\mathcal{U}|}\gamma^{(i)}_j \iota^{(i)}_{j}  \geq \pi_j$, resulting in a looser bound coinciding with \eqref{ProbConst1} for $\hat{\theta}_j=1$.\label{Foot:1}}
 \begin{equation}\label{ProbConst1}
\sum_{i=1}^{|\mathcal{U}|} \gamma^{(i)}_j\iota^{(i)}_{j} \geq  \hat{\theta}_j \pi_j + \left(1-\hat{\theta}_j\right)\sum_{i=1}^{|\mathcal{U}|} \gamma^{(i)}_j, ~\forall v_j\in \mathcal{V},
 \end{equation}
and \eqref{eq:for4} can be expressed as, $\forall u_i\in\mathcal{U}$:\footnote{Constraint~\eqref{eq:for4} could also be represented as a bound on the expected value as: $E\left[\sum_{t=1}^{T^{(i)}_{+}}w(X_i(t),X_i(t+1)) +\psi^{(i)}_{X_i(t)}I^{(i)}_{X_i(t)}\right] \leq \varphi_i$, $\forall u_i\in\mathcal{U}$, the result of which is a looser bound that coincides with~\eqref{eq:battr} when $\tilde{\theta}_i=1$.}
  \begin{equation}\label{eq:battr}
              \sum_{v_j\in \mathcal{V}} \frac{\gamma^{(i)}_j}{\gamma^{(i)}_{b^i}} \sum_{v_k\in \mathcal{V}} p^{(i)}_{jk}w(v_j,v_k)+\sum_{v_j\in\mathcal{V} } \frac{\gamma^{(i)}_j}{\gamma^{(i)}_{b^i}} \psi^{(i)}_{j} \iota^{(i)}_{j} \leq \varphi_i \tilde{\theta}_i.\hspace{-1mm} 
        \end{equation}
\end{theorem}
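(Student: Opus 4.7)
The plan is to handle the two inequalities separately, since they have qualitatively different structures: \eqref{eq:fo2} is a small-deviation probability bound and \eqref{eq:for4} is a large-deviation probability bound. For both, I would seek a \emph{sufficient} condition by replacing the probability with a moment-based bound (Markov or its reverse) and then rearranging the resulting inequality in terms of $\gamma^{(i)}_j$, $\iota^{(i)}_j$, and $p^{(i)}_{jk}$.

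For \eqref{eq:fo2}, the key observation is that $Y_j \triangleq \sum_{i=1}^{|\mathcal{U}|} \gamma^{(i)}_j I^{(i)}_j$ is a bounded non-negative random variable supported in $[0, M_j]$ with $M_j = \sum_{i=1}^{|\mathcal{U}|} \gamma^{(i)}_j$, since each $I^{(i)}_j \in \{0,1\}$. I would apply the reverse Markov inequality to $M_j - Y_j \geq 0$ to obtain $\textrm{Pr}(Y_j \leq \pi_j) = \textrm{Pr}(M_j - Y_j \geq M_j - \pi_j) \leq (M_j - E[Y_j])/(M_j - \pi_j)$. Using the independence of the $I^{(i)}_j$'s across UAVs to write $E[Y_j] = \sum_i \gamma^{(i)}_j \iota^{(i)}_j$, and requiring the resulting right-hand side to be at most $\hat{\theta}_j$, a direct rearrangement recovers \eqref{ProbConst1}. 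The only subtlety is arguing that $M_j > \pi_j$ is reasonable in the feasible regime so the bound is non-vacuous; otherwise the constraint is automatically infeasible.

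For \eqref{eq:for4}, I would first apply the standard Markov inequality to the non-negative random variable
\begin{equation*}
Z_i \triangleq \sum_{t=1}^{T^{(i)}_{+}} w(X_i(t), X_i(t+1)) + \sum_{t=1}^{T^{(i)}_{+}} \psi^{(i)}_{X_i(t)} I^{(i)}_{X_i(t)},
\end{equation*}
yielding the sufficient condition $E[Z_i] \leq \varphi_i \tilde{\theta}_i$. The main work is then to compute $E[Z_i]$, and this is where I would invoke the renewal-reward interpretation flagged in the paragraph preceding the theorem. Because $X_i$ is an ergodic Markov chain and $T^{(i)}_+$ is the first return time to the base state $v_{b^i}$, the strong Markov property identifies successive cycles as i.i.d.\ renewals with mean length $E[T^{(i)}_+] = 1/\gamma^{(i)}_{b^i}$. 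For any per-step reward $r(X_i(t))$ or edge reward $r(X_i(t), X_i(t+1))$, the renewal-reward identity gives $E\bigl[\sum_{t=1}^{T^{(i)}_+} r(\cdot)\bigr] = E_{\gamma_i}[r]/\gamma^{(i)}_{b^i}$. Applied to $w(X_i(t), X_i(t+1))$ this yields $\sum_{v_j} (\gamma^{(i)}_j/\gamma^{(i)}_{b^i}) \sum_{v_k} p^{(i)}_{jk} w(v_j,v_k)$, and applied to $\psi^{(i)}_{X_i(t)} I^{(i)}_{X_i(t)}$ (using independence of $I^{(i)}_j$ from the walk) it yields $\sum_{v_j}(\gamma^{(i)}_j/\gamma^{(i)}_{b^i}) \psi^{(i)}_j \iota^{(i)}_j$. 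Summing and enforcing the Markov bound produces \eqref{eq:battr}.

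The main obstacle I anticipate is making the renewal-reward step fully rigorous for the \emph{edge}-based movement reward, since the classical formulation is usually stated for state-based rewards; I would handle this by enlarging the state space to the pair $(X_i(t), X_i(t+1))$ (or by conditioning on $X_i(t)$ and invoking the Markov property for the one-step expectation $\sum_k p^{(i)}_{jk} w(v_j,v_k)$) so that the standard identity applies. Everything else reduces to linearity of expectation and the ergodicity assumption already built into the model in Section~\ref{sec:randomwalks}.
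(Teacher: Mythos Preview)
Your proposal is correct and matches the paper's proof essentially step for step: the paper also handles \eqref{eq:fo2} by applying Markov's inequality to the complementary variable $\sum_i \gamma^{(i)}_j(1-I^{(i)}_j)=M_j-Y_j$ (your ``reverse Markov'' is literally this), noting the implicit feasibility condition $\sum_i \gamma^{(i)}_j>\pi_j$; and it handles \eqref{eq:for4} by Markov's inequality on $Z_i$ combined with the mean-return-time identity $E[T^{(i)}_+]=1/\gamma^{(i)}_{b^i}$ and the renewal-reward theorem to evaluate $E[Z_i]$. The only cosmetic difference is that the paper equates the long-run time average computed two ways (stationary average vs.\ reward-per-cycle over mean cycle length) to read off $E[M_i]$ and $E[L_i]$, whereas you invoke the renewal-reward identity directly; these are the same argument.
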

\begin{proof}
The proof is provided in Appendix~\ref{App:ProbtoTrac}.
\end{proof}

Considering~\eqref{ProbConst1}, by setting $\iota^{(i)}_j=1$, $\forall i,j$, and taking the summation with respect to (w.r.t.) $j$ from both hand sides of the inequality, the necessary condition on the number of UAVs to satisfy~\eqref{ProbConst1} is given by $|\mathcal{U}|\geq\lceil \sum_{v_j\in\mathcal{V}} \pi_j\rceil$.
Considering UAV $u_i$, the long-term average movement energy of the UAV can be written based on its stationary distribution of the visits of the nodes and its transition  matrix as: $\displaystyle \lim_{T\longrightarrow \infty} \frac{1}{T} E\big[ \sum_{t=1}^{T}w\left(X_i\left(t\right),X_i\left(t+1\right)\right)\big]=\sum_{v_j\in \mathcal{V}}\sum_{v_k\in \mathcal{V}}  \gamma^{(i)}_j p^{(i)}_{jk}w(v_j,v_k)$. In a similar manner, the UAV's long-term average consumed energy for inspection of the nodes can be derived as:$\displaystyle \lim_{T\longrightarrow \infty} \frac{1}{T} E\bigg[ \sum_{t=1}^{T} \psi^{(i)}_{X_i(t)}I^{(i)}_{X_i(t)} \bigg]= \sum_{v_j\in\mathcal{V} } \gamma^{(i)}_j \psi^{(i)}_{j} \iota^{(i)}_{j}$. Thus, the \textit{long-term average consumed energy during the surveillance}, is given by:
 \begin{equation}\label{eq:ObjectiveFunc}
  \hspace{-3mm}
 \begin{aligned}
    \hspace{-35mm}\lim_{T\longrightarrow \infty} \frac{1}{T} E\bigg[& \sum_{t=1}^{T}\sum_{u_i\in\mathcal{U}} w\left(X_i\left(t\right),X_i\left(t+1\right)\right)+ \psi^{(i)}_{X_i(t)}I^{(i)}_{X_i(t)} \bigg]\\
           &\hspace{-17mm} =\sum_{u_i\in \mathcal{U}}\sum_{v_j\in \mathcal{V}} \sum_{v_k\in \mathcal{V}}\gamma^{(i)}_j  p^{(i)}_{jk}w(v_j,v_k)+\sum_{u_i\in \mathcal{U}} \sum_{v_j\in\mathcal{V} } \gamma^{(i)}_j \psi^{(i)}_{j} \iota^{(i)}_{j}.
     \end{aligned}\hspace{-15mm} 
 \end{equation} Due to the topological structure of the problem and heterogeneous base nodes and battery capacities of the UAVs, the optimal  transition  matrices and the inspection policies of the UAVs are different. In the following, we use the above results to formulate the problem of interest.
      \subsection{Problem Formulation}
Let us define the following sets: $\bm{P}=\{\bm{P}_1,\bm{P}_2,\cdots,\bm{P}_{|\mathcal{U}|}\},~\bm{\iota}=\{\bm{\iota}_1,\bm{\iota}_2,\cdots,\bm{\iota}_{|\mathcal{U}|}\},~ \bm{\gamma}=\{\bm{\gamma}_1,\bm{\gamma}_2,\cdots,\bm{\gamma}_{|\mathcal{U}|}\}$,
    where $\bm{P}_i$, $\bm{\iota}_i$, $\bm{\gamma}_i$ are defined as above for UAV $u_i$, $\forall u_i\in\mathcal{U}$.       The problem of interest is determining the movement and the inspection policies of the UAVs considering the aforementioned constraints. 
      This involves obtaining the above three sets. However, the elements of the two sets  $\bm{\gamma}$ and $ \bm{P}$ are not independent. More precisely, given a matrix  $ \bm{P}_i$, vector $\bm{\gamma}_i$ is uniquely defined, $\forall u_i\in\mathcal{U}$.\footnote{ $\bm{\gamma}_i$ is the left eigenvector of $\bm{P}_i$, $\forall u_i\in\mathcal{U}$.} As a result, we perform the following change of variables: 
     \begin{equation}~\label{eq:DeriveProb}
         q^{(i)}_{jk}= \gamma^{(i)}_j p^{(i)}_{jk},~ \forall u_i\in\mathcal{U},~ \forall v_j,v_k\in\mathcal{V}.
     \end{equation} 
     It is easy to verify the following two equations:
     \begin{equation}~\label{eq:DeriveGamma}
        \gamma^{(i)}_j =\sum_{v_k\in\mathcal{V}} q^{(i)}_{jk},~ \forall u_i\in \mathcal{U}, ~\forall v_j\in\mathcal{V},
     \end{equation}
     \begin{equation}
         \sum_{v_j\in\mathcal{V}} \sum_{v_k\in\mathcal{V}} q^{(i)}_{jk}=1,~\forall u_i\in \mathcal{U}.
     \end{equation}
     Note that $\sum_{v_k\in\mathcal{V}}p^{(i)}_{jk}=1$, $\forall v_j\in \mathcal{V}, u_i\in\mathcal{U}$, is implicitly satisfied via the above two equations.
     As a result, instead of finding two sets $\bm{P}$ and $\bm{\iota}$, we focus on finding set $\bm{q}$, where
    %  \vspace{-3mm}
     \begin{equation}
         \bm{q}=\{\bm{q}_1,\bm{q}_2,\cdots,\bm{q}_{|\mathcal{U}|}\},
     \end{equation}
     and each $\bm{q}_i=[{q}^{(i)}_{jk}]_{1\leq j,k \leq |\mathcal{V}|}$ is a matrix. For a given  $\bm{q}$,~\eqref{eq:DeriveGamma} can be used to obtain the stationary distributions of the UAVs $\bm{\gamma}$. Then, the transition  matrices of the UAVs $\bm{P}_i$-s can be obtained through~\eqref{eq:DeriveProb}.  
     Using this change of variables, with some algebraic manipulations, we formulate the \textit{energy-aware stochastic UAV-assisted surveillance} as the following optimization problem:
\begin{equation}\label{eq:originalFormualtion}
              \hspace{-2mm}
            \begin{aligned}
                  &\argmin_{\bm{q}, \bm{\iota}} \sum_{u_i\in \mathcal{U}}\sum_{v_j\in \mathcal{V}}\sum_{v_k\in \mathcal{V}} q^{(i)}_{jk} w(v_j,v_k) \\&~~~~~~~~~~~+
     \sum_{u_i\in \mathcal{U}} \sum_{v_j\in\mathcal{V} } \sum_{v_k\in\mathcal{V}} q^{(i)}_{jk} \psi^{(i)}_{j} \iota^{(i)}_{j}\\
      &\textrm{s.t.}\\
      &(\mathbf{C1})~~  \sum_{u_i\in \mathcal{U}}\sum_{v_k\in\mathcal{V}} q^{(i)}_{jk} \iota^{(i)}_{j}  \geq \hat{\theta}_j \pi_j\\&~~~~~~~~+\left(1-\hat{\theta}_j\right)\sum_{u_i\in \mathcal{U}} \sum_{v_k\in\mathcal{V}} q^{(i)}_{jk},~ \forall v_j\in \mathcal{V},\\
      &(\mathbf{C2})~~  \sum_{v_j\in\mathcal{V}} \sum_{v_k\in\mathcal{V}}q^{(i)}_{jk} w(v_j,v_k)\\&~~~~~~~~+\sum_{v_j\in\mathcal{V} } \sum_{v_k\in\mathcal{V}} q^{(i)}_{jk}\psi^{(i)}_{j} \iota^{(i)}_{j}\leq \varphi_i \tilde{\theta}_i\sum_{v_k\in\mathcal{V}} q^{(i)}_{b^i k}, \;\; \forall u_i \in \mathcal{U},\\
          &(\mathbf{C3})~~ \sum_{v_j\in \mathcal{V}} q^{(i)}_{jk} = \sum_{v_n\in\mathcal{V}} q^{(i)}_{kn}, ~\forall v_k\in
      \mathcal{V}, ~\forall u_i \in \mathcal{U},\\
      &(\mathbf{C4})~~ \sum_{v_j\in \mathcal{V}} \sum_{v_k\in\mathcal{V}} q^{(i)}_{jk} =1,~\forall u_i \in \mathcal{U},\\
     &(\mathbf{C5})~~  0< q^{(i)}_{jk},\iota^{(i)}_{j} \leq 1,~\; \forall v_j,v_k\in \mathcal{V},~ \forall u_i\in \mathcal{U}.
        \end{aligned}
        \hspace{-5mm}
        \end{equation}
 In this formulation, the objective function is the long-term average consumed energy during the surveillance given by \eqref{eq:ObjectiveFunc}. The first constraint ($\mathbf{C1}$) enforces the satisfaction of the desired inspection criteria presented in~\eqref{ProbConst1}, the second constraint ($\mathbf{C2}$) guarantees the battery consumption presented in~\eqref{eq:battr}, the third constraint ($\mathbf{C3}$) forces the stationary visiting distribution to be the left eigenvector of the transition  matrix, the forth constraint ($\mathbf{C4}$) ensures that the summation of stationary visiting distribution is equal to $1$ for each UAV, while the last constrain ($\mathbf{C5}$) is ensuring a feasible range for the variables. A noteworthy property of ($\mathbf{C2}$) is that it results in a \textit{soft  partitioning} of the map around the bases nodes of the UAVs, allowing the UAVs with small battery capacities to mostly inspect the sites located around their base nodes. Our choice of the term \textit{soft  partitioning} is due to the fact that our approach does not limit the set of nodes to do the surveillance, rather it increases the probability of surveillance for the closer nodes around the base nodes. This new perspective to the map partitioning is different from the classic map partitioning approaches in existing surveillance-related literature~\cite{lin2012integrating, sastry2006instrumenting}. Also, our method leads to a significant decrease in the number of \textit{redundant inspections}. These two facts will be further illustrated in Section~\ref{sec:simC}.
 \begin{remark}
 Note that the probabilistic problem of interest has been transformed to jointly finding a set of transition matrices of Markov chains, each of which describing the random movement of a UAV, and a set of inspection policies subject to the constraints on the inspection criteria of the nodes ($\mathbf{C1}$) and the energy consumption per surveillance cycle ($\mathbf{C2}$). To the best of our knowledge, we are among the first to propose this formulation and solve it in a systematic manner.
 \end{remark}

 \vspace{-5mm}
      \section{Optimal Random Walks and Inspection Policies}\label{sec:problemFormualtion}
\noindent  Solving the aforementioned optimization problem directly is non-trivial since the multiplication of the optimization variables exists in the formulation, e.g., between $q^{(i)}_{jk},\iota^{(i)}_{j}$ in the objective function and in the first and the second constraints. In fact, we will show that~\eqref{eq:originalFormualtion} belongs to the family of \textit{signomial programming} problems and is highly non-convex. To tackle this problem, we propose a tractable iterative approach, in which at each iteration we solve an approximation of the problem that has the format of \textit{geometric programming}~(GP). In the following, we give a brief overview of GP.
  \vspace{-5.0mm}
        \subsection{Geometric Programming}
   A basic knowledge of \textit{monomials} and \textit{posynomials}, which is given below, is a prerequisite to understand the GP. 
         \begin{definition}
         A \textbf{{monomial}} is defined as a function $f: \mathbb{R}^n_{++}\rightarrow \mathbb{R}$:\footnote{$\mathbb{R}^n_{++}$ denotes the strictly positive quadrant of $n$-dimensional Euclidean space.} $f(\bm{y})=d y_1^{\alpha_1} y_2^{\alpha_2} \cdots y_n ^{\alpha_n}$, where $d\geq 0$, $\bm{y}=[y_1,\cdots,y_n]$, and $\alpha_j\in \mathbb{R}$, $\forall j$. Further, a \textbf{posynomial} $g$ is defined as a sum of monomials: $g(\bm{y})= \sum_{m=1}^{M} d_m y_1^{\alpha^{(1)}_m} y_2^{\alpha^{(2)}_m} \cdots y_n ^{\alpha^{(n)}_m}$.
\end{definition}
A standard GP is a non-convex optimization
problem defined as minimizing a posynomial subject to posynomial 
inequality constraints and monomial equality constraints~\cite{chiang2005geometric,boyd2007tutorial}:
        \begin{equation}\label{eq:GPformat}
            \begin{aligned}
            &\min_{\bm{y}} f_0 (\bm{y})\\
            &\textrm{s.t.} ~~~ f_i(\bm{y})\leq 1, \;\; i=1,\cdots,I,\\
           &~~~~~~ h_l(\bm{y})=1, \;\; l=1,\cdots,L,
            \end{aligned}
        \end{equation}
        where  $f_i(\bm{y})=\sum_{m=1}^{M_i} d_{i,m} y_1^{\alpha^{(1)}_{i,m}} y_2^{\alpha^{(2)}_{i,m}} \cdots y_n ^{\alpha^{(n)}_{i,m}}$, $\forall i$, and $h_l(\bm{y})= d_l y_1^{\alpha^{(1)}_l} y_2^{\alpha^{(2)}_l} \cdots y_n ^{\alpha^{(n)}_l}$, $\forall l$. Since the log-sum-exp function $f(\bm{y}) = \log \sum_{j=1}^n e^{y_j}$ is convex, where $\log$ denotes the natural logarithm, with the following change of variables and constants $z_i=\log(y_i)$, $b_{i,k}=\log(d_{i,k})$, $b_l=\log (d_l)$  the GP in the convex form can be obtained as:
          \begin{equation}~\label{GPtoConvex}
            \begin{aligned}
            &\min_{\bm{z}} \;\log \sum_{m=1}^{M_0} e^{\left(\bm{\alpha}^{\top}_{0,m}\bm{z}+ b_{0,m}\right)}\\
            &\textrm{s.t.} ~~~ \log \sum_{m=1}^{M_i} e^{\left(\bm{\alpha}^{\top}_{i,m}\bm{z}+ b_{i,m}\right)}\leq 0 \;\; i=1,\cdots,I,\\
           &~~~~~~~ \bm{\alpha}_l^\top \bm{z}+b_l =0\;\; l=1,\cdots,L,
            \end{aligned}
        \end{equation}
        where $\bm{z}=[z_1,\cdots,z_n]^\top$, $\bm{\alpha}_{i,k}=\left[\alpha_{i,k}^{(1)},\alpha_{i,k}^{(2)}\cdots, \alpha_{i,k}^{(n)}\right]^\top$, $\forall i,k$, and $\bm{\alpha}_{l}=\left[\alpha_{l}^{(1)},a_{l}^{(2)}\cdots, \alpha_{l}^{(n)}\right]^\top$\hspace{-2mm}, $\forall l$.
        \vspace{-5mm}
        \subsection{Obtaining Random Walks and Inspection Policies: Centralized Approach}
         It can be verified that although the objective function and all the constraints in problem~\eqref{eq:originalFormualtion} can be expressed as monomials and posynomials w.r.t. the optimization variables, problem~\eqref{eq:originalFormualtion} does not obey the standard form of  GP in~\eqref{eq:GPformat}. In fact, none of the constraints ($\mathbf{C1}$), ($\mathbf{C2}$), ($\mathbf{C3}$), and ($\mathbf{C4}$) can be directly expressed as inequalities on posynomials or equalities on monomials, which is required in GP.\footnote{The multiplicative coefficient(s) of a posynomial/monomial should be strictly positive. For example, $-\sum_{u_i\in \mathcal{U}}\sum_{v_k\in\mathcal{V}} q^{(i)}_{jk} \iota^{(i)}_{j}$ is not a posynomial.} Thus, the problem  fits into the category of  signomial programming, for which direct derivation of a solution is intractable~\cite{chiang2005geometric}. In the following, we apply two methods, namely penalty functions and monomial approximations, on problem~\eqref{eq:originalFormualtion}, to approximate the problem as a series of GP problems. Afterward, we propose an effective algorithm to solve the problem along with its performance guarantee.
         We first use the method of penalty functions and  auxiliary variables~\cite{SignomialGlobal}. To this end, we consider each equality on a posynomial in the format of $g(\bm{x})=c$ as two inequality constraints: i) $g(\bm{x})\leq c$, and ii)~$\frac{1}{A g(\bm{x})}\leq c$, where $A$ is an auxiliary variable, which will later be forced to be $1$. Aiming to make problem~\eqref{eq:originalFormualtion} as close to a GP as possible, we perform some algebraic manipulation and rewrite it as:
         \vspace{-2mm}
         
         {\small
                    \vspace{-.5mm} \begin{equation}\label{eq:formulationAxiliary}
                     \hspace{-5mm}
            \begin{aligned}
                  &\argmin_{\bm{q}, \bm{\iota},\bm{A},\bm{B}} \sum_{u_i\in \mathcal{U}}\sum_{v_j\in \mathcal{V}}\sum_{v_k\in \mathcal{V}}  q^{(i)}_{jk} w(v_j,v_k) \\&~~~~~~~~+\sum_{u_i\in \mathcal{U}} \sum_{v_j\in\mathcal{V} } \sum_{v_k\in\mathcal{V}} q^{(i)}_{jk} \psi^{(i)}_{j} \iota^{(i)}_{j}\\&~~~~~~~~+\sum_{v_k\in \mathcal{V}}\sum_{u_i\in \mathcal{U}} w^{(A)}_{ki} A^{(i)}_{k}+ \sum_{u_i\in \mathcal{U}} w^{(B)}_{i} B_i\\
      &\textrm{s.t.}\\
      &(\mathbf{\tilde{C}1})~~  \frac{\hat{\theta}_j \pi_j+\left(1-\hat{\theta}_j\right)\displaystyle\sum_{u_i\in \mathcal{U}} \sum_{v_k\in\mathcal{V}} q^{(i)}_{jk}}{\sum_{u_i\in \mathcal{U}}\sum_{v_k\in\mathcal{V}} q^{(i)}_{jk} \iota^{(i)}_{j} }  \leq 1 ,~ \forall v_j\in \mathcal{V},\\
      &(\mathbf{\tilde{C}2})~~ \frac{\hspace{-1.2mm}\displaystyle\sum_{v_j\in\mathcal{V}} \sum_{v_k\in\mathcal{V}}q^{(i)}_{jk} w(v_j,v_k)+\hspace{-1.2mm}\sum_{v_j\in\mathcal{V} } \sum_{v_k\in\mathcal{V}} q^{(i)}_{jk}\psi^{(i)}_{j} \iota^{(i)}_{j}}{\displaystyle \varphi_i \tilde{\theta}_i\sum_{v_k\in\mathcal{V}} q^{(i)}_{b^i k}}\leq 1\\&\hspace{60mm},~ \forall u_i \in \mathcal{U},\\
       &(\mathbf{\tilde{C}3-1})~~ \frac{\sum_{v_j\in \mathcal{V}} q^{(i)}_{jk}}{ \sum_{v_n\in\mathcal{V}} q^{(i)}_{kn}}\leq 1, ~\forall v_k\in
      \mathcal{V}, \forall u_i \in \mathcal{U},\\
      &(\mathbf{\tilde{C}3-2})~~ \frac{\sum_{v_n\in\mathcal{V}} q^{(i)}_{kn}}{A^{(i)}_{k}\sum_{v_j\in \mathcal{V}} q^{(i)}_{jk}} \leq 1, ~\forall v_k\in
      \mathcal{V}, \forall u_i \in \mathcal{U},\\
      &(\mathbf{\tilde{C}4-1})~~ \sum_{v_j\in \mathcal{V}} \sum_{v_k\in\mathcal{V}} q^{(i)}_{jk} \leq 1,~\forall u_i \in \mathcal{U},\\
      &(\mathbf{\tilde{C}4-2})~~ \frac{B_{i}^{-1}}{\sum_{v_j\in \mathcal{V}} \sum_{v_k\in\mathcal{V}} q^{(i)}_{jk} }\leq 1,~\forall u_i \in \mathcal{U},\\
     &(\mathbf{\tilde{C}5})~~  0< q^{(i)}_{jk},\iota^{(i)}_{j} \leq 1,~\; \forall v_j,v_k\in \mathcal{V}, \forall u_i\in \mathcal{U},
     \\
     &(\mathbf{\tilde{C}6})~~  A^{(i)}_{k},B_i \geq 1, \; \forall v_k\in \mathcal{V}, \forall u_i\in \mathcal{U},
            \end{aligned}
            \hspace{-5mm}
        \end{equation}
        }
        \vspace{-1mm}
        
 \noindent where, in the objective function, $w^{(A)}_{ki}$ and $w^{(B)}_i$  are sufficiently large weight coefficients corresponding to the auxiliary variables $A^{(i)}_k$ and $B_i$, $\forall k,i$, respectively.  Comparing problems~\eqref{eq:originalFormualtion} and~\eqref{eq:formulationAxiliary}, it can be seen that in particular ($\mathbf{C3}$) is replaced with ($\mathbf{\tilde{C}3-1}$) and ($\mathbf{\tilde{C}3-2}$); and ($\mathbf{C4}$) is replaced with ($\mathbf{\tilde{C}4-1}$) and ($\mathbf{\tilde{C}4-2}$). The following fact about problem~\eqref{eq:formulationAxiliary} is immediate.
  
    \begin{fact}\label{fact:2}
        At the optimal point of problem~\eqref{eq:formulationAxiliary}, the auxiliary variables will be forced to take the following values: $A^{(i)}_{k}=1$, and $B_i=1$, $\forall k,i$.
        \end{fact}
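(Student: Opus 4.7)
The plan is to exploit the structural placement of the auxiliary variables: each $A^{(i)}_k$ appears only in the positive linear objective term $w^{(A)}_{ki} A^{(i)}_k$, in the denominator of the single inequality $\mathbf{\tilde{C}3-2}$ (which is relaxed when $A^{(i)}_k$ grows), and in the lower bound $\mathbf{\tilde{C}6}$; an analogous statement holds for each $B_i$ with $\mathbf{\tilde{C}4-2}$. Holding $(\bm{q},\bm{\iota})$ fixed and partially minimizing over $(\bm{A},\bm{B})$ therefore yields the unique minimizers
\begin{equation*}
A^{(i)}_k = \max\!\left(1,\;\frac{\sum_{v_n\in\mathcal{V}} q^{(i)}_{kn}}{\sum_{v_j\in\mathcal{V}} q^{(i)}_{jk}}\right), \qquad B_i = \max\!\left(1,\;\frac{1}{\sum_{v_j,v_k\in\mathcal{V}} q^{(i)}_{jk}}\right).
\end{equation*}
Because $\mathbf{\tilde{C}3-1}$ and $\mathbf{\tilde{C}4-1}$ already force these ratios to be at least $1$, the maxima reduce to the ratios themselves, so $A^{(i)}_k=1$ (resp.\ $B_i=1$) if and only if the original equality $(\mathbf{C3})$ (resp.\ $(\mathbf{C4})$) of~\eqref{eq:originalFormualtion} is satisfied at the current $\bm{q}$.

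To complete the argument, I would invoke a standard penalty-method reasoning to show that those equalities must hold at the optimum. Problem~\eqref{eq:originalFormualtion} is feasible whenever $|\mathcal{U}|\geq \lceil \sum_{v_j\in\mathcal{V}} \pi_j\rceil$ (take any admissible Markov transition kernel together with its unique stationary distribution), and hence admits a finite optimum $V^\ast$; any optimal tuple $(\bm{q}^\ast,\bm{\iota}^\ast)$ lifts to the feasible point $(\bm{q}^\ast,\bm{\iota}^\ast,\bm{A}=\bm{1},\bm{B}=\bm{1})$ of~\eqref{eq:formulationAxiliary} with augmented objective $V^\ast + \sum_{k,i} w^{(A)}_{ki} + \sum_i w^{(B)}_i$, providing an upper bound on the optimal value of~\eqref{eq:formulationAxiliary}. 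For any alternative candidate optimum with $A^{(i)}_k>1$ or $B_i>1$, the excess penalty $\sum_{k,i} w^{(A)}_{ki}(A^{(i)}_k-1) + \sum_i w^{(B)}_i(B_i-1)$ must be compensated by a reduction in the non-penalty part of the objective, which is itself bounded above since $q^{(i)}_{jk},\iota^{(i)}_{j}\in(0,1]$ and the data $w,\psi$ are finite. Choosing the weights $w^{(A)}_{ki}$ and $w^{(B)}_i$ sufficiently large therefore precludes any such candidate, and forces $A^{(i)}_k=B_i=1$ at the optimum.

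The main obstacle is quantifying ``sufficiently large'': one must produce a uniform ratio between the excess penalty incurred by a strict violation of $(\mathbf{C3})$ or $(\mathbf{C4})$ and the maximal objective improvement such a violation can induce, so that choosing the weights above this ratio enforces the equalities. The boundedness of the decision variables and of the problem data guarantees that such a threshold exists, even though its explicit value is not required for the qualitative conclusion of Fact~\ref{fact:2}.
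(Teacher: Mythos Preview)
The paper gives no proof of Fact~\ref{fact:2}; it is declared ``immediate'' on the strength of the preceding sentence that $w^{(A)}_{ki}$ and $w^{(B)}_i$ are ``sufficiently large weight coefficients.'' Your penalty-method argument is precisely the justification that shorthand encodes: the structural observations about where each auxiliary variable appears, the partial-minimization formulas
\[
A^{(i)}_k=\max\!\Bigl(1,\tfrac{\sum_n q^{(i)}_{kn}}{\sum_j q^{(i)}_{jk}}\Bigr),\qquad
B_i=\max\!\Bigl(1,\tfrac{1}{\sum_{j,k} q^{(i)}_{jk}}\Bigr),
\]
the reduction of these maxima via $(\mathbf{\tilde C3\!-\!1})$ and $(\mathbf{\tilde C4\!-\!1})$, and the comparison with a lifted optimum of~\eqref{eq:originalFormualtion} are all correct and are what the authors have in mind.

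One minor correction: the bound $|\mathcal{U}|\geq\lceil\sum_j\pi_j\rceil$ is stated in the paper only as a \emph{necessary} condition for~\eqref{ProbConst1}, not a sufficient condition for feasibility of~\eqref{eq:originalFormualtion} (the battery constraint $(\mathbf{C2})$ must also be satisfiable). Your argument needs only that~\eqref{eq:originalFormualtion} is feasible with finite optimal value, which the paper assumes throughout, so invoke that assumption directly rather than the UAV-count inequality. Your acknowledged residual gap---quantifying how large the weights must be---is exactly the standard exact-penalty caveat the paper leaves implicit; the boundedness you cite (variables in $(0,1]$, finite $w,\psi$) is the right ingredient for asserting existence of such a threshold.
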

    
        \begin{fact}\label{fact:1}
        Assuming $A^{(i)}_{k}=1$, and $B_i=1$, $\forall k,i$, the solution of~\eqref{eq:formulationAxiliary} coincides with the solution of \eqref{eq:originalFormualtion}.
        \end{fact}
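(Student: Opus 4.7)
The plan is to argue that once the auxiliary variables are pinned at $A^{(i)}_k=1$ and $B_i=1$, the feasible set of~\eqref{eq:formulationAxiliary} in the remaining variables $(\bm{q},\bm{\iota})$ collapses exactly to the feasible set of~\eqref{eq:originalFormualtion}, and the objective function of~\eqref{eq:formulationAxiliary} differs from that of~\eqref{eq:originalFormualtion} only by a constant (the additive penalty terms $\sum_{k,i} w^{(A)}_{ki} + \sum_i w^{(B)}_i$). Since a constant shift does not affect the argmin, the minimizers coincide.

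To carry out the constraint reduction, I would inspect each converted constraint in turn. First, ($\mathbf{\tilde{C}1}$) is nothing but ($\mathbf{C1}$) rewritten as a ratio: clearing the strictly positive denominator $\sum_{u_i}\sum_{v_k} q^{(i)}_{jk}\iota^{(i)}_{j}$ gives back ($\mathbf{C1}$) verbatim; the same clearing of denominators converts ($\mathbf{\tilde{C}2}$) into ($\mathbf{C2}$). These two reductions do not involve $A$ or $B$ at all. Next, setting $A^{(i)}_k=1$ in ($\mathbf{\tilde{C}3\text{-}2}$) yields $\sum_{v_n} q^{(i)}_{kn}\leq \sum_{v_j} q^{(i)}_{jk}$, which, combined with ($\mathbf{\tilde{C}3\text{-}1}$) that imposes the reverse inequality, forces the equality in ($\mathbf{C3}$). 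Analogously, setting $B_i=1$ in ($\mathbf{\tilde{C}4\text{-}2}$) gives $\sum_{v_j}\sum_{v_k} q^{(i)}_{jk}\geq 1$, which together with ($\mathbf{\tilde{C}4\text{-}1}$) yields the equality ($\mathbf{C4}$). Constraint ($\mathbf{\tilde{C}5}$) is identical to ($\mathbf{C5}$), and ($\mathbf{\tilde{C}6}$) is trivially satisfied at $A^{(i)}_k=B_i=1$ and imposes no additional restriction on $(\bm{q},\bm{\iota})$.

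Having matched the feasible regions, I would then compare the objectives. The objective of~\eqref{eq:formulationAxiliary} splits into the original surveillance-energy objective of~\eqref{eq:originalFormualtion} plus the penalty contribution $\sum_{v_k,u_i} w^{(A)}_{ki}A^{(i)}_k+\sum_{u_i} w^{(B)}_i B_i$. Substituting $A^{(i)}_k=1$ and $B_i=1$ makes this penalty contribution a fixed scalar $C:=\sum_{v_k,u_i} w^{(A)}_{ki}+\sum_{u_i} w^{(B)}_i$, independent of $(\bm{q},\bm{\iota})$. Hence the two objectives, viewed as functions of $(\bm{q},\bm{\iota})$ on the common feasible set, differ only by $C$, and therefore share the same minimizers; the two minimum values themselves differ by exactly $C$.

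There is no genuine technical obstacle here; the argument is essentially bookkeeping once one observes the sandwiching mechanism exploited by the pairs $(\mathbf{\tilde{C}3\text{-}1},\mathbf{\tilde{C}3\text{-}2})$ and $(\mathbf{\tilde{C}4\text{-}1},\mathbf{\tilde{C}4\text{-}2})$. The only subtlety worth stating explicitly is that all denominators appearing in the cleared-out inequalities are strictly positive on the feasible region (guaranteed by $q^{(i)}_{jk}>0$ and $\iota^{(i)}_{j}>0$ from ($\mathbf{\tilde{C}5}$)), so the manipulations are reversible rather than merely implications. Fact~\ref{fact:2}, already established, then closes the loop by ensuring that the pinning $A^{(i)}_k=B_i=1$ is not an artificial restriction but is actually attained at the optimum of~\eqref{eq:formulationAxiliary}.
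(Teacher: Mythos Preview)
Your proposal is correct; in fact the paper does not supply a proof at all, treating Fact~\ref{fact:1} (together with Fact~\ref{fact:2}) as ``immediate'' and stating it without further argument. What you have written is precisely the bookkeeping that justifies the word \emph{immediate}: the pair $(\mathbf{\tilde{C}3\text{-}1},\mathbf{\tilde{C}3\text{-}2})$ sandwiches $(\mathbf{C3})$ once $A^{(i)}_k=1$, the pair $(\mathbf{\tilde{C}4\text{-}1},\mathbf{\tilde{C}4\text{-}2})$ sandwiches $(\mathbf{C4})$ once $B_i=1$, the remaining constraints are literal rewritings, and the penalty terms collapse to an additive constant irrelevant to the argmin. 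Your remark that strict positivity from $(\mathbf{\tilde{C}5})$ makes the denominator-clearing reversible is the only point the paper leaves implicit.
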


        %%%%%%%%%%%%%%%%%%%%%%%%%%%%%%%%%%%%%%%%%%%%%%%%%%
        %%%%%%%%%%%%%%%%%%%%%%%%%%%%%%%%%%%%%%%%%%%%%%%%%%%
        %%%%%%%%%%%%%%%%%%%%%%%%%%%%%%%%%%%%%%%%%%%%%%%%%%%
        %%%%%%%%%%%%%%%%%%%%%%%%%%%%%%%%%%%%%%%%%%%%%%%%%%
        %%%%%%%%%%%%%%%%%%%%%%%%%%%%%%%%%%%%%%%%%%%%%%%%%%%
       
      %%%%%%%%%%%%%%%%%%%%%%%%%%%%%%%%%%%%%%%%%%%%%%%%%%
        %%%%%%%%%%%%%%%%%%%%%%%%%%%%%%%%%%%%%%%%%%%%%%%%%%%
        %%%%%%%%%%%%%%%%%%%%%%%%%%%%%%%%%%%%%%%%%%%%%%%%%%%
        %%%%%%%%%%%%%%%%%%%%%%%%%%%%%%%%%%%%%%%%%%%%%%%%%%
        %%%%%%%%%%%%%%%%%%%%%%%%%%%%%%%%%%%%%%%%%%%%%%%%%%%
Problem~\eqref{eq:formulationAxiliary} still does not obey the standard GP form since the left hand sides (l.h.s) of ($\mathbf{\tilde{C}1}$), ($\mathbf{\tilde{C}2}$), ($\mathbf{\tilde{C}3-1}$), ($\mathbf{\tilde{C}3-2}$), ($\mathbf{\tilde{C}4-2}$) are ratios of two posynomials, instead of posynomials. Our next goal is to find favorable tight approximations for these constrains. To this end, we utilize the \textit{arithmetic-geometric mean inequality} given in the following lemma.
 \vspace{-1mm}
         \begin{lemma}[\textbf{Arithmetic-geometric mean inequality}~\cite{duffin1972reversed,chiang2005geometric}]\label{Lemma:ArethmaticGeometric}
         Consider a posynomial function $g(\bm{y})=\sum_{k=1}^{K} u_k(\bm{y})$, where $u_k(\bm{y})$ is a monomial, $\forall k$. The following inequality holds:
         \begin{equation}\label{eq:approxPosMon}
             g(\bm{y})\geq \hat{g}(\bm{y})\triangleq \prod_{k=1}^{K}\left( \frac{u_k(\bm{y})}{\alpha_k(\bm{z})}\right)^{\alpha_k(\bm{z})},
         \end{equation}
         where $\alpha_k(\bm{z})=u_k(\bm{z})/g(\bm{z})$, $\forall k$, and $\bm{z}>0$ is a fixed point.
         \end{lemma}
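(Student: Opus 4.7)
The plan is to reduce the inequality to the classical weighted arithmetic--geometric mean inequality, which is itself an immediate consequence of Jensen's inequality applied to the concave function $\log$. The first preliminary step is to verify that the numbers $\alpha_k(\bm{z}) = u_k(\bm{z})/g(\bm{z})$, $k=1,\ldots,K$, form a valid probability vector at the reference point $\bm{z}>0$. Since each $u_k$ is a monomial with nonnegative coefficient and $\bm{z}>0$, we have $u_k(\bm{z})\geq 0$, hence $\alpha_k(\bm{z})\geq 0$; and by definition
$$\sum_{k=1}^{K} \alpha_k(\bm{z}) \;=\; \frac{1}{g(\bm{z})}\sum_{k=1}^{K} u_k(\bm{z}) \;=\; 1.$$
If a monomial has coefficient zero it can be dropped from the sum without loss of generality, so we may assume $u_k(\bm{y})>0$ throughout.

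Next I would invoke the weighted AM--GM inequality: for positive numbers $a_1,\ldots,a_K$ and nonnegative weights $\alpha_1,\ldots,\alpha_K$ summing to one, $\sum_k \alpha_k a_k \geq \prod_k a_k^{\alpha_k}$. The key re-parameterization is to set $a_k = u_k(\bm{y})/\alpha_k(\bm{z})$, so that the weighted average on the left-hand side collapses to $\sum_k u_k(\bm{y}) = g(\bm{y})$, while the weighted geometric mean on the right-hand side is exactly $\hat{g}(\bm{y})$. Concretely,
$$g(\bm{y}) \;=\; \sum_{k=1}^{K} \alpha_k(\bm{z})\cdot\frac{u_k(\bm{y})}{\alpha_k(\bm{z})} \;\geq\; \prod_{k=1}^{K}\left(\frac{u_k(\bm{y})}{\alpha_k(\bm{z})}\right)^{\alpha_k(\bm{z})} \;=\; \hat{g}(\bm{y}),$$
which is precisely the claim.

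There is no substantive obstacle; the only nontrivial step is spotting the correct re-parameterization, and that choice is forced by the target form of $\hat{g}$. As a concluding remark I would note the tangency property: substituting $\bm{y}=\bm{z}$ on the right-hand side yields $\hat{g}(\bm{z}) = \prod_k \bigl(g(\bm{z})\bigr)^{\alpha_k(\bm{z})} = g(\bm{z})^{\sum_k \alpha_k(\bm{z})} = g(\bm{z})$, so equality is attained at the linearization point. This fact — that $\hat{g}$ is a monomial global lower bound on $g$ that touches $g$ at $\bm{z}$ — is exactly what makes the bound usable as a condensation surrogate in the authors' iterative GP approximation of the signomial program that follows.
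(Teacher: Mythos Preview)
Your proof is correct and is the standard argument: verify that the $\alpha_k(\bm{z})$ form a probability vector, then apply the weighted AM--GM inequality with $a_k = u_k(\bm{y})/\alpha_k(\bm{z})$. The paper does not supply its own proof of this lemma---it is quoted as a known result from the cited references---so there is nothing to compare against; your concluding observation about the tangency $\hat{g}(\bm{z})=g(\bm{z})$ is exactly the property the paper later exploits (and re-derives in \eqref{eq:longPrp1}) when establishing convergence of the iterative GP approximation.
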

  \begin{algorithm}[t]
 	\caption{Centralized random walk and inspection policy design of stochastic surveillance}\label{alg:cent}
 	\SetKwFunction{Union}{Union}\SetKwFunction{FindCompress}{FindCompress}
 	\SetKwInOut{Input}{input}\SetKwInOut{Output}{output}
 	 	{\footnotesize
 	\Input{Convergence criterion.}
 	 Initialize the iteration count $m=0$.\\
 	 Choose an initial feasible point $\bm{x}^{[0]}=[\bm{q}^{[0]},\bm{\iota}^{[0]}]$.\\
 	 Obtain the monomial approximations given in~\eqref{eq:app1}-\eqref{eq:app3}.\label{midAlg1}\\
 	 Replace those approximations in~\eqref{eq:formulationAxiliaryRelaxed}.\\
 	 Using the logarithmic change of variables and taking the log from constraints, convert the GP programming in~\eqref{eq:formulationAxiliaryRelaxed} to a convex optimization problem in the form of~\eqref{GPtoConvex}.\\
 	 $m=m+1$\\
 	 Solve the resulting convex optimization problem using an arbitrary tool (e.g., CVX~\cite{grant2014cvx}) to obtain the solution $\bm{x}^{[m]}$.\label{Alg:Gpconvexste}\\
 	 %$\norm{\bm{x}^{[m]}-\bm{x}^{[m-1]}}> \sigma$ 
 	 \If{the convergence criterion between two consecutive solutions $\bm{x}^{[m-1]}$ and $\bm{x}^{[m]}$ is not met}{
 	\textrm{Go to line~\ref{midAlg1} and repeat the procedure using $\bm{x}^{[m]}$.}\\\Else{Choose the obtained point as the final solution $\bm{x}^{*}=\bm{x}^{[m]}=[\bm{q}^{*},\bm{\iota}^{*}]$.\label{Alg:point2}\\
 	 Replace the values of $\bm{q}^{*}$ in~\eqref{eq:DeriveGamma} to obtain the optimal stationary distribution of the UAVs $\bm{\gamma}^*$.\\
 	  Replace the values of $\bm{\gamma}^*$ in~\eqref{eq:DeriveProb} to obtain the optimal transition  matrices of the UAVs $\bm{P}^*$.\\}}
 	  }
  \end{algorithm}
         We solve problem~\eqref{eq:formulationAxiliary} via an iterative approach, in which the solution at the $m$-th iteration $\bm{x}^{[m]}=[\bm{q}^{[m]}, \bm{\iota}^{[m]}]$ is obtained based on the solution of the previous iteration $\bm{x}^{[m-1]}$. We define ${h}_j(\bm{x}),{r}^{(i)}(\bm{x}),  {z}^{(i)}_{k}(\bm{x}), f^{(i)}_k(\bm{x}),{b}^{(i)}(\bm{x})$ as the denominators of ($\mathbf{\tilde{C}1}$), ($\mathbf{\tilde{C}2}$),
          ($\mathbf{\tilde{C}3-1}$),
          ($\mathbf{\tilde{C}3-2}$),
          ($\mathbf{\tilde{C}4-2}$), respectively. At each iteration, $m$, we approximate them using the obtained solution in the previous iteration, $\bm{x}^{[m-1]}$, via \textit{arithmetic-geometric mean inequality}, the result of which is given in \eqref{eq:app1}-\eqref{eq:app3}.  It is easy to verify that $\hat{h}_j(\bm{x}),\hat{r}^{(i)}(\bm{x}), \hat{z}^{(i)}_{k}(\bm{x}),\hat{f}^{(i)}_k(\bm{x}),\hat{b}^{(i)}(\bm{x})$ given in \eqref{eq:app1}-\eqref{eq:app3} are in fact the best local monomial approximations to their corresponding posynomials
near fixed point $\bm{x}^{[m-1]}$ in the sense of the first-order Taylor approximation. Note that the ratio between a posynomial (e.g., the numerators of the aforementioned constraints) and a monomial (e.g., the corresponding monomial approximations of their denominators) is a posynomial. As a result, we can approximate these constraints as inequalities on posynomials at each iteration. Finally, we write the problem as~\eqref{eq:formulationAxiliaryRelaxed} and present the pseudo code of our proposed algorithm in Algorithm~\ref{alg:cent}. The objective function and all the constraints in the following formulation obey the standard GP format. Also, the optimality of our algorithm is investigated in Proposition~\ref{propKKT2}.
           \begin{table*}[tbp]
\begin{minipage}{0.99\textwidth}
{\footnotesize
        \begin{equation}\label{eq:app1}
       h_j(\bm{x})= \sum_{u_i\in \mathcal{U}}\sum_{v_k\in\mathcal{V}} q^{(i)}_{jk} \iota^{(i)}_{j}~\Rightarrow~ h_j(\bm{x}) \geq  \hat{h}_j(\bm{x})
           \triangleq \prod_{u_i\in \mathcal{U}} \prod_{v_k\in\mathcal{V}}  \left(\frac{q^{(i)}_{jk} \iota^{(i)}_{j}h_j(\bm{x}^{[m-1]})}{\{q^{(i)}_{jk}\}^{[m-1]}\{\iota^{(i)}_{j}\}^{[m-1]}}) \right)^\frac{\{q^{(i)}_{jk}\}^{[m-1]}\{\iota^{(i)}_{j}\}^{[m-1]}}{h_j(\bm{x}^{[m-1]})} 
        \end{equation}
             \begin{equation}\label{eq:app22}
       r^{(i)}(\bm{x})=\sum_{v_k\in\mathcal{V}} q^{(i)}_{b^i k} ~\Rightarrow~  r^{(i)}(\bm{x}) \geq \hat{r}^{(i)}(\bm{x}) \triangleq  \prod_{v_k\in\mathcal{V}} \left(\frac{q^{(i)}_{b^i k} r^{(i)}(\bm{x}^{[m-1]})}{\{q^{(i)}_{b^i k}\}^{[m-1]}}\right)^{\frac{\{q^{(i)}_{b^i k}\}^{[m-1]}}{r^{(i)}(\bm{x}^{[m-1]})}}
        \end{equation}
        \vspace{-1mm}
            \begin{equation}\label{eq:app2}
      z^{(i)}_{k}(\bm{x})=\sum_{v_n\in \mathcal{V}} q^{(i)}_{kn} ~\Rightarrow~ z^{(i)}_{k}(\bm{x}) \geq \hat{z}^{(i)}_{k}(\bm{x}) 
           \triangleq \prod_{v_n\in \mathcal{V}} \left(\frac{q^{(i)}_{kn} z^{(i)}_{k}(\bm{x}^{[m-1]})}{\{q^{(i)}_{kn}\}^{[m-1]}}\right)^{\frac{\{q^{(i)}_{kn}\}^{[m-1]}}{z^{(i)}_{k}(\bm{x}^{[m-1]})}} 
        \end{equation}
         \vspace{-2mm}
            \begin{equation}\label{eq:app222}
       f^{(i)}_k(\bm{x})=\sum_{v_j\in \mathcal{V}} q^{(i)}_{jk} ~\Rightarrow~  f^{(i)}_k(\bm{x}) \geq \hat{f}^{(i)}_{k}(\bm{x}) 
           \triangleq \prod_{v_j\in \mathcal{V}} \left(\frac{q^{(i)}_{jk} f^{(i)}_k(\bm{x}^{[m-1]})}{\{q^{(i)}_{jk}\}^{[m-1]}}\right)^{\frac{\{q^{(i)}_{jk}\}^{[m-1]}}{f^{(i)}_k(\bm{x}^{[m-1]})}} 
        \end{equation}
         \vspace{-3mm}
            \begin{equation}\label{eq:app3}
        b^{(i)}(\bm{x})=\sum_{v_j\in \mathcal{V}} \sum_{v_k\in\mathcal{V}} q^{(i)}_{jk}~\Rightarrow~  b^{(i)}(\bm{x}) \geq \hat{b}^{(i)}(\bm{x}) 
           \triangleq \prod_{v_j\in \mathcal{V}} \prod_{v_k\in \mathcal{V}} \left(\frac{b^{(i)}(\bm{x}^{[m-1]})q^{(i)}_{jk}}{\{q^{(i)}_{jk}\}^{[m-1]}} \right)^\frac{\{q^{(i)}_{jk}\}^{[m-1]}}{b^{(i)}(\bm{x}^{[m-1]})} 
        \end{equation}
        }
        \hrulefill
\end{minipage}
\end{table*} 
\vspace{-5mm}

{\small
     \begin{equation}\label{eq:formulationAxiliaryRelaxed}
               \hspace{-15mm}
            \begin{aligned}
                  &\argmin_{\bm{q}, \bm{\iota},\bm{A},\bm{B}} \sum_{u_i\in \mathcal{U}}\sum_{v_j\in \mathcal{V}}\sum_{v_k\in \mathcal{V}}  q^{(i)}_{jk} w(v_j,v_k) \\&~~~~~~~~+\sum_{u_i\in \mathcal{U}} \sum_{v_j\in\mathcal{V} } \sum_{v_k\in\mathcal{V}} q^{(i)}_{jk} \psi^{(i)}_{j} \iota^{(i)}_{j}\\&~~~~~~~~+\sum_{v_k\in \mathcal{V}}\sum_{u_i\in \mathcal{U}} w^{(A)}_{ki} A^{(i)}_{k}+ \sum_{u_i\in \mathcal{U}} w^{(B)}_{i} B_i\\
      &\textrm{s.t.}\\
      &(\mathbf{\hat{C}1})~~  \frac{\hat{\theta}_j \pi_j+\left(1-\hat{\theta}_j\right)\displaystyle\sum_{u_i\in \mathcal{U}} \sum_{v_k\in\mathcal{V}} q^{(i)}_{jk}}{\hat{h}_j(\bm{x})}  \leq 1 ,~ \forall v_j\in \mathcal{V},\\
      &(\mathbf{\hat{C}2})~~ \frac{\hspace{-1.2mm}\displaystyle\sum_{v_j\in\mathcal{V}}\hspace{-.2mm} \sum_{v_k\in\mathcal{V}}\hspace{-.6mm}q^{(i)}_{jk} w(v_j,v_k)\hspace{-.5mm}+\hspace{-1.2mm}\sum_{v_j\in\mathcal{V} }\hspace{-.2mm} \sum_{v_k\in\mathcal{V}}\hspace{-.6mm} q^{(i)}_{jk}\psi^{(i)}_{j} \iota^{(i)}_{j}}{ \varphi_i \tilde{\theta}_i\hat{r}^{(i)}(\bm{x}) }\leq 1\\&\hspace{60mm},\;\; \forall u_i \in \mathcal{U},\\
       &(\mathbf{\hat{C}3-1})~~ \frac{\sum_{v_j\in \mathcal{V}} q^{(i)}_{jk}}{ \hat{z}^{(i)}_{k}(\bm{x}) }\leq 1, ~\forall v_k\in
      \mathcal{V}, \forall u_i \in \mathcal{U},\\
      &(\mathbf{\hat{C}3-2})~~ \frac{\sum_{v_n\in\mathcal{V}} q^{(i)}_{kn}}{A^{(i)}_{k}\hat{f}^{(i)}_{k}(\bm{x}) } \leq 1, ~\forall v_k\in
      \mathcal{V}, \forall u_i \in \mathcal{U},\\
      &(\mathbf{\hat{C}4-1})~~ \sum_{v_j\in \mathcal{V}} \sum_{v_k\in\mathcal{V}} q^{(i)}_{jk} \leq 1,~\forall u_i \in \mathcal{U},\\
      &(\mathbf{\hat{C}4-2})~~ \frac{B_{i}^{-1}}{ \hat{b}^{(i)}(\bm{x}) }\leq 1,~\forall u_i \in \mathcal{U},\\
     &(\mathbf{\hat{C}5})~~  0< q^{(i)}_{jk},\iota^{(i)}_{j} \leq 1,~\; \forall v_j,v_k\in \mathcal{V}, \forall u_i\in \mathcal{U},
     \\
     &(\mathbf{\hat{C}6})~~  A^{(i)}_{k},B_i \geq 1, \; \forall v_k\in \mathcal{V}, \forall u_i\in \mathcal{U}.
            \end{aligned}
            \hspace{-12mm}
        \end{equation}
        }
        
        %%%%%%%%%%%%%%%%%%%%%%%%%%%%%%%%%%%%%%%%%%%%%%%%%%%%%%%%%%%%%%%%%%
        %%%%%%%%%%%%%%%%%%%%%%%%%%%%%%%%%%%%%%%%%%%%%%%%%%%%%%%%%%%%%%%%%
        %%%%%%%%%%%%%%%%%%%%%%%%%%%%%%%%%%%%%%%%%%%%%%%%%%%%%%%%%%%%%%%%%
          \begin{table*}[t]
\begin{minipage}{0.99\textwidth}
          {\footnotesize     \begin{equation}\label{eq:longlag1}
          \hspace{-6mm}
          \begin{aligned}
            &L(\tilde{\bm{x}},\bm{\lambda},\bm{\gamma},\bm{\phi},\bm{\rho},\bm{\xi},\bm{\delta},\bm{\upsilon},\bm{\chi},\bm{\beta},\bm{\varepsilon}) =\sum_{u_i\in \mathcal{U}}\underbrace{\sum_{v_j\in \mathcal{V}}\sum_{v_k\in \mathcal{V}}  \exp\left(\tilde{q}^{(i)}_{jk}\right) w(v_j,v_k) }_{T^{(i)}_1}+
                  \sum_{u_i\in \mathcal{U}}
                  \underbrace{\sum_{v_j\in\mathcal{V} } \sum_{v_k\in\mathcal{V}} \exp\left(\tilde{q}^{(i)}_{jk}\right) \psi^{(i)}_{j} \exp\left(\tilde{\iota}^{(i)}_j\right)}_{T^{(i)}_2}
                  \\[-0.5em]
      &+\sum_{u_i\in \mathcal{U}} \underbrace{\sum_{v_k\in \mathcal{V}}w^{(A)}_{ki} \exp\left(\tilde{A}^{(i)}_{k}\right)}_{T^{(i)}_3}+ \sum_{u_i\in \mathcal{U}}\underbrace{ w^{(B)}_{i} \exp\left(\tilde{B_i}\right)}_{T^{(i)}_4}+\sum_{ v_j\in \mathcal{V}} \lambda_j \log\left({ \pi_j}\right) \\[-1.5em]
      &-\sum_{ v_j\in \mathcal{V}} \lambda_j\sum_{u_i\in \mathcal{U}} \sum_{v_k\in\mathcal{V}} \log\left( \left(\frac{\exp(\tilde{q}^{(i)}_{jk}) \exp(\tilde{\iota}^{(i)}_j) h_j(\tilde{\bm{x}}^{[m-1]})}{\exp\left(\{\tilde{q}^{(i)}_{jk}\}^{[m-1]}\right)\exp\left(\{\tilde{\iota}^{(i)}_j\}^{[m-1]}\right)} \right)^\frac{\exp\left(\{\tilde{q}^{(i)}_{jk}\}^{[m-1]}\right)\exp\left(\{\tilde{\iota}^{(i)}_j\}^{[m-1]}\right)}{h_j(\tilde{\bm{x}}^{[m-1]})} \right)\\[-0.2em]
      &+\sum_{u_i \in \mathcal{U}}\underbrace{ \zeta_i\log\left( \frac{\displaystyle\sum_{v_j\in\mathcal{V}} \sum_{v_k\in\mathcal{V}} \exp\left(\tilde{q}^{(i)}_{jk}\right) w(v_j,v_k)+\sum_{v_j\in\mathcal{V} } \sum_{v_k\in\mathcal{V}} \exp\left(\tilde{q}^{(i)}_{jk}\right)\psi^{(i)}_{j} \exp\left(\tilde{\iota}^{(i)}_j\right)}{\hat{r}^{(i)}(\tilde{\bm{x}}) }\right)}_{T^{(i)}_5}\\[-0.7em]
       &+\sum_{u_i \in \mathcal{U}}\underbrace{\sum_{v_k\in
      \mathcal{V}}  \phi_{k,i} \log\left(\frac{\sum_{v_j\in \mathcal{V}} \exp\left(\tilde{q}^{(i)}_{jk}\right)}{ \hat{z}^{(i)}_{k}(\tilde{\bm{x}}) }\right)}_{T^{(i)}_6}+ \sum_{u_i \in \mathcal{U}}\underbrace{ \sum_{v_k\in
      \mathcal{V}}\rho_{k,i}\log\left( \frac{ \sum_{v_n\in\mathcal{V}} \exp\left(\tilde{q}^{(i)}_{kn}\right)}{\exp\left(\tilde{A}^{(i)}_{k}\right)\hat{f}^{(i)}_{k}(\tilde{\bm{x}}) }\right)}_{T^{(i)}_7}+\sum_{u_i \in \mathcal{U}}\underbrace{  \xi_{i}\log\left( \sum_{v_j\in \mathcal{V}} \sum_{v_k\in\mathcal{V}} \exp\left(\tilde{q}^{(i)}_{jk}\right)\right)}_{T^{(i)}_8}\\[-0.1em]
     &+\sum_{u_i \in \mathcal{U}}\underbrace{  \delta_i\log\left( \frac{1}{\exp\left(\tilde{B}_{i}\right) \hat{b}^{(i)}(\tilde{\bm{x}}) }\right)}_{T^{(i)}_9}+ \sum_{u_i\in \mathcal{U}}\underbrace{\sum_{v_j \in \mathcal{V}}\sum_{v_k\in \mathcal{V}}  \upsilon_{j,k,i} \tilde{q}^{(i)}_{jk}}_{T^{(i)}_{10}} +  \sum_{u_i\in \mathcal{U}}\underbrace{\sum_{v_k\in \mathcal{V}}\chi_{k,i} \tilde{\iota}^{(i)}_k}_{T^{(i)}_{11}}
      -\sum_{u_i\in \mathcal{U}} \underbrace{\sum_{v_k\in \mathcal{V}} \beta_{k,i} \tilde{A}^{(i)}_{k}}_{T^{(i)}_{12}} - \sum_{u_i\in \mathcal{U}}\underbrace{ \varepsilon_{i} \tilde{B}_i}_{T^{(i)}_{13}}
                    \end{aligned}\hspace{-10mm}
          \end{equation}
          }
          \hrulefill
           {\footnotesize    
        \vspace{-.1mm}\begin{equation}\label{eq:longlag2}
          \hspace{-20mm}
          \begin{aligned}
            &L_i(\tilde{\bm{x}},\bm{\lambda},\bm{\gamma},\bm{\phi},\bm{\rho},\bm{\xi},\bm{\delta},\bm{\upsilon},\bm{\chi},\bm{\beta},\bm{\varepsilon}) =
                  \sum_{j=1}^{13} T^{(i)}_j+\sum_{ v_j\in \mathcal{V}} \frac{\lambda^{(i)}_{j}}{|\mathcal{U}|} \log\left({\pi_j}\right) \\[-0.1em]
      &+ \sum_{ v_j\in \mathcal{V}} \sum_{v_k\in\mathcal{V}}\lambda^{(i)}_{j} \frac{\exp\left(\{\tilde{q}^{(i)}_{jk}\}^{[m-1]}+\{\tilde{\iota}^{(i)}_j\}^{[m-1]}\right)}{h_j(\tilde{\bm{x}}^{[m-1]})} \left[ \{\tilde{q}^{(i)}_{jk}\}^{[m-1]}+\{\tilde{\iota}^{(i)}_{j}\}^{[m-1]}-\tilde{q}^{(i)}_{jk} -\tilde{\iota}^{(i)}_j -\log\left(h_j(\tilde{\bm{x}}^{[m-1]}) \right)\right]
                    \end{aligned}
                    \hspace{-20mm}
          \end{equation}
          }
          \hrulefill
\end{minipage}
\vspace{1mm}
\end{table*}

 \begin{algorithm}[t]
 {\footnotesize
 	\caption{Consensus-based distributed random walk and inspection policy design of stochastic surveillance}\label{alg:fulldist}
 	\SetKwFunction{Union}{Union}\SetKwFunction{FindCompress}{FindCompress}
 	\SetKwInOut{Input}{input}\SetKwInOut{Output}{output}
 	\Input{Convergence criterion.}
    Initialize the iteration count m=0.\\
    Initialize a feasible solution for the problem $\{\tilde{q}^{(i)}_{jk}\}^{[0]}$, $\{\tilde{\iota}^{(i)}_{j}\}^{[0]}$, $\{\tilde{A}^{(i)}_{k}\}^{[0]}$, $\{\tilde{B}_i\}^{[0]}$, $\forall i,j,k$.\\
     \While{The convergence criterion between two consecutive solutions of the problem $\tilde{\bm{x}}^{[m-1]}$ and $\tilde{\bm{x}}^{[m]}$ is not achieved \textbf{OR} $m=0$}{
        At each processor $i$, obtain the value of \\
    \hspace{-1mm}$h^{(i)}_{j}(\tilde{\bm{x}}^{[m]})= \displaystyle\sum_{v_k\in\mathcal{V}} \exp\left(\{\tilde{q}^{(i)}_{jk}\}^{[m]}+ \{\tilde{\iota}^{(i)}_j\}^{[m]}\right)$,~$\forall v_j\in \mathcal{V}$.
    \\
    Obtain $h_j(\tilde{\bm{x}}^{[m]})= \sum_{u_i\in \mathcal{U}}h_{j,i}(\tilde{\bm{x}}^{[m]})$, $\forall j$, at all the processors using a distributed average consensus method.\\
     Set $m=m+1$, $t=0$, and initialize the Lagrangian multiplier vectors/matrices with their elements being set to $0$.\\
     Set $\{\tilde{q}^{(i)}_{jk}\}^{[m]}_{[0]}=\{\tilde{q}^{(i)}_{jk}\}^{[m-1]}$, $\{\iota^{(i)}_{j}\}^{[m]}_{[0]}=\{\iota^{(i)}_{j}\}^{[m-1]}$, $\{\tilde{A}^{(i)}_{k}\}^{[m]}_{[0]}=\{\tilde{A}^{(i)}_{k}\}^{[m-1]}$, $\{\tilde{B}_i\}^{[m]}_{[0]}=\{\tilde{B}_i\}^{[m-1]}$, $\forall i,j,k$.\\
    \While{The convergence of the Lagrangian multipliers between two consecutive iterations is not achieved \textbf{OR} $t=0$}{ 
     \nonl $\backslash \backslash$ \textbf{Solving the inner problem:}\\
    At each processor $i$, set the current values of the  Lagrangian multipliers in the respective term in~\eqref{eq:longlag2}.\\
    % Use the decomposition given in~\eqref{eq:SepLag} and replace it in~\eqref{eq:dualfunc}.\label{outerAlg}\\
    At each processor $i$, derive $\{\tilde{q}^{(i)}_{jk}\}^{[m]}_{[t+1]}$, $\{\tilde{\iota}^{(i)}_{j}\}^{[m]}_{[t+1]}$, $\{\tilde{A}^{(i)}_{k}\}^{[m]}_{[t+1]}$, $\{\tilde{B}_i\}^{[m]}_{[t+1]}$, $\forall j,k$ by applying the gradient descent method on the respective Lagrangian term~\eqref{eq:longlag2}.\\
    \nonl $\backslash \backslash$ \textbf{Solving the outer problem:}\\
    Using the above obtained values,
    at each processor $i$, obtain ${\lambda'^{(i)}_j}^{[t+1]},{\zeta_i}^{[t+1]},{\phi_i}^{[t+1]},{\rho_i}^{[t+1]},{\xi_i}^{[t+1]},{\delta_i}^{[t+1]},{\upsilon_i}^{[t+1]}$  $,{\chi_i}^{[t+1]},{\beta_i}^{[t+1]},{\varepsilon_i}^{[t+1]}$ locally via the gradient ascent method applied on the respective  term of~\eqref{eq:longlag2}.\\At each processor $i$, update ${\lambda^{(i)}_j}^{[t+1]}$ using~\eqref{eq:full_dist_update}.\\
    $t=t+1$
    }
  \hspace{-2mm}The current solution is given by: $\{\tilde{A}^{(i)}_{k}\}^{[m]}=\{\tilde{A}^{(i)}_{k}\}^{[m]}_{[t]}$,~$\forall i,k$,\\ \hspace{-3mm}$\{\tilde{B}_i\}^{[m]}=\{\tilde{B}_i\}^{[m]}_{[t]}$, $\forall i$, $\tilde{\bm{q}}^{[m]}=[{\{\tilde{q}^{(i)}_{jk}\}}^{[m]}_{[t]}]_{1\leq i\leq |\mathcal{U}|, 1\leq j,k\leq |\mathcal{V}|}$,\\ \hspace{-3mm} $\tilde{\bm{\iota}}^{[m]}=[{\{\tilde{\iota}^{(i)}_{j}\}}^{[m]}_{[t]}]_{1\leq i\leq |\mathcal{U}|, 1\leq j\leq |\mathcal{V}|}$,$\tilde{\bm{x}}^{[m]}=[\tilde{\bm{q}}^{[m]},\tilde{\bm{\iota}}^{[m]}]$.
    }
    Choose the obtained point as the final solution $\tilde{\bm{x}}^*=\tilde{\bm{x}}^{[m]}$.\\
 	 Obtain the values of $\bm{q}^*$ using ${\tilde{\bm{q}}}^*$ and replace them in~\eqref{eq:DeriveGamma} to obtain the optimal stationary distribution of the UAVs $\bm{\gamma}^*$.\\
 	  Obtain the values of ${\bm{\iota}}^*$ using $\tilde{\bm{\iota}}^*$ and replace them in~\eqref{eq:DeriveProb} to obtain the optimal transition  matrices of the UAVs $\bm{P}^*$.
    }
 \end{algorithm}

%                   \begin{table*}[t]
% \begin{minipage}{0.99\textwidth}
%         {\small       
%         \vspace{-4mm}\begin{equation}\label{eq:longlag2}
%           \hspace{-20mm}
%           \begin{aligned}
%             &L_i(\tilde{\bm{x}},\bm{\lambda},\bm{\gamma},\bm{\phi},\bm{\rho},\bm{\xi},\bm{\delta},\bm{\upsilon},\bm{\chi},\bm{\beta},\bm{\varepsilon}) =
%                   \sum_{j=1}^{13} T^{(i)}_j+\sum_{ v_j\in \mathcal{V}} \frac{\lambda^{(i)}_{j}}{|\mathcal{U}|} \log\left({\pi_j}\right) \\[-0.1em]
%       &+ \sum_{ v_j\in \mathcal{V}} \sum_{v_k\in\mathcal{V}}\lambda^{(i)}_{j} \frac{\exp\left(\{\tilde{q}^{(i)}_{jk}\}^{[m-1]}+\{\tilde{\iota}^{(i)}_j\}^{[m-1]}\right)}{h_j(\tilde{\bm{x}}^{[m-1]})} \left[ \{\tilde{q}^{(i)}_{jk}\}^{[m-1]}+\{\tilde{\iota}^{(i)}_{j}\}^{[m-1]}-\tilde{q}^{(i)}_{jk} -\tilde{\iota}^{(i)}_j -\log\left(h_j(\tilde{\bm{x}}^{[m-1]}) \right)\right]
%                     \end{aligned}
%                     \hspace{-20mm}
%           \end{equation}
%           }
%           \hrulefill
% \end{minipage}
% \vspace{-1mm}
% \end{table*}

        \begin{proposition}\label{propKKT2}
         Algorithm~\ref{alg:cent} generates a sequence of improved
feasible solutions that converge to a point $\bm{x}^*$
satisfying the KKT conditions of the original problem formulation~\eqref{eq:originalFormualtion}.
        \end{proposition}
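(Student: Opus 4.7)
The plan is to invoke the classical convergence framework for successive monomial (condensation) approximations in geometric programming due to Marks and Wright, which shows that if the monomial approximation of each posynomial satisfies three specific conditions at the current expansion point, then the fixed points of the inner-loop iterations are KKT points of the original (signomial) program. Concretely, I will verify that for every posynomial $g \in \{h_j, r^{(i)}, z^{(i)}_k, f^{(i)}_k, b^{(i)}\}$ appearing in a denominator of (\textbf{\~C1})--(\textbf{\~C4-2}), the arithmetic--geometric approximation $\hat g$ from Lemma~\ref{Lemma:ArethmaticGeometric} enjoys

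(i) the \emph{global lower bound} property $\hat g(\bm{x}) \le g(\bm{x})$ for all feasible $\bm{x}$, which is precisely the AM--GM inequality of Lemma~\ref{Lemma:ArethmaticGeometric};

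(ii) \emph{tightness at the expansion point}: $\hat g(\bm{x}^{[m-1]}) = g(\bm{x}^{[m-1]})$, which follows by substituting $\bm{x} = \bm{x}^{[m-1]}$ into \eqref{eq:approxPosMon} and using $\sum_k \alpha_k(\bm{x}^{[m-1]}) = 1$;

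(iii) \emph{first-order tangency}: $\nabla \hat g(\bm{x}^{[m-1]}) = \nabla g(\bm{x}^{[m-1]})$, which I will verify by logarithmic differentiation (differentiating $\log \hat g$ and $\log g$ term-by-term and using $u_k(\bm{x}^{[m-1]})/g(\bm{x}^{[m-1]}) = \alpha_k(\bm{x}^{[m-1]})$).

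Having established (i)--(iii), the first consequence is \emph{feasibility preservation and monotone improvement}: any $\bm{x}^{[m]}$ that satisfies the tightened constraints (with $\hat g$ in place of $g$ in the denominators) automatically satisfies the original constraints of \eqref{eq:formulationAxiliary} because the tightened feasible region is a subset of the original one; meanwhile, the previous iterate $\bm{x}^{[m-1]}$ is feasible for the $m$-th inner GP by tightness (ii), so the optimal value of the $m$-th GP is no worse than the objective at $\bm{x}^{[m-1]}$. Combined with the fact that the objective of \eqref{eq:formulationAxiliaryRelaxed} is identical to that of \eqref{eq:formulationAxiliary} and is bounded below on the feasible set, this yields a monotonically non-increasing, bounded sequence of objective values, hence convergence of the objective, and (along a subsequence, using compactness of the closure of the feasible box in (\textbf{\~C5})--(\textbf{\~C6})) convergence of iterates to some limit point $\bm{x}^*$. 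Fact~\ref{fact:2} and Fact~\ref{fact:1} then let us translate the $\bm{x}^*$ in \eqref{eq:formulationAxiliaryRelaxed} back to a solution of the original problem~\eqref{eq:originalFormualtion}.

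The decisive step---and the one I expect to be the main obstacle---is showing that the KKT stationarity conditions of the $m$-th inner GP at $\bm{x}^{[m]} = \bm{x}^{[m-1]} = \bm{x}^*$ coincide with the KKT stationarity conditions of the signomial program~\eqref{eq:originalFormualtion} at $\bm{x}^*$. Here the three properties (i)--(iii) come into play simultaneously: property (ii) guarantees that the approximated constraint values at $\bm{x}^*$ equal the true constraint values, so active sets agree; property (iii) guarantees that the gradients of the approximated constraints at $\bm{x}^*$ equal those of the true constraints, so the linear combination of constraint gradients that appears in the stationarity equation is unchanged; and property (i) together with complementary slackness guarantees that the inner-GP multipliers satisfy non-negativity and act on the correct active constraints. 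After carrying out this gradient-matching computation component-by-component for each of $h_j, r^{(i)}, z^{(i)}_k, f^{(i)}_k, b^{(i)}$, one concludes that the KKT system of the inner GP at the fixed point is identical to the KKT system of \eqref{eq:originalFormualtion} at $\bm{x}^*$, completing the proof. The only technical care needed is to invoke a constraint qualification (e.g., Slater or LICQ) for both problems, which holds generically for GPs with a strictly positive interior solution that our initialization in line~2 of Algorithm~\ref{alg:cent} provides.
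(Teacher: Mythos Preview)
Your proposal is correct and follows essentially the same route as the paper: both invoke the inner-approximation (Marks--Wright / successive convex approximation) framework, verify the three conditions (underestimation, value matching, and gradient matching at the expansion point) for the arithmetic--geometric monomial approximation, and then appeal to Fact~\ref{fact:2} and Fact~\ref{fact:1} to pull the conclusion back from~\eqref{eq:formulationAxiliary} to~\eqref{eq:originalFormualtion}. The only cosmetic difference is that the paper states and checks the three conditions at the level of the full ratio constraints $v(\bm{x})=j(\bm{x})/g(\bm{x})$ versus $\hat v(\bm{x})=j(\bm{x})/\hat g(\bm{x})$ (carrying out the quotient-rule gradient computation explicitly), whereas you state them for the denominator posynomials $g$ versus $\hat g$ and then lift; the two formulations are equivalent.
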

         \vspace{-3mm}
      \begin{proof}
      The proof is provided in Appendix~\ref{app:1}.
      \end{proof}
      \begin{remark}
      The time taken for the Markov chains associated with the movements of the UAVs to approach their stationary distributions can be studied by analyzing their \textit{mixing times} that can be characterized based on the obtained transition matrices of the UAVs~\cite{levin2017markov,aldous}.
      \end{remark}
%         \begin{figure*}[thb!]
% \vspace{-1mm}
%     \centering\includegraphics[width=.4\linewidth]{}
%     \caption{Flow chart of the centralized approach to solve the problem.
%   }
%     \label{fig:centralSol}
%     % \vspace{-7mm}
% \end{figure*}
    %  {\color{blue} The flow chart of the centralized method is depicted in Fig.~\ref{fig:centralSol}, which summarizes the main steps taken to solve the problem.}
         \vspace{-1mm}
        \subsection{Obtaining Random Walks and Inspection Policies: Distributed Approach}\label{sebsec:Dist}
        So far, our proposed method is a centralized approach, which asks for a powerful centralized processor to obtain the solution. This raises two concerns: i) the processor should have a global knowledge about all the parameters of all the UAVs, i.e., at each iteration, it should have the knowledge of the entire sets $\bm{q}=[\bm{q}_1,\bm{q}_2,\cdots,\bm{q}_{|\mathcal{U}|}]$ and $\bm{\iota}=[\bm{\iota}_1,\bm{\iota}_2,\cdots,\bm{\iota}_{|\mathcal{U}|}]$; and the inspection and transition costs of all the UAVs. This is due to the fact that solving the convex programming in line~\ref{Alg:Gpconvexste} of Algorithm~\ref{alg:cent} is carried out using the gradient descent method~\cite{boyd}, in which the iterative update of the Lagrangian multipliers requires global knowledge of the current values of all the Lagrangian multipliers and all the UAVs' parameters.  Obtaining this knowledge might be cumbersome/infeasible in some scenarios. ii) The size of the problem is $|\mathcal{U}|\times \left( |\mathcal{V}|^2+|\mathcal{V}|\right)$, which, for a given map, escalates quickly as the number of UAVs increases.\footnote{The auxiliary variables are ignored. For each UAV, we need to obtain a $|\mathcal{V}| \times |\mathcal{V}|$ transition matrix and an inspection probability vector of size $|\mathcal{V}|$.} Based on the above two considerations, we aim to develop a distributed algorithm that firstly eliminates the requirement on global knowledge and secondly is scalable w.r.t. the number of UAVs. To achieve this, we aim to break down the problem into $|\mathcal{U}|$ individual sub-problems, each of which can be solved using a single \textit{processor}. A processor can refer to the UAV's computing devices, the computing facilities of the UAV's base node, or any third party computing facility.
        
        Nevertheless, in our case, breaking down the problem is not trivial due to the coupled structure of the Lagrangian function.
        Given problem~\eqref{eq:formulationAxiliaryRelaxed}, consider the following change of variables: $\tilde{q}^{(i)}_{jk} = \log \left({q}^{(i)}_{jk} \right)$, $\tilde{\iota}^{(i)}_j= \log \left({\iota}_i(v_j) \right)$, $\tilde{A}^{(i)}_{k}= \log \left({A}_{ki} \right)$ $\tilde{B}_i= \log \left( {B}_i\right)$, and $\tilde{\bm{x}}= [\tilde{\bm{q}}, \tilde{\bm{\iota}}]$, where $\tilde{\bm{q}}, \tilde{\bm{\iota}}$ are defined similar to ${\bm{q}}, {\bm{\iota}}$ considering the new variables. Writing problem~\eqref{eq:formulationAxiliaryRelaxed} w.r.t. these new variables and taking the $\log$ from all the inequality constraints will result in a convex programming problem. We omit the resulting problem in the interest of space; however, we derive the \textit{Lagrangian function} of the problem in~\eqref{eq:longlag1}, which is of particular interest, where $\bm{\lambda},\bm{\zeta},\bm{\phi},\bm{\rho},\bm{\xi},\bm{\delta},\bm{\upsilon},\bm{\chi},\bm{\beta}$, and $\bm{\varepsilon}$ are the vector/matrix of Lagrangian multipliers. Note that we assumed $\hat{\theta}_j\rightarrow 1$, $\forall i$, for convenience (see Footnote~\ref{Foot:1}). Also, $h_j(\bm{x})$ is expanded using~\eqref{eq:app1} since except $h_j(\bm{x})$ all the monomial approximations are functions of the UAV index $i$ and can be locally computed without the knowledge of other UAV parameters. This expansion results in the last term in the second line and the term in the third line of~\eqref{eq:longlag1}.  The corresponding \textit{dual function} of the problem can be written as:
        % \vspace{-.5mm} 
  \begin{align}\label{eq:dualfunc}
  \begin{aligned}
  D(\bm{\lambda},\bm{\zeta},\bm{\phi},\bm{\rho},&\bm{\xi},\bm{\delta},\bm{\upsilon},\bm{\chi},\bm{\beta},\bm{\varepsilon})\\
  &= \min_{\tilde{\bm{x}}}{L(\tilde{\bm{x}},\bm{\lambda},\bm{\zeta},\bm{\phi},\bm{\rho},\bm{\xi},\bm{\delta},\bm{\upsilon},\bm{\chi},\bm{\beta},\bm{\varepsilon})}.
  \end{aligned}
  \end{align}
    %  \vspace{-1.5mm}
 Consequently, the \textit{dual problem} is given by:
 \begin{align}\label{ref:dualProb}
  \max_{\bm{\lambda},\bm{\zeta},\bm{\phi},\bm{\rho},\bm{\xi},\bm{\delta},\bm{\upsilon},\bm{\chi},\bm{\beta},\bm{\varepsilon}} D(\bm{\lambda},\bm{\zeta},\bm{\phi},\bm{\rho},\bm{\xi},\bm{\delta},\bm{\upsilon},\bm{\chi},\bm{\beta},\bm{\varepsilon}),
 \end{align}
 where all the elements of each Lagrangian vector/matrix is assumed to be in ${\scriptsize{ \mathbb{R}^+}}$.
 Since the problem in hand is a convex optimization problem in standard form, the \textit{duality gap} is zero. As a result, the solution of problem~\eqref{eq:formulationAxiliaryRelaxed} coincides with the solution of problem~\eqref{ref:dualProb}. Problem~\eqref{ref:dualProb} can be considered as a max-min optimization problem.  Throughout, we recall the inner minimization problem, i.e., deriving the dual function using~\eqref{eq:dualfunc}, as the \textit{inner-problem} and the outer maximization problem, i.e., deriving the Lagrangian multipliers, as the \textit{outer-problem}. Based on the convexity of the original problem, which directly results in the concavity of the dual problem~\cite{boyd}, this max-min problem can be solved iteratively by assuming a set of Lagrangian multipliers for the inner problem to obtain the corresponding solution, i.e., $\tilde{\bm{x}}$, then replacing $\tilde{\bm{x}}$ in the Lagrangian function to obtain the dual-function and solve the outer problem to find the new Lagrangian multipliers. This process can be repeated until the convergence occurs. Nonetheless, since the Lagrangian function is not a separable function w.r.t. the UAV indices, this process cannot be carried out in a distributed fashion in a straightforward manner. In the following, considering~\eqref{eq:longlag1} we present two observations, which are of particular interest and are the cornerstones of our proposed consensus-based distributed algorithm.
 
%  The optimal Lagrangian multipliers are obtained by solving the dual problem given by:
%  \begin{equation}\label{eq:nabla}
%  \nabla D(\bm{\lambda},\bm{\gamma},\bm{\varphi},\bm{\rho},\bm{\xi},\bm{\delta},\bm{\upsilon},\bm{\chi},\bm{\beta},\bm{\varepsilon})=0.
%  \end{equation}

% Considering~\eqref{eq:longlag1}, the following two observations are of particular interest. 
\vspace{-.2mm}
 \textbf{Observation 1:} Assuming known values for the Lagrangian multipliers, except for the term on the third line, all of the terms are either known  or can be broken down w.r.t. the index of UAVs. However, in the term in the third line, $h_j(\tilde{\bm{x}}^{[m-1]})$ appears inside the argument of the $\log$ function, where $h_j(\tilde{\bm{x}}^{[m-1]})= \sum_{u_i\in \mathcal{U}}\sum_{v_k\in\mathcal{V}} \exp{\left(\{\tilde{q}^{(i)}_{jk}\}^{[m-1]}+\{\tilde{\iota}^{(i)}_j\}^{[m-1]}\right)}$, making the term inseparable w.r.t. the index $i$. 

\textbf{Observation 2:} Assuming a solution $\tilde{\bm{x}}=[\tilde{\bm{q}},\tilde{\bm{\iota}}]$, except for the third term in the second line, each term can be broken down w.r.t. the index of UAVs. Subsequently, for UAV $u_i$, we recall ${\zeta}_i,{\phi}_i,{\rho}_i,{\xi}_i,{\delta}_i,{\upsilon}_i,{\chi}_i,{\beta}_i,{\varepsilon}_i$ as the \textit{private (local) variables} and ${\lambda}_i$ as a \textit{public (global) variable}.

  Therefore, the term in the third line of~\eqref{eq:longlag1} makes our problem coupled and inseparable w.r.t. the UAV parameters and Lagrangian multipliers. To tackle this issue, we develope a consensus-based distributed algorithm, which consists of two steps to find the solution of the dual problem: i) solving the inner problem distributedly using distributed average consensus and the gradient descent method; ii) solving the outer problem distributedly using the consensus gradient method. We first treat each term of~\eqref{eq:longlag1} as a (hypothetically) separate term and rewrite the Lagrangian function as follows, in which the public variable ${\lambda}_j$ is replaced by a local variable   ${\lambda}^{(i)}_{j}$:
        \begin{equation}\label{eq:SepLag}
        \begin{aligned}
            L(\tilde{\bm{x}},\bm{\lambda},\bm{\zeta},&\bm{\phi},\bm{\rho},\bm{\xi},\bm{\delta},\bm{\upsilon},\bm{\chi},\bm{\beta},\bm{\varepsilon})\\&=\sum_{u_i\in \mathcal{U}}L_i(\tilde{\bm{x}},\bm{\lambda},\bm{\zeta},\bm{\phi},\bm{\rho},\bm{\xi},\bm{\delta},\bm{\upsilon},\bm{\chi},\bm{\beta},\bm{\varepsilon}),
            \end{aligned}
        \end{equation}
        where function $L_i$ is given in~\eqref{eq:longlag2}. 
        %Our distributed algorithm follows the same procedure as algorithm~\ref{alg:cent}, except for lines~\ref{Alg:Gpconvexste}-\ref{Alg:point2} in Algorithm~\ref{alg:cent} that here are implemented in a parallel fashion. 
        Our distributed algorithm  solves the problem through a series of GP approximations, which consists of two phases: i) for a given set of monomial approximations, obtaining the optimal solution; ii) using the obtained solution to derive the monomial approximations for the next round.  The first phase itself requires solving the outer and the inner problems iteratively using gradient-based methods. In the following, obtaining each optimal solution is considered as one ``iteration", while the iterations involved in solving the outer and the inner problem are called ``gradient-iteration".
        % \vspace{-1mm}
 \subsubsection{Solving the inner problem distributedly using distributed average consensus and the gradient descent method} Considering Observation 1, given $h_j(\tilde{\bm{x}}^{[m-1]})$, the inner-problem can be written as the sum of separable terms w.r.t. the UAVs' indices. Consider $h_j(\tilde{\bm{x}}^{[m-1]})= \sum_{u_i\in \mathcal{U}}h^{(i)}_{j}(\tilde{\bm{x}}^{[m-1})$, where $h^{(i)}_{j}(\tilde{\bm{x}}^{[m-1]})= \sum_{v_k\in\mathcal{V}} \exp\left(\{\tilde{q}^{(i)}_{jk}\}^{[m-1]}+ \{\tilde{\iota}^{(i)}_j\}^{[m-1]}\right)$. Note that $h^{(i)}_{j}(\tilde{\bm{x}}^{[m-1]})$ can be computed locally at processor~$i$. Thus, $h_j(\tilde{\bm{x}}^{[m-1]})$, the sum of those values, can be obtained distributedly using a~\textit{distributed~average~consensus} method~\cite{xiao2007distributedCons1,distributedCons2,distributedCons3,distributedCons4,distributedCons5}. Afterward, the gradient descent method can be applied locally on each term of~\eqref{eq:longlag2}. Since at iteration $m$ the value of  $h_j(\tilde{\bm{x}}^{[m-1]})$ does not change through the gradient decent updates, i.e., the gradient-iterations, it needs to be calculated once prior to solving the inner and the outer problem, and thus the consensus method does not have a significant impact on the convergence speed since the convergence is usually achieved in a few number of iterations (e.g.,  $55$ in Section~\ref{dist:cons_conv}). 
%  \vspace{-1mm}
 
  \subsubsection{Solving the outer problem distributedly using the consensus gradient method}
Considering Observation 2, our approach consists of two steps: i) updating the local variables at each processor, ii) updating the global variable. Each processor first locally derives the values of the local variables by applying the gradient ascent method on~\eqref{eq:SepLag}. For example, for $\rho_i$, at gradient-iteration $t+1$, processor $i$ performs as follows:
\begin{equation}
\begin{aligned}
   \rho_i^{[t+1]}=&\rho_i^{[t]}+ c_{\rho} \Big(\nabla_{\rho^i} D^i({\lambda^{(i)}_j}^{[t]},{\zeta_i}^{[t]},{\phi_i}^{[t]},{\rho_i}^{[t]},\\&{\xi_i}^{[t]},{\delta_i}^{[t]},{\upsilon_i}^{[t]},{\chi_i}^{[t]},{\beta_i}^{[t]},{\varepsilon_i}^{[t]})\Big),
   \end{aligned}
\end{equation}
where $c_{\rho}$ is the step size. Regarding the global variable, $\lambda_{j}$, $\forall v_j$, processor $i$ obtains a pseudo version of it ${\lambda'}^{(i)}_{j}$ as follows:
\begin{equation}
\begin{aligned}
  &{{\lambda'}^{(i)}_{j}}^{[t]}={\lambda^{(i)}_j}^{[t]}+ c \Big(\nabla_{\lambda^{(i)}_j} D^i({\lambda^{(i)}_j}^{[t]},\\&{\zeta_i}^{[t]},{\phi_i}^{[t]},{\rho_i}^{[t]},{\xi_i}^{[t]},{\delta_i}^{[t]},{\upsilon_i}^{[t]},{\chi_i}^{[t]},{\beta_i}^{[t]},{\varepsilon_i}^{[t]})\Big),
   \end{aligned}
\end{equation}
where, the local copies of the global variable ($\lambda^{(i)}_{j}$-s) are derived by employing the \textit{consensus gradient method}~\cite{ref:DistCens}:
\begin{align}\label{eq:full_dist_update}
{\lambda^{(i)}_{j}}^{[t+1]}=\sum_{m=1}^{|\mathcal{U}|} \Big(\mathbf{W}^\vartheta\Big)_{im} {\lambda'^{(m)}_{j}}^{[t]},
\end{align}
where $\mathbf{W}=\mathbf{I}-\epsilon \mathbf{L}(G_p)$, with $\mathbf{L}(G_p)$ the Laplacian matrix of the processors network graph $G_p$ and $\epsilon \in (0,1)$, and $\vartheta\in \mathbb{N}$ denotes the number of conducted consensus iterations among the adjacent processors. In this method, the adjacent processors perform $\vartheta$ consensus iterations by exchanging the local copies of ${{\lambda'}^{(i)}_{j}}$-s before updating ${{\lambda}^{(i)}_{j}}$. Due to the convexity of the Lagrangian function and the concavity of the dual function, the minimax theorem~\cite{fan1953minimax} holds for~\eqref{ref:dualProb} and thus the order of solving the inner and the outer problem can be interchanged. The pseudo-code of our distributed algorithm is given in Algorithm~\ref{alg:fulldist}. The convergence of our distributed algorithm to the KKT solution of~\eqref{eq:originalFormualtion} is the result of the convergence of the \textit{consensus gradient method}~\cite{ref:DistCens} along with the convergence of the proposed GP approximation method (see Proposition~\ref{propKKT2}).

%   \begin{figure*}[t]
% \vspace{-1mm}
%     \centering\includegraphics[width=.4\linewidth]{}
%     \caption{Flow chart of the distributed approach to solve the problem.
%   }
%     \label{fig:centralSol}
%     % \vspace{-7mm}
% \end{figure*}
%      {\color{blue} The flow chart of our distributed method is depicted in Fig.~\ref{fig:centralSol}, which summarizes the main steps taken to solve the problem.}
\subsection{ Complexity Analysis and Comparison}

In our proposed centralized approach, all the computations are carried out in a central processor that has the entire knowledge of the UAVs' parameters and can solve the convex optimization problem proposed in step~\ref{Alg:Gpconvexste} of Algorithm~\ref{alg:cent}. This step is often implemented using a gradient descent technique even in commercial software. Considering the problem formulation in \eqref{eq:formulationAxiliaryRelaxed}, the dimension of the solution to the problem (including the auxiliary variables) is $|\mathcal{U}|\times \left( \underbrace{ |\mathcal{V}|^2+|\mathcal{V}|}_{(a)}+\underbrace{|\mathcal{V}|+1}_{(b)}\right)$, where the terms denoted by (a) correspond to the main solutions of the problem ($\bm{q}, \bm{\iota}$) and the terms denoted by (b) correspond to the auxiliary variables ($\bm{A},\bm{B}$). Further, the number of Lagrangian multipliers is $|\mathcal{V}| +|\mathcal{U}|\times \left(4+4 |\mathcal{V}|+|\mathcal{V}|^2\right)$. All of which need to be stored in the same processor and updated at the same time. 

In the proposed decentralized algorithm, we disperse the computations across multiple processors, each of which is associated with  $ |\mathcal{V}|^2+2|\mathcal{V}|+1$ primal variables and $4+5 |\mathcal{V}|+|\mathcal{V}|^2$ dual variables.\footnote{Note that the public variable $\lambda_{j}$, $\forall v_j$, is replicated at each processor.} This removes the scaling of the space complexity of the algorithm with the number of UAVs and allows parallel processing across multiple processors, both of which are highly desired upon existence of a large number of UAVs. Nevertheless, due to the existence of the public variable and the use of the consensus gradient method, messages of size $|\mathcal{V}|$ need to be exchanged among the processors at each iteration, introducing communication overhead. Note that the number of iterations required for the two algorithms to converge is similar since the distributed algorithm mimics the centralized algorithm. However, the convergence will roughly\footnote{If we ignore the time required for message exchange among the processors.} be $|\mathcal{U}|$ times faster in the distributed algorithm since all the processes are conducted in parallel among $|\mathcal{U}|$ processors. 
 In summary, the main differences between the proposed centralized and decentralized methods lie in memory usage, prallelization/speedup, and communication overhead.

        \section{Simulation Results}\label{sec:sim}
   
        \subsection{Simulation Setup}
     
      We consider $200$ realizations of a network graph consisting of $10$ sites; the distance between the sites is chosen uniformly at random between $5 \textrm{km}$ to $50\textrm{km}$.  The base node  of each UAV is chosen uniformly at random among the sites, and the desired inspection criteria of the nodes $\bm{\pi}$ is a randomly generated normalized vector.  We consider fixed wing UAVs moving with the average speed of $25\textrm{m/s}$, where $c_1=9.26\times 10^{-4}$ and $c_2=2.25 \times 10^3$ in~\eqref{eq:energyModel1}~\cite{zhangEnergy}. If a UAV decides to inspect a site, it slows down its movement speed to $12.5\textrm{m/s}$ to conduct the inspection. The duration of inspection of each site is chosen uniformly at random between $5 \textrm{min}$ to $25 \textrm{min}$ to obtain the energy of data collection.
      Modern UAVs can be equipped with hyperspacial sensors,  multi-spectral targeting systems (MTS), and light detection and ranging (LIDAR). We consider a basic application of data collection using imaging, where each UAV is equipped with a mini gyro stabilized EO/IR drone FLIR thermal imaging camera and an HD camera with power of $8\textrm{W}$ and $9\textrm{W}$, respectively. It is assumed that the UAVs have enough battery to fly between $60\textrm{km}$ to $120\textrm{km}$ when their sensors are turned off. In the following, all the figures represent the average performance over the $200$ realizations unless otherwise stated. Also, $\hat{\theta}_j=0.9$, $\forall v_j\in\mathcal{V}$ in~\eqref{ProbConst1}, and $\tilde{\theta}_i=0.7$, $\forall u_i\in\mathcal{U}$ in~\eqref{eq:battr}.  Since there is a lack of studies devoted to investigating the stochastic surveillance for energy limited random walkers with random inspection policies, we propose the following  stochastic surveillance baselines inspired by the Metropolis–Hastings (MH) Markov chain Monte-Carlo (MCMC) technique~\cite{MCMC1,MCMC2,MCMC3} combined with hard map partitioning:
      
      \textbf{1) MH-MCMC with Random Map Partitioning (MH\_RMP):} This baseline randomly partitions the set of network sites into disjoint subsets according to the number of UAVs, each of which contains roughly the same number of sites and only one base node. Each UAV inspects the nodes belonging to the same subset as its base. The UAVs sensors always turn on upon passing the sites. Given the normalized inspection criteria of the nodes inside each subset, the transition matrices of the Markov chains associated with the UAVs movement are obtained using the MH technique~\cite{MCMC1,MCMC2,MCMC3}.
      
         \textbf{2) MH-MCMC with Random Map Partitioning and Optimized Inspection Policies\\ (MH\_RMP\_OI):} It follows the same procedure as MH\_RMP except that it further optimizes the UAVs' inspection policies to reduce the energy consumption.
         
            \textbf{3) MH-MCMC with  Distance-based Map Partitioning  (MH\_DMP):} This baseline is similar to MH\_RMP with a different clustering rule. The nodes inside  each subset are chosen to be the closest nodes (corresponding to the least movement energy) to the respective base node. The rest of the procedure is the same as MH\_RMP.
            
             \textbf{4) MH-MCMC with  Distance-based Map Partitioning and Optimized Inspection Policies (MH\_DMP\_OI):} It follows the same procedure as MH\_DMP except that it further\\ optimizes the UAVs' inspection policies.
             
The results presented in Sections~\ref{subs:enery}, \ref{sec:simC} are obtained using the centralized algorithm, while the convergence of the distributed algorithm is studied in Section~\ref{dist:cons_conv}. 
     \begin{figure*}[t]
  \centering
\minipage{0.32\textwidth}
\vspace{-3mm}
  \includegraphics[width=2.1in, height=1.55in]{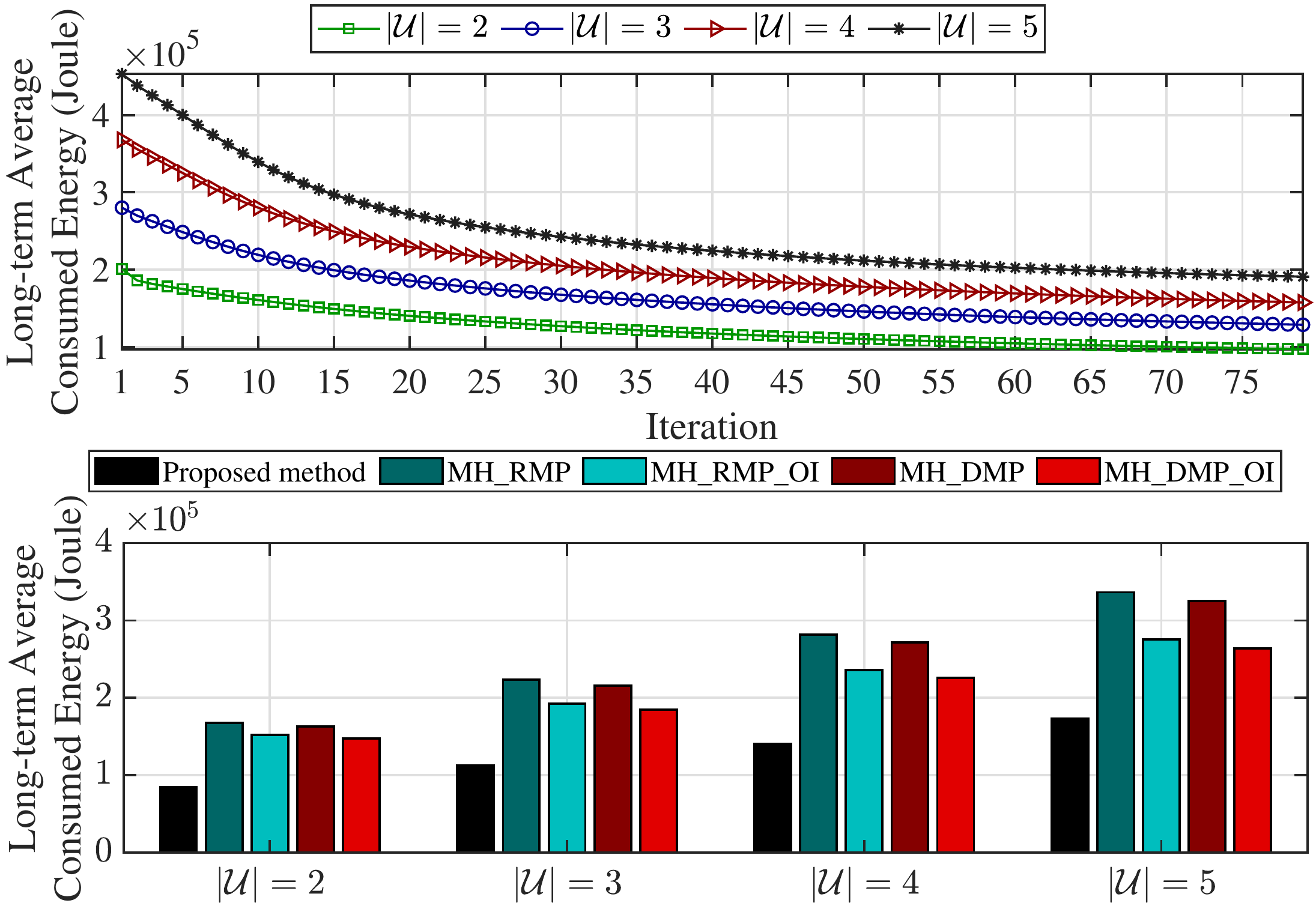}
  \caption{The long-term average consumed energy during the surveillance for different numbers of UAVs w.r.t. the iteration count of our algorithm (top plot). The corresponding comparison with the baselines (bottom plot).}\label{fig:sim1}
\endminipage\hfill
\minipage{0.32\textwidth}
\vspace{-3mm}
  \includegraphics[width=2.1in, height=1.55in]{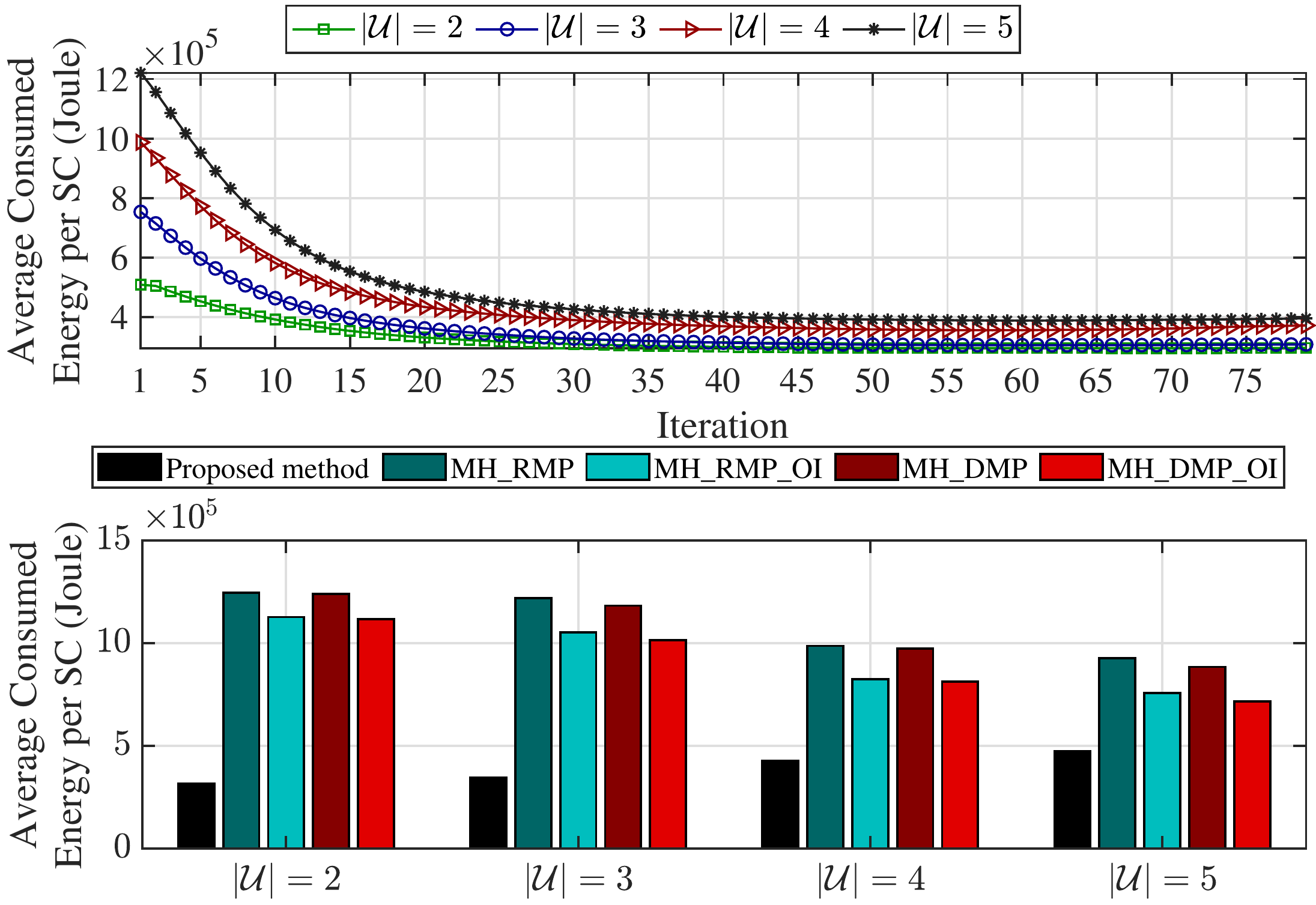}
  \caption{The average consumed energy per surveillance cycle (SC) for different numbers of UAVs w.r.t. the iteration count of our algorithm (top plot). The corresponding comparison with the baselines (bottom plot).}\label{fig:sim2}
\endminipage\hfill
\minipage{0.32\textwidth}
  \includegraphics[width=2.1in, height=1.55in]{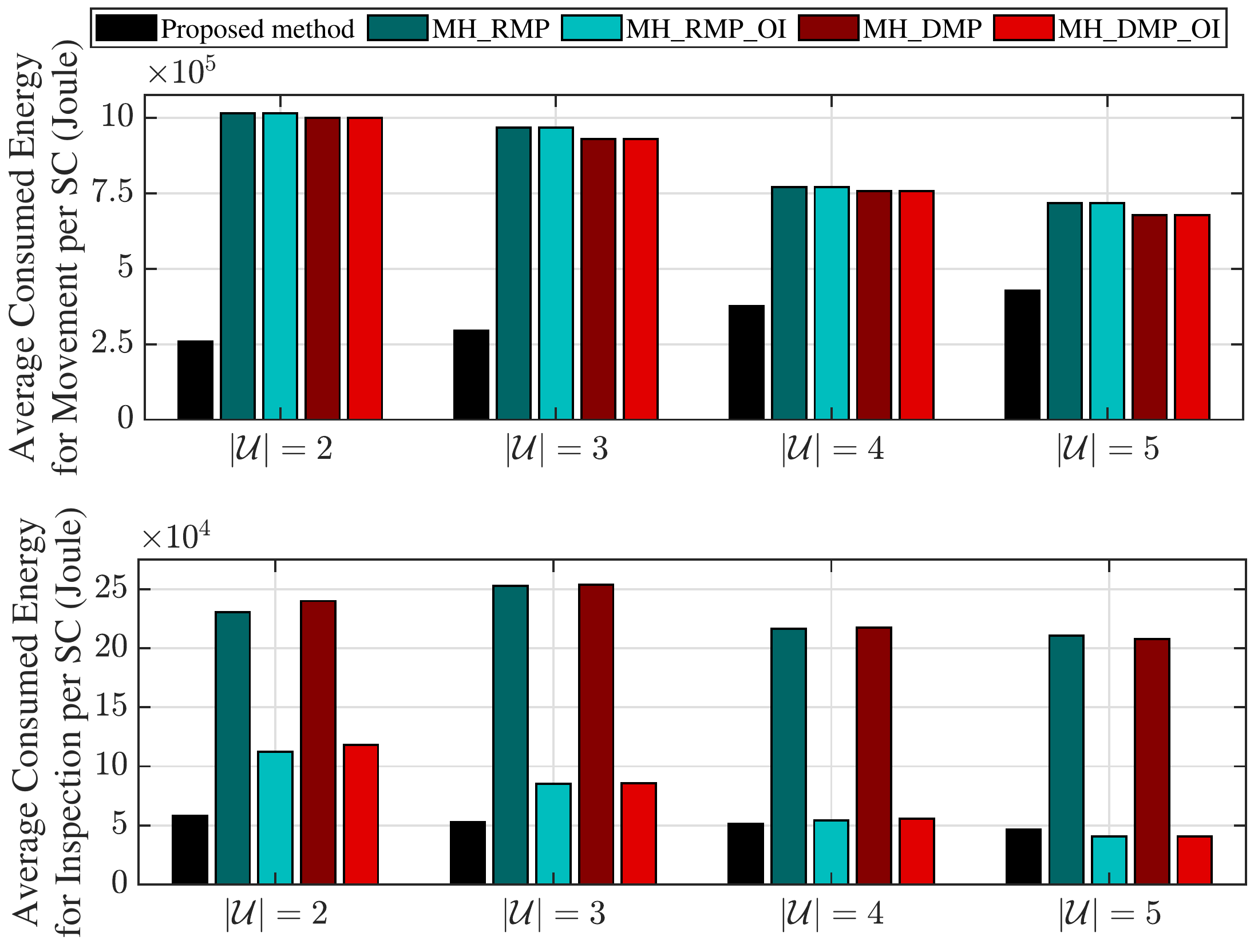}
  \caption{Top: The comparison of the average energy used for movement per surveillance cycle (SC) between our method and the baselines. Bottom: The comparison of the average energy used for inspection per SC between our method and the baselines. }\label{fig:sim3}
\endminipage
\end{figure*}
  \begin{figure*}[t]
  \centering
  \minipage{0.32\textwidth}
%Conv_Glob 
\vspace{-6.5mm}
 \includegraphics[width=2.1in, height=1.9in]{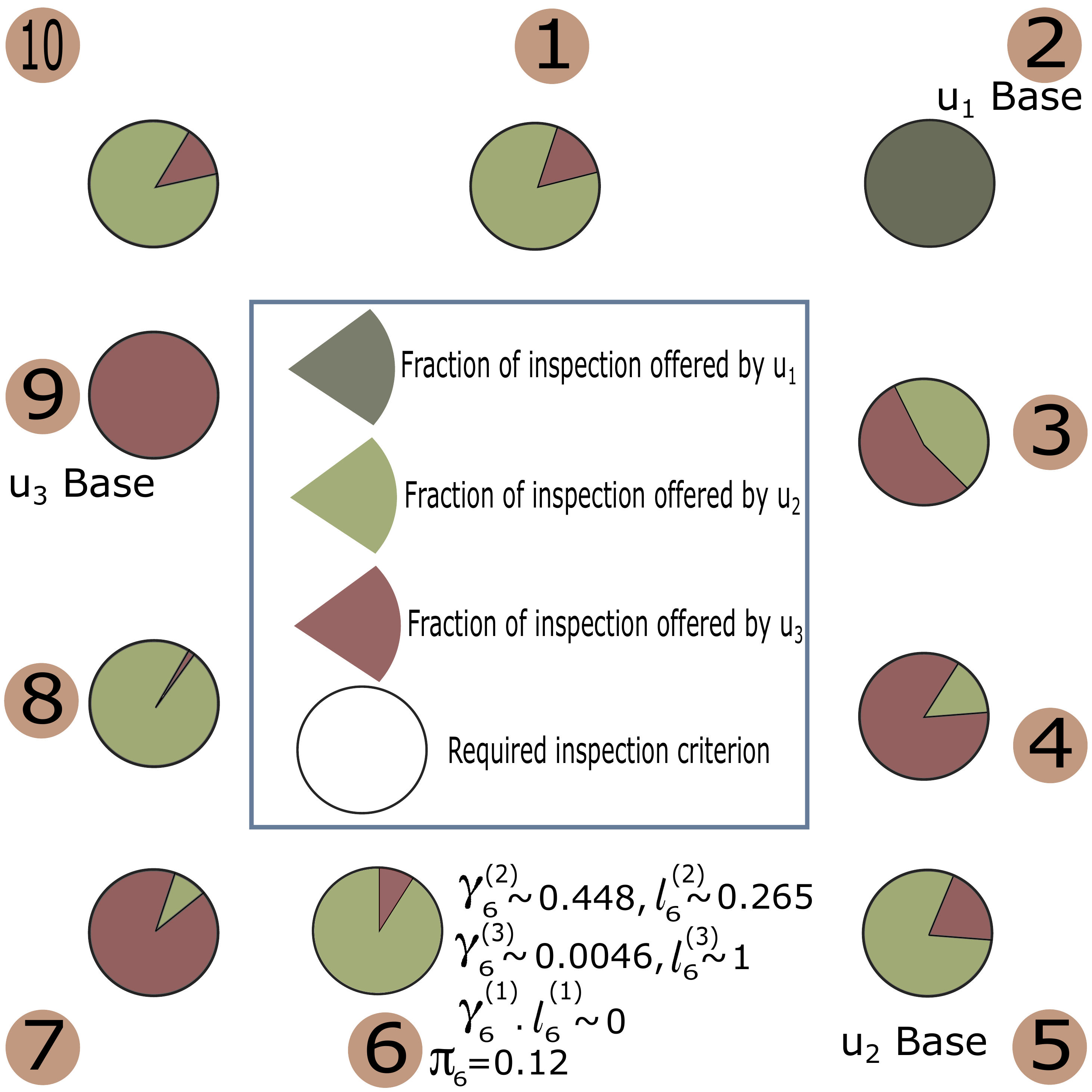}
	\caption{An example of soft map partitioning for a map of $10$ nodes upon having $3$ UAVs. The weights of the edges between the nodes are omitted for better readability.}
	\label{fig:sim9}
\endminipage\hfill
\minipage{0.32\textwidth}
  \includegraphics[width=\linewidth]{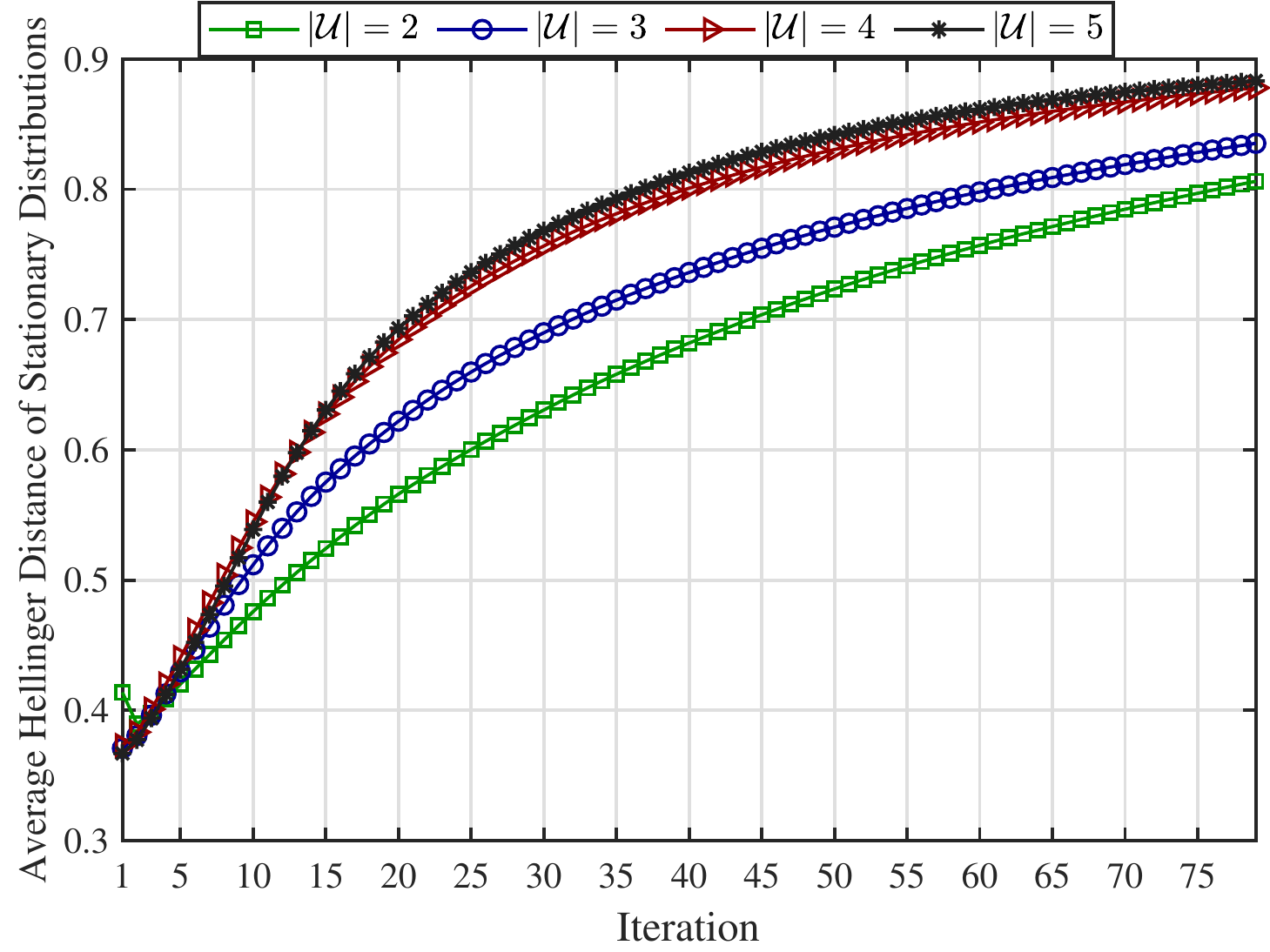}
  \caption{The (pairwise) average Hellinger distance between the stationary distributions of the UAVs for different numbers of UAVs.}\label{fig:sim4}
\endminipage\hfill
\minipage{0.32\textwidth}
  \includegraphics[width=\linewidth]{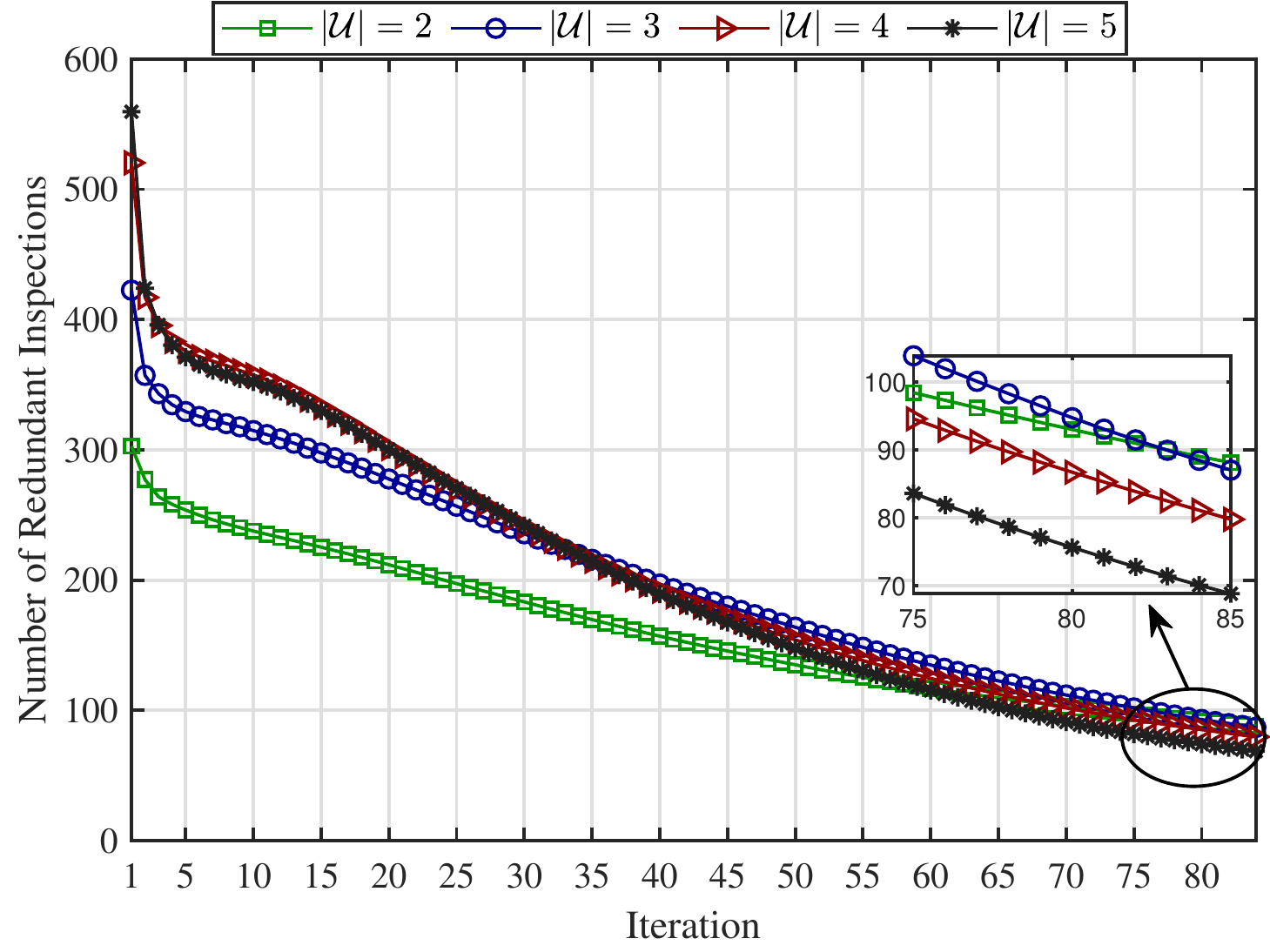}
  \caption{The number of redundant inspections upon letting the UAVs move for $10000$ time instances for different numbers of UAVs.}\label{fig:sim5}
\endminipage\hfill
\end{figure*}
    \begin{figure*}[h]
    \vspace{-9mm}
  \centering
  \minipage{0.32\textwidth}
    \vspace{11.1mm}
  \includegraphics[width=2.1in, height=1.55in]{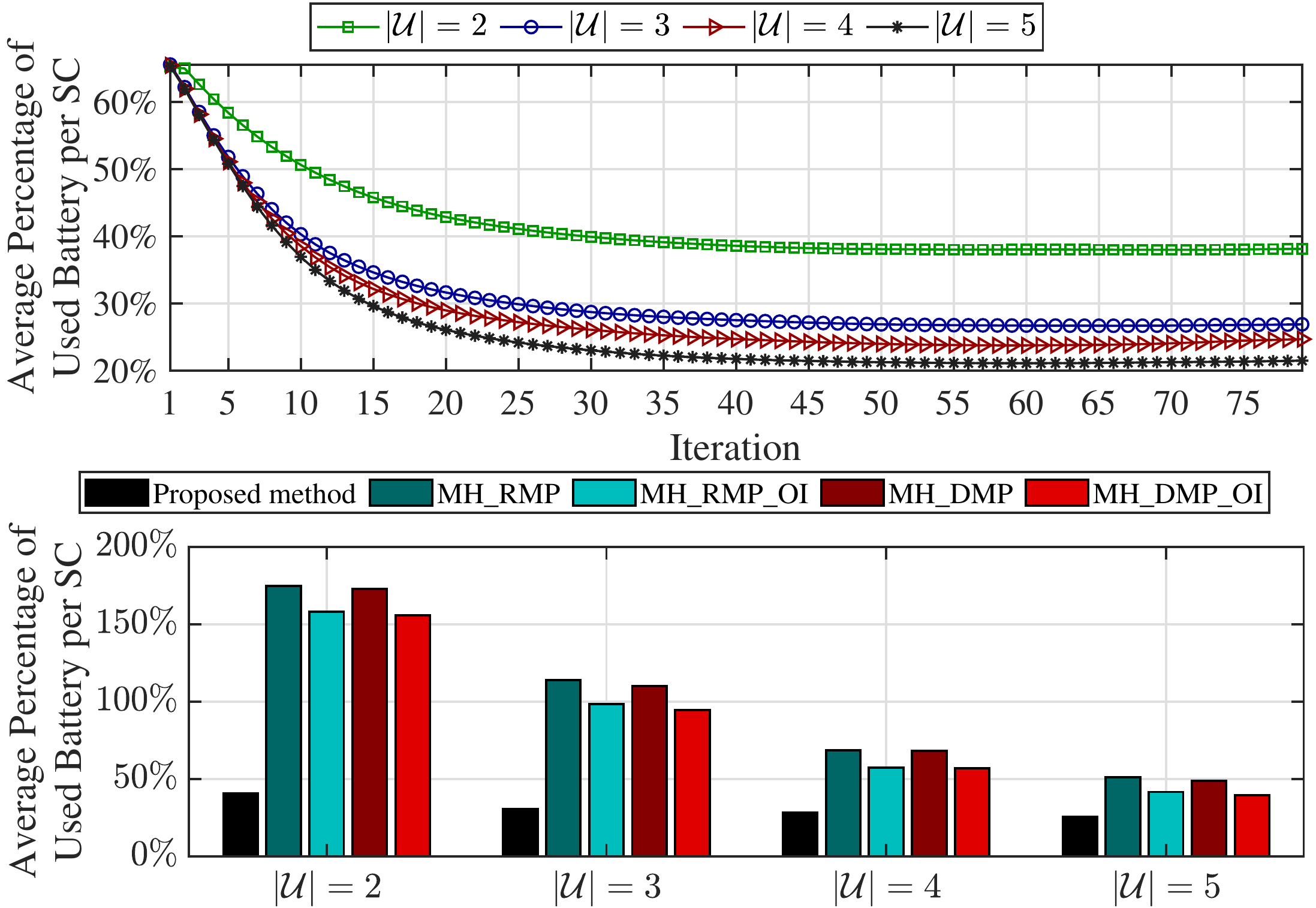}
  \caption{The average percentage of \hbox{used~battery} per surveillance cycle {\small(SC)}  for different numbers of UAVs \hspace{-.4mm}w.r.t. \hspace{-.65mm} the iteration count of our~algorithm (top). The corresponding comparison with the  baselines (bottom).}\label{fig:sim6}
\endminipage\hfill
\minipage{0.32\textwidth}
  \includegraphics[width=\textwidth]{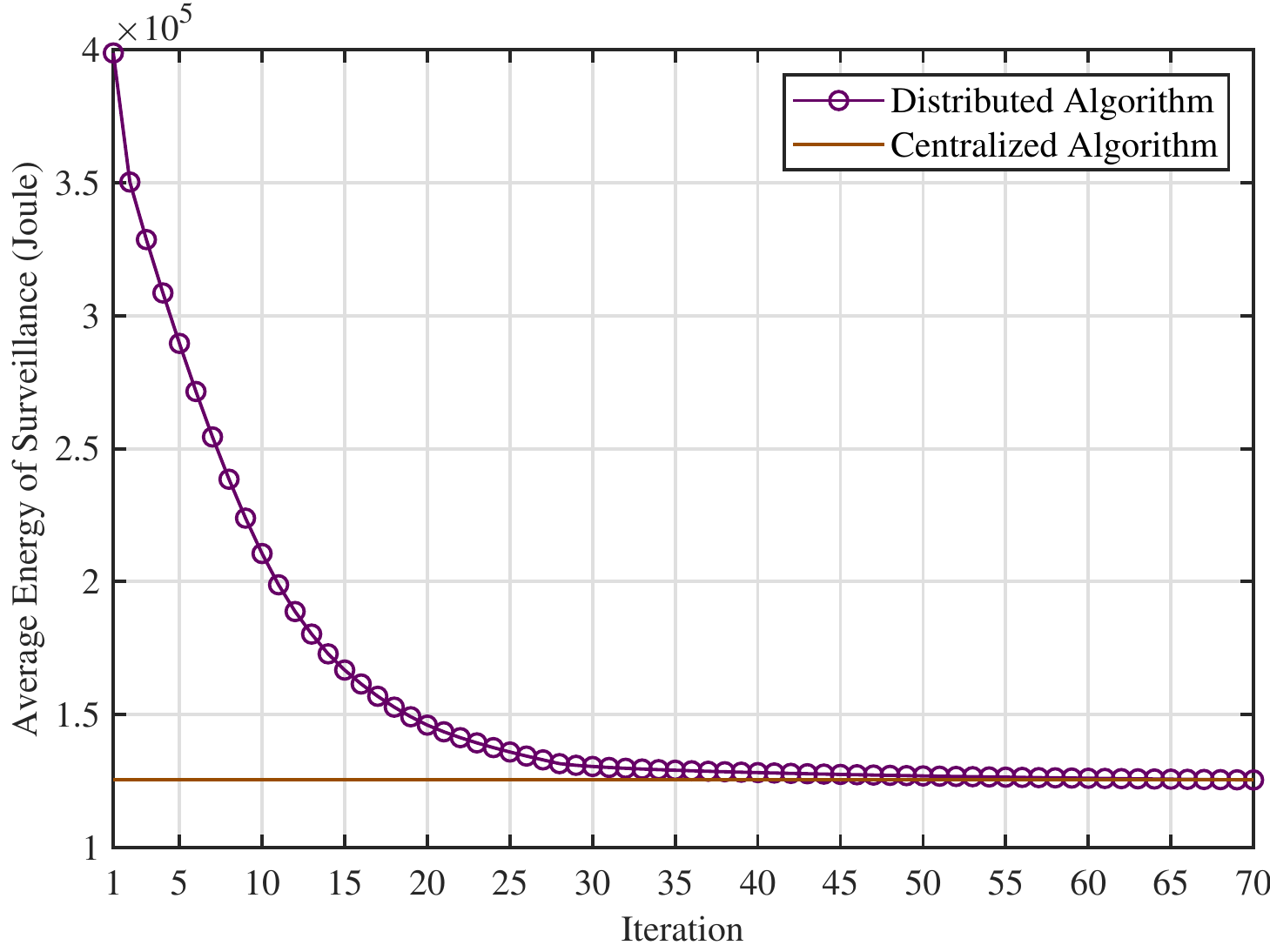}
	\caption{Convergence of the distributed algorithm considering the average cost of surveillance for $4$ UAVs.}
	\label{fig:sim7}
\endminipage\hfill
\minipage{0.32\textwidth}
  \vspace{3mm}
  	\includegraphics[width=\textwidth]{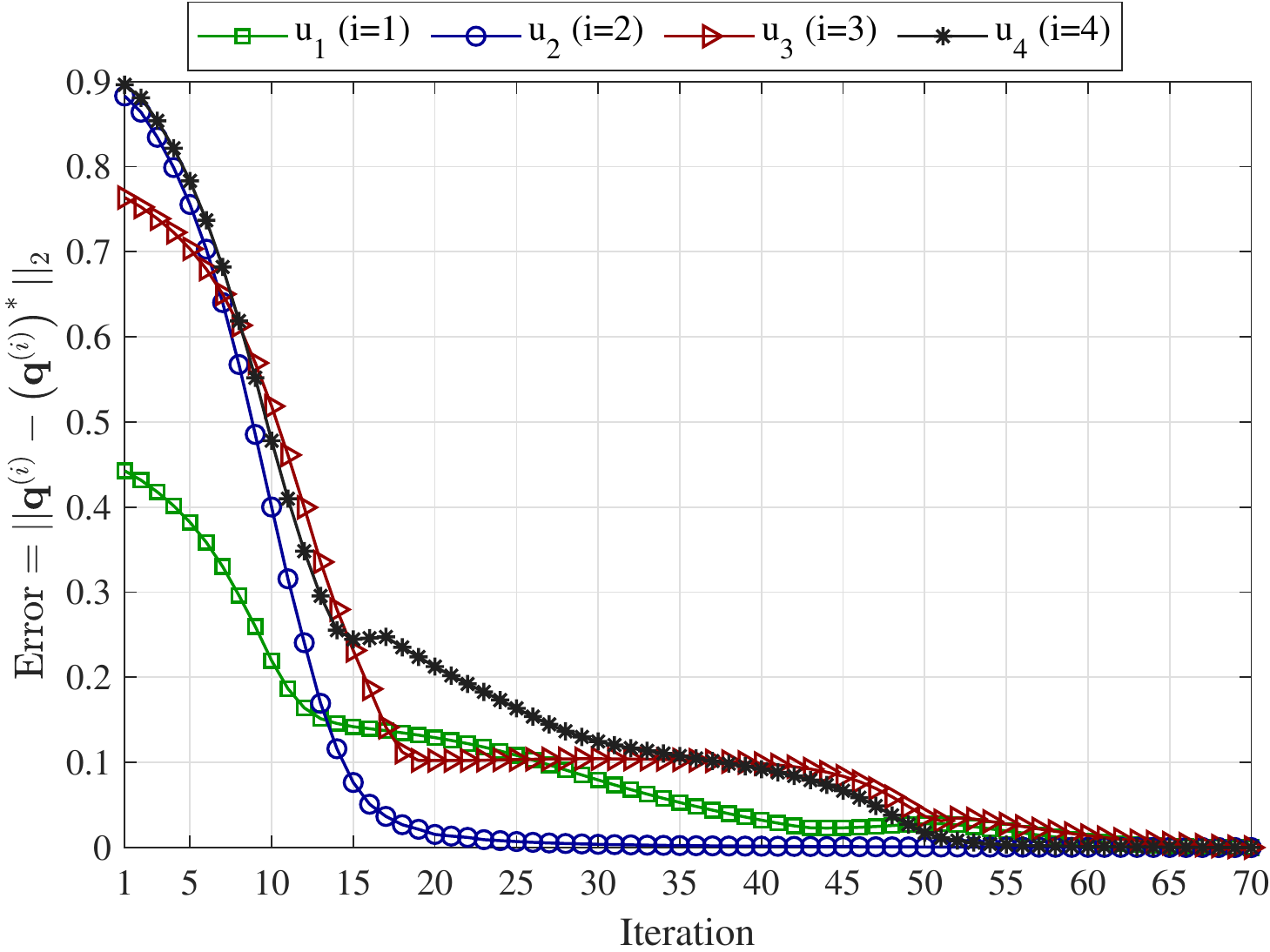}
	\caption{Error of convergence of the solution of the distributed algorithm $\mathbf{q}^{(i)}=[q^{(i)}]_{1\leq j,k\leq |\mathcal{V}|}$ to that of the centralized algorithm $\left(\mathbf{q}^{(i)}\right)^*$. }
	\label{fig:sim8}
\endminipage\hfill
\end{figure*}

% \vspace{-5mm}
  \subsection{Energy Efficiency}\label{subs:enery}
%   \vspace{-.5mm}
  In Fig.~\ref{fig:sim1}, the top plot depicts the value of the objective function of~\eqref{eq:originalFormualtion}, i.e., the long-term
average consumed energy during the surveillance, for different numbers of UAVs w.r.t. the iteration count; the bottom plot depicts the performance comparison between our method and the baseline methods. From the top plot, it can be seen that the objective function monotonically decreases through the series of monomial approximations upon convergence, thus verifying Proposition~\ref{propKKT2}. From the bottom plot, (on average) our method results in around $42\%$ energy saving as compared to the baseline methods. In Fig.~\ref{fig:sim2}, we generate similar plots to Fig.~\ref{fig:sim1} considering the
average consumed energy per surveillance cycle. Comparing the bottom plots of Figs.~\ref{fig:sim1},~\ref{fig:sim2}, the performance gap between our method and the baseline methods is even more prominent, especially with fewer number of UAVs, e.g., $|\mathcal{U}|=2$ and $|\mathcal{U}|=3$, upon considering the energy consumed per surveillance cycle. This is due to the underlying map partitioning approach utilized. In our solution, UAVs that need to reach the sites located far away from the rest are usually associated with a lower surveillance cycle duration, i.e., they visit fewer sites per surveillance cycle. For example, when $|\mathcal{U}|=2$ using the baseline methods (hard map partitioning), each UAV has to inspect $5$ nodes; however, this number can be different for different UAVs using our approach (see Section~\ref{sec:simC}). We break down the average consumed energy per surveillance cycle and depict the corresponding average consumed energy for movement and for inspection per surveillance in the top plot and the bottom plot of Fig.~\ref{fig:sim3}, respectively. A major performance gap can be seen in consumed energy for movement (top plot), which dominates the total energy consumption. Regarding the baseline methods, the distance-based partitioning of the map (MH\_DMP and MH\_DMP\_OI) yields better performance than random map partitioning. Also, the notable effect of optimized introspection policies can be seen from Fig.~\ref{fig:sim3} (bottom plot).
        % \vspace{-13mm}
  \subsection{Soft Map Partitioning and Redundant Inspections}\label{sec:simC}
  As explained earlier, our algorithm leads to \textit{soft map partitioning}. This implies that the map will be probabilistically split among the UAVs, where each UAV will be present at certain partitions of the map with high probability (equivalently, most of the time). This has two main positive effects: i) it decreases the battery consumption of the UAVs; ii) it decreases the chance of \textit{redundant inspections}. The latter phenomenon happens when two UAVs inspect the same site at the same time, which leads to the wast of resources. As an example, Fig.~\ref{fig:sim9} depicts the final solution of our algorithm for one network realization upon having $3$ UAVs. Note that i) the topology of the graph and the edge weights are omitted for better readability, and thus the respective positions of the nodes do not convey any physical information, ii) the presented result is specific and may vary from one parameter setting to another. In this figure, it is illustrated that how the satisfaction of the nodes inspection criteria is achieved. As an example, the value of the inspection criterion of node 6 along with the obtained stationary distributions of the movements of the UAVs and their inspection policies are shown for node 6. It can be seen that UAV 1 will rarely leave its base node, i.e., node 2 (due to its low battery capacity and the distance of node 2 to its adjacent nodes, which are omitted for readability); the rest of the map is probabilistically partitioned among the other two UAVs to achieve the lowest average energy of surveillance. To mathematically quantify the map partitioning, we use the \textit{Hellinger distance}, which for two discrete distributions {\small$\bm{p}=(p_1,p_2,\cdots,p_n)$} and {\small$\bm{q}=(q_1,q_2,\cdots,q_n)$} is defined as: {\small$ H(\bm{p},\bm{q})=\frac{1}{\sqrt{2}} \sqrt{\sum_{i=1}^{n}\left(\sqrt{p_i}-\sqrt{q_i}\right)^2 }$}, also related to the total variation distance (or statistical distance) between the two distributions. Fig.~\ref{fig:sim4} depicts the pairwise average Hellinger distance between the stationary distributions of the movement of the UAVs w.r.t. the iteration count of our centralized algorithm. Furthermore, the number of redundant inspections upon letting the UAVs move for $10000$ time instances are depicted in Fig.~\ref{fig:sim5}. These two figures demonstrate that, as the number of iterations increases, our algorithm moves toward increasing the statistical distance between the UAVs' movements stationary distributions, i.e., splitting the map, and decreasing the number of redundant inspections. Also, from Fig.~\ref{fig:sim5} it can be noted that, initially a larger number of UAVs leads to a larger number of redundant inspections; however, when the
iteration count increases,
a smaller number of redundant inspections are incurred eventually  due to a sharper
map partitioning among the UAVs. Finally, in the top plot of Fig.~\ref{fig:sim6} we depict the average percentage of used battery of the UAVs during a surveillance cycle w.r.t. the iteration count of the algorithm; in the bottom plot the corresponding comparison with the baseline methods is depicted. From the top plot, it can be observed that after the first iteration the used battery ratios  are around the upper bound set by the value of $\tilde{\theta}_i=0.7$, $\forall u_i\in\mathcal{U}$; however, as the iteration count increases, the usage of batteries significantly drops. This illustrates the importance of map partitioning in decreasing the energy consumption. From the bottom plot of Fig.~\ref{fig:sim6}, we can see that the best baseline method is MH\_DMP\_OI, the battery usage of which is (on average) $50\%$ higher than that of our method. In some cases (upon having $|\mathcal{U}|=2$ and $|\mathcal{U}|=3$) the baseline methods result in more than $100\%$ average percentage of used battery per surveillance cycle. This usually implies failure of the UAVs in the return to their bases upon using those baseline methods.

  \subsection{Convergence of the Consensus-based Algorithm}\label{dist:cons_conv}
  We study the convergence of our proposed distributed algorithm assuming $4$ UAVs in the network.
 Considering the average energy of surveillance, for one network realization, Fig.~\ref{fig:sim7} depicts the convergence of our distributed algorithm. Furthermore, we depict the corresponding error of convergence in Fig.~\ref{fig:sim8}. As can be seen, although the distributed algorithm may start from a different initial point (a different set of initial monomial approximations of the posynomials), it eventually converges to the solution of the centralized algorithm. 
 
  \subsection{Key Trade-offs}

UAV-assisted networks are associated with multiple trade-offs (see~\cite{8641423} for the trade-offs concerned with the physical layer communications aspects). Our results and formulation also reveal some  tradeoffs for stochastic UAV-assisted surveillance that suggest interesting directions for future work:
\begin{enumerate} [leftmargin=5.5mm]
  \item Reliability vs. cost: Considering reliability in terms of mission accomplishment by the UAVs without running out of the batteries, higher reliability is achieved via increasing the number of deployed UAVs. This is because upon  increasing  the  number  of  UAVs  with  dispersed base nodes, it is easier to satisfy the desired inspection criterion of the sites while using less  battery  from  each  UAV,  since  the  UAVs  will  mostly  hover  around  their  base  nodes. However, this is usually accompanied by a larger total energy consumption and increased network cost. 
     \item Redundancy vs. predictability: Our approach naturally favors reducing the redundancy since it aims to decrease the amount of overlaps between the UAVs inspections (i.e., it tries to avoid having multiple UAVs inspecting the same site at the same time) to decrease the total energy consumption during the surveillance. Nevertheless, decreasing the redundancy in turn increases the predictability for malicious users. This is because, in the low redundancy regime, if a malicious user can detect/observe one UAV in a particular region of a site, it can make sure that no other UAVs is inspecting other regions of the site. 
     \item Feasibility vs. tolerance: The feasibility of the problem is dependent on battery limitations of the UAVs and the tolerance variables $\hat{\theta}_j,~\forall v_j\in \mathcal{V}$ and $\tilde{\theta}_i,~\forall u_i\in\mathcal{U}$ in (1) and (2). More tolerance on violating the probabilistic constraints (1) and (2) can result in larger feasibility regions for the problem given UAVs' limited battery capacities, and thus existence of a  solution. Therefore, although the network operator may originally desire low tolerance on violation of (1) and (2), the feasibility of the problem should also be taken into consideration. Thus, given the consideration that conducting a surveillance under high tolerance against errors is usually better than conducting no surveillance, the network operator needs to tune the aforementioned two tolerance parameters to ensure the existence of a solution while having the smallest deviation from the original desired tolerance.
\end{enumerate}

 % In one of the most recent works \cite{zhang2019workflow}, the authors  proposed a random walk-based job prioritization scheme for cloud networks. 
 \section{Conclusion and Future Work}\label{sec:conc}
%   \vspace{-1mm}
  \noindent We proposed a novel framework for UAV-assisted surveillance utilizing random walks that inherently considers the battery constraints of the UAVs. We also introduced another degree of randomness to the system, which is the probabilistic inspection of the sites. We formulated the problem of jointly optimizing the random walk patterns and inspection policies of the UAVs, which turned out to be signomial programming. To tackle the problem, we proposed an iterative geometric programming approximation of the problem, and prove its optimally. We also took one step further and developed a distributed algorithm for the problem along with its performance guarantee. 
For the future work, formulating and investigating the problems explicitly considering the communications between the UAVs and terrestrial base nodes/stations is particularly interesting for  civil and commercial applications. In particular, the stochastic UAV-assisted inspection problem under the following two conditions can be investigated: (i) constant communication requirement between the UAVs to a specified set of  base nodes, and (ii) periodic communication and content delivery  from the UAVs to a set of trusted base nodes or cellular base stations. Both problems can be further investigated when the UAVs engage in a cooperative framework with data exchange among themselves over the air.
 \appendices
       \vspace{-4mm}
 \section{UAV Energy Consumption Models}\label{Sec:costModel}
  \vspace{-2mm}
\noindent We present compact energy models for the fixed and rotary wing UAVs, which are easy to use in practice. 
We use an energy model inspired by~\cite{zhangEnergy,zeng2019energyRotary,filippone2006flight}. Assume that at time $t=0$, a UAV starts traveling from site $v_j$ to $v_i$, which takes $T_{ji}\in \mathbb{R}^+$ amount of time. Let $y_{ji}$ denote the corresponding physical trajectory, where $y_{ji}(t)$ is the location of the UAV at time $t\in [0,T_{ji}]$.  For fixed-wing UAVs, the total propulsion energy is given by (see~\cite{zhangEnergy}, Appendix A):
\vspace{-1mm}

{\small
\begin{equation}\label{eq:energyModel1}
\begin{aligned}
     &w(v_j,v_i)=\int_{0}^{T_{ji}} \Bigg[c_1 \norm{\dot{y}(t)}^3\\
    &+\frac{c_2}{\norm{\dot{y}(t)}}\left(1+\frac{\norm{\ddot{y}(t)}^2-\frac{\left([\ddot{y}(t)]^{\top}\dot{y}(t)\right)^2}{\norm{\dot{y}(t)}^2}}{g^2}\right)\Bigg] dt \\
    &+ \frac{1}{2} m\left({\norm{\dot{y}(T_{ji})}}^2-\norm{\dot{y}(0)}^2\right),
    \end{aligned}
\end{equation}}
\vspace{-1mm}

\noindent where $\dot{y}(t)$ and $\ddot{y}(t)$ denote the velocity and acceleration vector, respectively, $g=9.8~m/s^2$, and $m$ is the mass of the UAV. Also, $c_1=\frac{1}{2}\rho C_{D0}S$ and $c_2=\frac{2W^2}{\pi e_0 A_R \rho S}$ are two constants, where $\rho$ is the air density in $kg/m^3$, $C_{D0}$ is the zero-lift drag coefficient of the UAV, $S$ is a reference area (e.g.,  the  wing  area), $W$ is the UAV weight in Newton, $e_0$ is  the  Oswald  efficiency (typically between $0.7$ and $0.85$), and $A_R$ is the aspect ratio of the wing, i.e., the ratio of the wing span to  its  aerodynamic  breadth.

For the rotary wing UAVs, in general the derivations are more complicated. Ignoring the acceleration of the UAV, the total propulsion energy is given by (see~\cite{zeng2019energyRotary}, Appendix):
\vspace{-1mm}

{\small
\begin{equation}
\begin{aligned}
    &w(v_j,v_i)=\int_{0}^{T_{ji}}\Bigg[P_0\left(1+\frac{3\norm{\dot{y}(t)}^2}{U^2_{tip}}\right)\\
    &+P_1\sqrt{\sqrt{1+\norm{\dot{y}(t)}^4/(4v^4_0)}-\norm{\dot{y}(t)}^2/(2v^2_0)}\\
    &+\frac{1}{2}d_0\rho s A \norm{\dot{y}(t)}^3 \Bigg]dt ,
    \end{aligned}
\end{equation}
}
\vspace{-1mm}

\noindent where the first, the second and the third terms inside the integral represent the blade profile power needed to overcome the profile drag of the blades, induced energy required to overcome the induced drag of the
blades, and parasite power needed to overcome the
fuselage drag, respectively. Also, $U_{tip}$ is the tip speed of the rotor balde, $v_0$ is the mean rotor induced velocity, $d_0$ is the fuselage drag ratio, $s$ is the rotor solidity, $\rho$ denotes the air density, and $A$ denotes rotor disc area in $m^2$. Furthermore, $P_0=\delta \rho s A \Omega^3 R^3/8$ and $P_1=(1+k)W^{3/2}/\sqrt{2\rho A}$, where $\delta$ is the profile drag coefficient, $\Omega$ is the blade angular velocity in $rad/sec$, $R$ is the rotor radius, $k$ is the incremental correction factor to induced power, and the rest of notations are similar to those in~\eqref{eq:energyModel1}. These physical layer expressions are used to derive the weight of the edges of the network. 
      \vspace{-2mm} 
 \section{Proof of Theorem~\ref{th:main}}\label{App:ProbtoTrac}
 \noindent Considering \eqref{eq:fo2}, to obtain a tractable expression,
define the Bernoulli random variable $\hat{I}^{(i)}_{j}\triangleq 1-I^{(i)}_{j}$. Considering the left hand side (l.h.s) of inequality \eqref{eq:fo2}, we get:
\vspace{-1mm}

{\small
  \begin{equation}
  \begin{aligned}
 &\textrm{Pr} \left( \sum_{i=1}^{|\mathcal{U}|} \gamma^{(i)}_j I^{(i)}_j \leq \pi_j\right)\\
 &=\textrm{Pr} \left( \sum_{i=1}^{|\mathcal{U}|} \gamma^{(i)}_j \hat{I}^{(i)}_{j} \geq  \sum_{i=1}^{|\mathcal{U}|} \gamma^{(i)}_j - \pi_j\right)\\
 &\leq \frac{E\left[\sum_{i=1}^{|\mathcal{U}|}\gamma^{(i)}_j \hat{I}^{(i)}_{j} \right]}{\sum_{i=1}^{|\mathcal{U}|}  \gamma^{(i)}_j - \pi_j}=\frac{\sum_{i=1}^{|\mathcal{U}|} \gamma^{(i)}_j\left(1-\iota^{(i)}_{j}\right)}{\sum_{i=1}^{|\mathcal{U}|}  \gamma^{(i)}_j - \pi_j},
 \end{aligned}
\end{equation}}
\vspace{-1mm}

\noindent where the inequality is the result of the {Markov inequality}. Note that $\sum_{i=1}^{|\mathcal{U}|} \gamma^{(i)}_j \geq \pi_j$, $\forall v_j\in \mathcal{V}$, is implicitly assumed, and in fact it will
be satisfied in the final solution; since otherwise the inspection criteria of the nodes cannot be satisfied even if all the UAVs turn on their sensing devises all the time. Using the above equation, the result of Theorem~\ref{th:main} can be obtained. 

        \begin{table*}[t]
\begin{minipage}{0.99\textwidth}
{\footnotesize{
      \begin{equation}\label{eq:longPrp1}
          \hspace{-5mm}
          \begin{aligned}
               \hat{h}_j(\bm{x}^{[m]})= \prod_{u_i\in \mathcal{U}} \prod_{v_k\in\mathcal{V}}  \left(\frac{\{q^{(i)}_{jk}\}^{[m]} \{\iota^{(i)}_{j}\}^{[m]} h_j(\bm{x}^{[m]})}{\{q^{(i)}_{jk}\}^{[m]}\{\iota^{(i)}_{j}\}^{[m]}}) \right)^\frac{\{q^{(i)}_{jk}\}^{[m]}\{\iota^{(i)}_{j}\}^{[m]}}{h_j(\bm{x}^{[m]})}= h_j(\bm{x}^{[m]})^\frac{\sum_{u_i\in \mathcal{U}} \sum_{v_k\in\mathcal{V}} \{q^{(i)}_{jk}\}^{[m]}\{\iota^{(i)}_{j}\}^{[m]}}{h_j(\bm{x}^{[m]})}={h}_j(\bm{x}^{[m]})
                    \end{aligned}
          \end{equation}
           \hrulefill   \begin{equation}\label{eq:longPrp2}
          \hspace{-4mm}
          \begin{aligned}
            &\frac{\partial \left(  \frac{j(\bm{x})}{\hat{g}(\bm{x})} \right)}{\partial x_i} \Bigg|_{\bm{x}=\bm{x}^{[m]}}\hspace{-8mm}=\frac{\frac{\partial j(\bm{x})}{\partial x_i}\hat{g}(\bm{x})\hspace{-.8mm} - \hspace{-.8mm} \frac{\partial \hat{g}(\bm{x})}{\partial x_i} j(\bm{x})}{\left(\hat{g}(\bm{x})\right)^2}\Bigg|_{\bm{x}=\bm{x}^{[m]}}\hspace{-7mm}\overset{g(\bm{x}^{[m]})= \hat{g}(\bm{x}^{[m]})}{=}\hspace{-0mm}\frac{\frac{\partial j(\bm{x})}{\partial x_i}g(\bm{x})\hspace{-.8mm} -\hspace{-.8mm} \frac{\hspace{-1.8mm}\partial\left(\displaystyle \prod_{k=1}^{K}\left( \frac{u_k(\bm{x})}{\alpha_k(\bm{x}^{[m]})}\right)^{\alpha_k(\bm{x}^{[m]})}\right)}{\partial x_i} j(\bm{x})}{\left(g(\bm{x})\right)^2}\Bigg|_{\bm{x}=\bm{x}^{[m]}}
            \\&
            =\frac{\frac{\partial j(\bm{x})}{\partial x_i}g(\bm{x}) - \frac{\displaystyle\sum_{n=1}^{K}\alpha_n(\bm{x}^{[m]})\frac{1}{\alpha_n(\bm{x}^{[m]})} \frac{\partial u_n(\bm{x})}{\partial x_i} \left( \frac{u_n(\bm{x})}{\alpha_n(\bm{x}^{[m]})}\right)^{\alpha_n(\bm{x}^{[m]})-1} \left(\displaystyle\prod_{k=1, k\neq n}^{K}\left( \frac{u_k(\bm{x})}{\alpha_k(\bm{x}^{[m]})}\right)^{\alpha_k(\bm{x}^{[m]})}\right)}{\partial x_i} j(\bm{x})}{\left(g(\bm{x})\right)^2}\Bigg|_{\bm{x}=\bm{x}^{[m]}}  \\[-.5em]&=  \frac{\frac{\partial j(\bm{x})}{\partial x_i}g(\bm{x}) - \frac{\displaystyle\sum_{n=1}^{K}\frac{\partial u_n(\bm{x})}{\partial x_i} g(\bm{x})^{\alpha_n(\bm{x})-1} \left(g(\bm{x})^{\sum_{k=1, k\neq n}^{K}\alpha_k(\bm{x}^{[m]})}\right)}{\partial x_i} j(\bm{x})}{\left(g(\bm{x})\right)^2}\Bigg|_{\bm{x}=\bm{x}^{[m]}}\hspace{-8mm}= \frac{\frac{\partial j(\bm{x})}{\partial x_i}g(\bm{x}) - \displaystyle\sum_{n=1}^{K}\frac{\partial u_n(\bm{x})}{\partial x_i} j(\bm{x})}{\left(g(\bm{x})\right)^2}\Bigg|_{\bm{x}=\bm{x}^{[m]}}\hspace{-8mm}=\frac{\partial \left(  \frac{j(\bm{x})}{g(\bm{x})} \right)}{\partial x_i}
                    \end{aligned}
                    \hspace{-6mm}
          \end{equation}
          }}
          \hrulefill
\end{minipage}
\vspace{-5mm}
\end{table*} 
  To derive a tractable expression for \eqref{eq:for4}, we use the following lemma and the result of the \textit{renewal reward theorem}.
  \vspace{-1mm}
  \begin{lemma}[\textbf{Mean return time}]\label{lemma:numberOfVisits}
Consider UAV $u_i$  with return time $T^{(i)}_{+}$. Given that the UAV starts the surveillance from its base node, i.e., $X_i(0)=v_{b^i}$, we have~\cite{aldous}:
 $E_i[T^{(i)}_{+}]={1}/{\gamma^{(i)}_{b^i}}.$
%   \begin{equation}
%       E_i(\textrm{number of visits of $v_j$ before time $T^{(i)}_{+}$} ) =  \frac{\gamma^{(i)}_j}{\gamma_{b^i}}.
%   \end{equation}
  \end{lemma}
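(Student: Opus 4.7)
The plan is to derive this as the classical ``mean return time equals reciprocal of stationary probability'' identity for ergodic Markov chains, by computing the long-run fraction of time spent at $v_{b^i}$ in two different ways and equating them. Since ergodicity of the chain $X_i(t)$ with stationary distribution $\bm{\gamma}_i$ has already been assumed in Section~\ref{sec:randomwalks}, all the ingredients are in place.

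The first computation uses the ergodic theorem for Markov chains. By ergodicity and the pointwise ergodic theorem applied to the indicator function of the singleton $\{v_{b^i}\}$, one has
\begin{equation}
\lim_{T\to\infty}\frac{1}{T}\sum_{t=1}^{T}\mathbf{1}\{X_i(t)=v_{b^i}\} \;=\; \gamma^{(i)}_{b^i}
\end{equation}
almost surely (and in expectation), regardless of the initial state; in particular this holds for $X_i(0)=v_{b^i}$.

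The second computation uses the renewal structure induced by the strong Markov property, exactly as foreshadowed in the text preceding Theorem~\ref{th:main}. Since the chain starts at $v_{b^i}$, successive returns to $v_{b^i}$ partition time into i.i.d.\ cycles whose common length has the distribution of $T^{(i)}_{+}$. Let $N(T)$ denote the number of visits to $v_{b^i}$ up to time $T$. By the strong law of large numbers for renewal processes (elementary renewal theorem),
\begin{equation}
\lim_{T\to\infty}\frac{N(T)}{T} \;=\; \frac{1}{E_i[T^{(i)}_{+}]}
\end{equation}
almost surely, provided the cycle length has finite mean. Finiteness of $E_i[T^{(i)}_{+}]$ follows automatically from positive recurrence, which is implied by the assumed existence of a stationary distribution on a finite state space with $\gamma^{(i)}_{b^i}>0$.

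Finally, since $\sum_{t=1}^{T}\mathbf{1}\{X_i(t)=v_{b^i}\}=N(T)$, the two limits above must coincide, which immediately yields $\gamma^{(i)}_{b^i}=1/E_i[T^{(i)}_{+}]$ and hence $E_i[T^{(i)}_{+}]=1/\gamma^{(i)}_{b^i}$, as desired. There is no real obstacle here beyond invoking these two standard results; the only point requiring a brief verification is the finiteness of $E_i[T^{(i)}_{+}]$, which is immediate from positive recurrence of the finite ergodic chain. Since the result is standard (see, e.g.,~\cite{aldous}), the proof in the paper can simply cite it, which is indeed what the statement of Lemma~\ref{lemma:numberOfVisits} already does.
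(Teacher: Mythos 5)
Your argument is correct. Note that the paper itself offers no proof of this lemma at all: it simply cites the standard reference~\cite{aldous}, since $E_i[T^{(i)}_{+}]=1/\gamma^{(i)}_{b^i}$ is Kac's mean-return-time formula for ergodic chains. Your derivation---computing the long-run occupation frequency of $v_{b^i}$ once via the ergodic theorem (giving $\gamma^{(i)}_{b^i}$) and once via the renewal SLLN applied to the i.i.d.\ return cycles (giving $1/E_i[T^{(i)}_{+}]$), then equating the two---is exactly the textbook proof of that formula, and your observation that finiteness of $E_i[T^{(i)}_{+}]$ follows from positive recurrence of a finite irreducible chain closes the only gap that needed attention. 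So the proposal is a correct, fully fleshed-out version of what the paper delegates to the citation; there is nothing to object to.
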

   \vspace{-1mm}
  %%%%%%%%%%%%% RENEWAL REWARD THEOREM
  \begin{definition}[\textbf{Reward process}]
  Consider a counting process $(N(t) : t \geq 0)$ associated with i.i.d. inter renewal times $(X_n : n \in \mathbb{N})$ having common
distribution $F$. At the end of the $n$-th renewal interval, a random reward $R_n$ is earned. Let $(X_n, R_n)$ be
i.i.d. with the reward $R_n$ possibly dependent on $X_n$. Then the reward process
$(R(t) : t \geq 0)$ consists of accumulated reward earned by time $t$ as $R(t) = \sum_{i=1}^{N(t)}  R_i$.
  \end{definition}
   \vspace{-1mm}
  \begin{theorem}[\textbf{Renewal reward theorem} \hspace{-0.1mm}\cite{doob1948renewal,cinlar1969markov}]\label{th:rewardThreom}
Let N(t) be a counting process associated with $(X_n,R_n)$, $n \geq 1$.
Assuming $r = E[R_1]< \infty$ and  $\tau= E[X_1] < \infty$, we have: 
\begin{equation}
    \lim_{ t \rightarrow \infty} \frac{E[R(t)]}{t}=\frac{r}{\tau}.
\end{equation}
  \end{theorem}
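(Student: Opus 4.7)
The plan is to establish the statement in two stages, first proving the almost-sure analogue and then upgrading it to convergence of expectations. The two classical ingredients are the strong law of large numbers (SLLN) applied to the iid rewards $(R_n)$, and the elementary renewal theorem $E[N(t)]/t \to 1/\tau$ (which itself is a consequence of Wald's identity applied to the iid inter-renewal times).

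First, I would handle the almost-sure version. Since $\tau = E[X_1] < \infty$ and the $X_n$ are iid, one has $N(t) \to \infty$ a.s. as $t \to \infty$ and, by SLLN applied to the partial sums $T_n = \sum_{i=1}^n X_i$ together with the sandwich $T_{N(t)} \leq t < T_{N(t)+1}$, $N(t)/t \to 1/\tau$ a.s. On the other hand, SLLN applied to the iid sequence $(R_n)$ gives $n^{-1}\sum_{i=1}^n R_i \to r$ a.s. Composing with the diverging integer random time $N(t)$ yields
\[
\frac{R(t)}{t} \;=\; \frac{1}{N(t)}\sum_{i=1}^{N(t)} R_i \cdot \frac{N(t)}{t} \;\longrightarrow\; \frac{r}{\tau} \quad \text{a.s.}
\]

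Second, I would promote this to the expectation statement by invoking Wald's identity. Because $(X_n, R_n)$ are iid pairs and $N(t)+1$ is a stopping time with respect to the natural filtration $\mathcal{F}_n = \sigma((X_k, R_k): k \leq n)$ with $E[N(t)+1] < \infty$, Wald gives
\[
E\!\left[\sum_{i=1}^{N(t)+1} R_i\right] \;=\; r \cdot E[N(t)+1].
\]
Dividing by $t$ and invoking the elementary renewal theorem $E[N(t)]/t \to 1/\tau$ shows that the left-hand side, divided by $t$, converges to $r/\tau$. It then remains to bound the discrepancy $|E[R(t)] - r\,E[N(t)+1]| = |E[R_{N(t)+1}]|$ and prove $E[R_{N(t)+1}]/t \to 0$.

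The main obstacle is precisely this overshoot step: $R_{N(t)+1}$ is the reward earned in the renewal interval straddling $t$, and by the classical inspection (size-biasing) paradox its distribution is tilted toward larger values, so it is not immediate under the bare hypothesis $E[R_1]<\infty$ that its expectation is negligible relative to $t$. The route I would take is a truncation argument, writing $R_n = R_n\mathbf{1}_{|R_n|\leq M} + R_n\mathbf{1}_{|R_n|>M}$ and using the bound $|R_{N(t)+1}| \leq \max_{i \leq N(t)+1}|R_i|$ together with the fact that $E[\max_{i\leq n}|R_i|] = o(n)$ whenever $E|R_1|<\infty$ (a standard consequence of dominated convergence applied to $n^{-1}\max_{i\leq n}|R_i|$). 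Combined with $E[N(t)+1] = O(t)$, this yields $E[R_{N(t)+1}]/t \to 0$, completing the proof.
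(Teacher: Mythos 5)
The paper does not actually prove this statement --- it is quoted as a classical result with citations to the renewal-theory literature --- so there is no in-paper argument to compare against. Your strategy is the standard textbook proof: the almost-sure version via the SLLN and the sandwich $T_{N(t)}\le t<T_{N(t)+1}$, and the expectation version via Wald's identity applied at the stopping time $N(t)+1$ together with the elementary renewal theorem. You correctly identify that the whole difficulty sits in showing $E[R_{N(t)+1}]/t\to 0$, and you correctly work with $N(t)+1$ rather than $N(t)$ (only the former is a stopping time for the filtration generated by the pairs).

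The one step that does not hold up as literally written is the last one. The fact $E[\max_{i\le n}|R_i|]=o(n)$ concerns a deterministic index $n$; you cannot substitute the random index $N(t)+1$ into that asymptotic and then invoke $E[N(t)+1]=O(t)$ --- that inference is not valid in general. The truncation you mention does close the gap, but it should be routed through Wald's identity rather than through the maximum: from $|R_{N(t)+1}|\le M+\sum_{i=1}^{N(t)+1}|R_i|\mathbf{1}_{\{|R_i|>M\}}$ and Wald's identity applied to the iid pairs $\bigl(X_i,\,|R_i|\mathbf{1}_{\{|R_i|>M\}}\bigr)$, one gets $E[|R_{N(t)+1}|]\le M+E\bigl[|R_1|\mathbf{1}_{\{|R_1|>M\}}\bigr]\,E[N(t)+1]$; dividing by $t$, letting $t\to\infty$ and then $M\to\infty$ gives $E[R_{N(t)+1}]/t\to 0$. (The classical alternative is to note that $g(t)=E[R_{N(t)+1}]$ solves the renewal equation $g=h+F*g$ with $h(t)=E[R_1\mathbf{1}_{\{X_1>t\}}]\to 0$, from which $g(t)/t\to 0$ follows.) With that repair your proof is complete; note also that the SLLN and the truncation both implicitly require $E|R_1|<\infty$ rather than merely $E[R_1]<\infty$, which is the intended reading of the hypothesis.
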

  Let us define $ E[M_i]\overset{\Delta}{=}E\left[\sum_{t=1}^{T^{(i)}_{+}}w(X_i(t),X_i(t+1)) \right]$ and $E[L_i]\overset{\Delta}{=} E\left[\sum_{t=1}^{T^{(i)}_{+}}\psi^{(i)}_{X_i(t)}I^{(i)}_{X_i(t)}\right]$. In other words, $E[M_i]$ and $E[L_i]$ refer to the \textit{expected value of the movement energy} and the \textit{expected value of the inspection energy} per surveillance cycle, respectively. In the following, we derive a closed-form expression for each of them in order.
  For UAV $u_i$, the long term average expected energy of movement is given by:
  \vspace{-1mm}
  
   {\footnotesize
        \begin{equation}\label{eq:midS1}
         \hspace{-19mm}  \lim_{T\longrightarrow \infty}\frac{1}{T} E\bigg[ \sum_{t=1}^{T} w(X_i(t),X_i(t+1)\bigg]\hspace{-1mm}=\hspace{-1mm} \sum_{v_j\in\mathcal{V}}\sum_{v_k\in\mathcal{V}} \gamma^{(i)}_j  p^{(i)}_{jk}w(v_j,v_k).\hspace{-12mm}
        \end{equation}
        }\vspace{-1mm}
        
        \noindent On the other hand, we can obtain the following expression:
        \vspace{-1mm}
        
        {\small
        \begin{equation}\label{eq:midS2}
           \hspace{-21mm}  \lim_{T\longrightarrow \infty} \frac{1}{T} E\bigg[ \sum_{t=1}^{T} w(X_i(t),X_i(t+1)\bigg]= \frac{E[M_i]}{E [{T^{(i)}_{+}}]}
            =\frac{E[M_i]}{1/\gamma^{(i)}_{b^i}},\hspace{-12mm}
        \end{equation}
        }\vspace{-1mm}
        
       \noindent where the first and the second equality are the result of Theorem~\ref{th:rewardThreom} and Lemma~\ref{lemma:numberOfVisits}, respectively.
        Comparing~\eqref{eq:midS1} with \eqref{eq:midS2}, we get:
        \begin{equation}\label{eq:EM}
          E[M_i]=\frac{1}{\gamma^{(i)}_{b^i}}\sum_{v_j} \gamma^{(i)}_j \sum_{v_k} p^{(i)}_{jk}w(v_j,v_k).
        \end{equation}

  Also, considering UAV $u_i$, the long term average expected energy of nodes inspections can be expressed as:
 \begin{equation}
      \lim_{T\longrightarrow \infty} \frac{1}{T} E\bigg[ \sum_{t=1}^{T}\psi^{(i)}_{X_i(t)}I^{(i)}_{X_i(t)}\bigg]={\gamma^{(i)}_j} \psi^{(i)}_{j} \iota^{(i)}_{j}.
 \end{equation}
Using Theorem~\ref{th:rewardThreom} and Lemma~Lemma~\ref{lemma:numberOfVisits}, we get:
 \begin{equation}
     \lim_{T\longrightarrow \infty} \frac{1}{T} E\bigg[ \sum_{t=1}^{T}\psi^{(i)}_{X_i(t)}I^{(i)}_{X_i(t)}\bigg]=\frac{E[L_i]}{E [{T^{(i)}_{+}}]}
            =\frac{E[L_i]}{1/\gamma^{(i)}_{b^i}},
 \end{equation}
Comparing the above two equations, we get:
\begin{equation}\label{eq:EL}
     E[L_i]=\sum_{v_j\in\mathcal{V} } \frac{\gamma^{(i)}_j}{\gamma^{(i)}_{b^i}} \psi^{(i)}_{j} \iota^{(i)}_{j}.
\end{equation}
% Lemma~\ref{lemma:numberOfVisits}
%  \subsubsection{The expected value of the inspection cost during a surveillance cycle} We write this quantity in terms of the expected number of visits of the sites during a surveillance cycle and the cost of inspections of the sites as follows: 
%         \begin{equation}\label{eq:EL}
%         \begin{aligned}
%     E[L_i]&= E\Big[\sum_{v_j\in \mathcal{V} } \psi^{(i)}_{j} I^{(i)}_{j} N_i\left(v_j,T^{(i)}_{+}\right) \Big] \\&=\sum_{v_j\in \mathcal{V} } \psi^{(i)}_{j} \iota^{(i)}_{j}  E_i\left[N_i\left(v_j,T^{(i)}_{+}\right)\right] \\ 
%   &=\sum_{v_j\in\mathcal{V} } \frac{\gamma^{(i)}_j}{\gamma^{(i)}_{b^i}} \psi^{(i)}_{j} \iota^{(i)}_{j},
%     \end{aligned}
%         \end{equation}
%   where $N_i(v_j,T^{(i)}_{+})$ is the number of times UAV $u_i$ enters site $v_j$ up to time $T^{(i)}_{+}$, the equality in the third line is due to the independence of two random variables, and the equality on the last line is the result of Lemma~\ref{lemma:numberOfVisits}. 
    Considering the l.h.s. of~\eqref{eq:for4}, using \textit{Markov inequality}, we get:
    \vspace{-1mm}
    \begin{equation}
          \begin{aligned}
     &\textrm{Pr}\bigg(\sum_{t=1}^{T^{(i)}_{+}}w(X_i(t),X_i(t+1)) +\\
     &\sum_{t=1}^{T^{(i)}_{+}}\psi^{(i)}_{X_i(t)}I^{(i)}_{X_i(t)} \geq \varphi_i \bigg)\leq \frac{E[M_i]+ E[L_i]}{\varphi_i},
    \end{aligned}
  \end{equation}
 After replacing the results of~\eqref{eq:EM} and~\eqref{eq:EL} in the above equation, the  result of Theorem~\ref{th:main} can be obtained.
 
%  sufficient condition to satisfy~\eqref{eq:for4} is given by:
% %  &(\mathbf{C1})~~ \sum_{u_i\in \mathcal{U}}\gamma^{(i)}_j \iota^{(i)}_{j}  \geq \pi_j, \forall v_j\in \mathcal{V},\\
%         \begin{equation}\label{eq:battreeee}
%               \sum_{v_j\in \mathcal{V}} \frac{\gamma^{(i)}_j}{\gamma^{(i)}_{b^i}} \sum_{v_k\in \mathcal{V}} p^{(i)}_{jk}w(v_j,v_k)+\sum_{v_j\in\mathcal{V} } \frac{\gamma^{(i)}_j}{\gamma^{(i)}_{b^i}} \psi^{(i)}_{j} \iota^{(i)}_{j} \leq \varphi_i \tilde{\theta}_i.\hspace{-1mm} 
%         \end{equation}

\vspace{-2mm}
\section{Proof of Proposition~\ref{propKKT2}}\label{app:1}
% \vspace{-2mm}
   \noindent  We first prove that algorithm~\ref{alg:cent} generates a sequence of improved
feasible solutions that converge to a point $\bm{x}^*$
satisfying the KKT conditions of~\eqref{eq:formulationAxiliary}. Note that ~\eqref{eq:formulationAxiliaryRelaxed} is in fact an \textit{inner approximation} of~\eqref{eq:formulationAxiliary}~\cite{GeneralInnerApp}. Hence, it is sufficient to prove the following three characteristics for~\eqref{eq:formulationAxiliaryRelaxed}~\cite{GeneralInnerApp}:

     \textbf{1) All the approximations conducted in~\eqref{eq:formulationAxiliaryRelaxed} should result in tightening the constraints in~\eqref{eq:formulationAxiliary}; mathematically, for every inequality in the original problem~\eqref{eq:formulationAxiliary} in the form of $v(\bm{x})\leq1$ and its approximated version $\hat{v}(\bm{x})\leq1$ in~\eqref{eq:formulationAxiliaryRelaxed}, we should have $v(\bm{x})\leq \hat{v}(\bm{x})$}. Considering $\mathbf{\tilde{C}1}$ and $\mathbf{\hat{C}1}$ as an example, we get:
     \vspace{-1mm}
     
          {\small
          \begin{equation}
          \hspace{-5mm}
          \begin{aligned}
                 &\hat{h}_j(\bm{x})\geq  {h}_j(\bm{x}) \Rightarrow  \frac{\hat{\theta}_j \pi_j+\left(1-\hat{\theta}_j\right)\displaystyle\sum_{u_i\in \mathcal{U}} \sum_{v_k\in\mathcal{V}} q^{(i)}_{jk}}{{h}_j(\bm{x})} \\&  \leq  \frac{\hat{\theta}_j \pi_j+\left(1-\hat{\theta}_j\right)\displaystyle\sum_{u_i\in \mathcal{U}} \sum_{v_k\in\mathcal{V}} q^{(i)}_{jk}}{\hat{h}_j(\bm{x})} .
                    \end{aligned}
          \end{equation}
          }
          \vspace{-1mm}
          
    \noindent A similar proof holds for the rest of the constraints.
         
        \textbf{2) The equality of the constraints in~\eqref{eq:formulationAxiliaryRelaxed} to the constraints in~\eqref{eq:formulationAxiliary} upon convergence; mathematically, for every inequality in the original problem~\eqref{eq:formulationAxiliary} in the form of $v(\bm{x})\leq1$ and its approximated version $\hat{v}(\bm{x})\leq1$ in~\eqref{eq:formulationAxiliaryRelaxed}, we should have $v(\bm{x}^{[m]})= \hat{v}(\bm{x}^{[m]})$}. As an example, we prove that this holds between $\mathbf{\tilde{C}1}$ and $\mathbf{\hat{C}1}$ in~\eqref{eq:longPrp1}. Note that since the numerators of the two constraints are the same, examining the equality of the denominators is sufficient.  The proof for the rest of the constraints is similar and omitted for brevity.

  \textbf{3) The KKT conditions of~\eqref{eq:formulationAxiliary} should be satisfied after
the series of approximations converges in~\eqref{eq:formulationAxiliaryRelaxed}; mathematically, for every inequality in the original problem~\eqref{eq:formulationAxiliary} in the form of $v(\bm{x})\leq1$ and its approximated version $\hat{v}(\bm{x})\leq1$ in~\eqref{eq:formulationAxiliaryRelaxed}, we should have $\triangledown  v(\bm{x}^{[m]})= \triangledown \hat{v}(\bm{x}^{[m]})$.} In~\eqref{eq:longPrp2}, we prove this for a general approximation of the ratio of two posynomials, where $\frac{j(\bm{x})}{g(\bm{x})}$ is approximated by $\frac{j(\bm{x})}{\hat{g}(\bm{x})}$, and $g$ and $\hat{g}$ have the format given in \eqref{eq:approxPosMon}. The proof for the rest of partial derivatives, and thus the gradient, is similar.

 The poof of the proposition is the direct consequence of combining the above result with Fact~\ref{fact:2} and Fact~\ref{fact:1}.
      \vspace{-2mm}
\bibliographystyle{IEEEtran}
\bibliography{RandomWalkUAV}
\vspace{-13mm}
\begin{IEEEbiography}[{\includegraphics[width=1.0in,height=1.15in,clip]{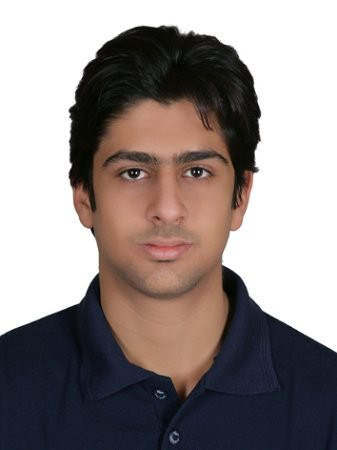}}]{Seyyedali Hosseinalipour} (S'18-M'20)
received B.S. degree from Amirkabir University of Technology in 2015, and M.S. and Ph.D. degree from NC State University in 2017 and 2020, respectively, all in electrical engineering. He received 2020 ECE doctoral scholar of the year award at NC State. He is currently a postdoctoral researcher at Purdue University. His research interests mainly include analysis of modern wireless networks and communication systems.
\end{IEEEbiography}
\vspace{-14mm}
\begin{IEEEbiography}[{\includegraphics[width=1.0in,height=1.15in,clip]{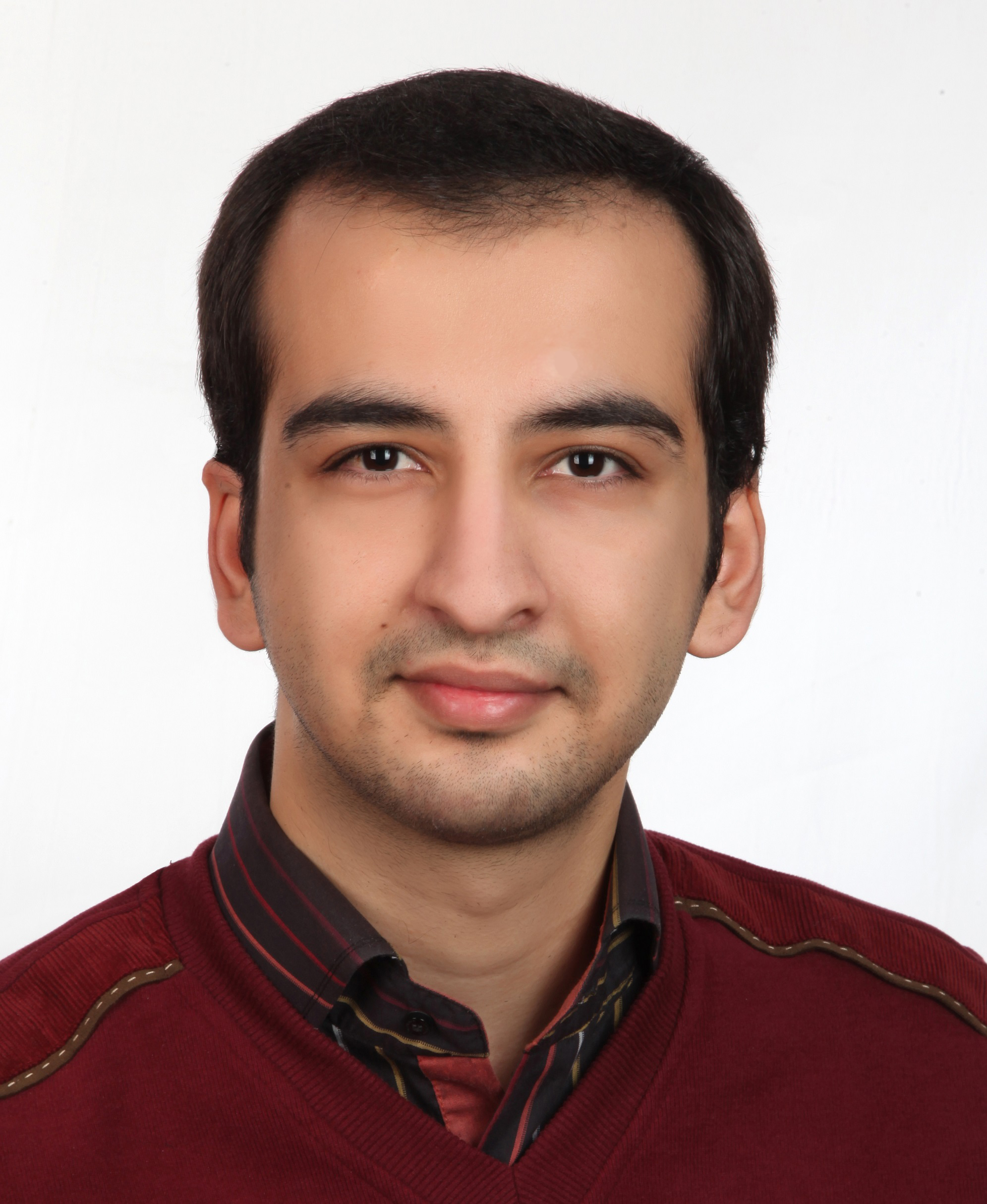}}]{Ali Rahmati} (S'15) received the B.Sc. degree in Electrical Engineering from the Ferdowsi University of Mashhad, Mashhad, Iran, and the M.S. degree in Electrical Engineering from University of Tehran, Tehran, Iran. He is currently pursuing the Ph.D. in the Department of Electrical and Computer Engineering, North Carolina State University, Raleigh, NC. His research interests mainly include applications of game theory, optimization and machine learning in wireless communication networks. 
\end{IEEEbiography}
	\vspace{-15mm}
\begin{IEEEbiography}[{\includegraphics[width=1.12in,height=1.25in,clip]{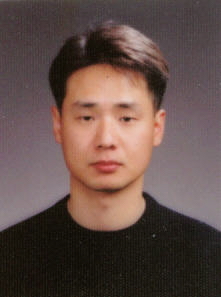}}]{Do Young Eun}  (M’03–SM’15) received his B.S. and M.S. degree in Electrical Engineering from Korea Advanced Institute of Science and Technology (KAIST), Taejon, Korea, in 1995 and 1997, respectively, and Ph.D. degree from Purdue University, West Lafayette, IN, in 2003. Since August 2003, he has been with the Department of Electrical and Computer Engineering at North Carolina State University, Raleigh, NC, where he is now a professor. His research interests include network modeling and performance analysis, mobile ad-hoc/sensor networks, mobility modeling, and randomized algorithms for large (social) networks. He has been a member of Technical Program Committee of various conferences including IEEE INFOCOM, ICC, Globecom, ACM MobiHoc, and ACM Sigmetrics. He is currently on the editorial board of IEEE/ACM Transactions on Networking and Computer Communications Journal, and was TPC co-chair of WASA'11. He received the Best Paper Awards in the IEEE ICCCN 2005, IEEE IPCCC 2006, and IEEE NetSciCom 2015, and the National Science Foundation CAREER Award 2006. He supervised and co-authored a paper that received the Best Student Paper Award in ACM MobiCom 2007.  
\end{IEEEbiography}
	\vspace{-12mm}
\begin{IEEEbiography}[{\includegraphics[width=1.15in,height=1.15in,clip]{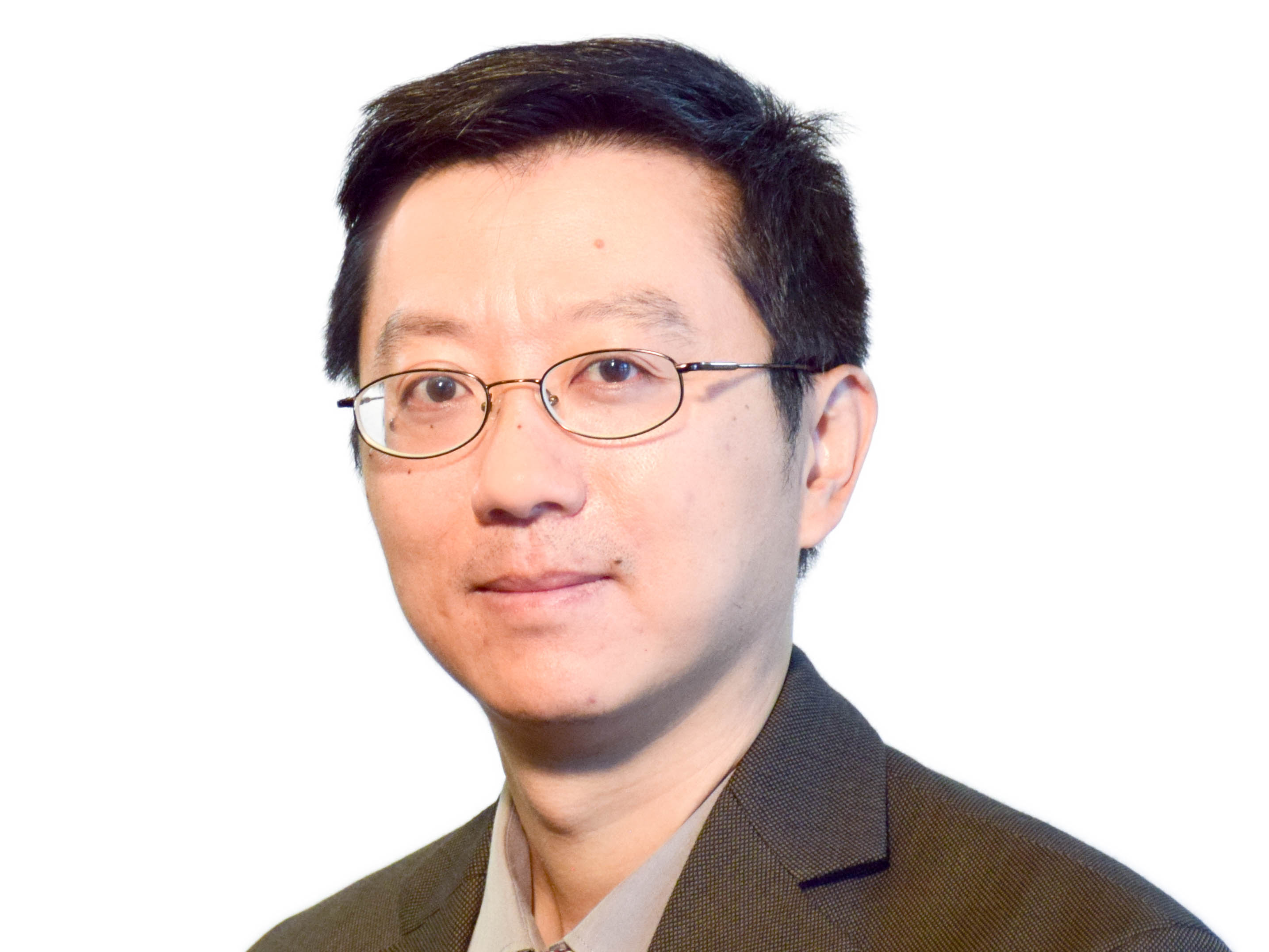}}]{Huaiyu Dai} (F’17)
  received the B.E. and M.S. degrees in electrical engineering from Tsinghua University, Beijing, China, in 1996 and 1998, respectively, and the Ph.D. degree in electrical engineering from Princeton University, Princeton, NJ in 2002. 

He was with Bell Labs, Lucent Technologies, Holmdel, NJ, in summer 2000, and with AT\&T Labs-Research, Middletown, NJ, in summer 2001. He is currently a Professor of Electrical and Computer Engineering with NC State University, Raleigh, holding the title of University Faculty Scholar. His research interests are in the general areas of communications, signal processing, networking, and computing. His current research focuses on machine learning and artificial intelligence for communications and networking, multilayer and interdependent networks, dynamic spectrum access and sharing, as well as security and privacy issues in the above systems.

He has served as an editor of IEEE Transactions on Communications, IEEE Transactions on Signal Processing, and IEEE Transactions on Wireless Communications. Currently he is an Area Editor in charge of wireless communications for IEEE Transactions on Communications, and a member of the Executive Editorial Committee for IEEE Transactions on Wireless Communications. He was a co-recipient of best paper awards at 2010 IEEE International Conference on Mobile Ad-hoc and Sensor Systems (MASS 2010), 2016 IEEE INFOCOM BIGSECURITY Workshop, and 2017 IEEE International Conference on Communications (ICC 2017).
\end{IEEEbiography}
    \end{document}